\tiny\color[gray]{0.3},
\def\defemb#1#2{\expandafter\def\csname #1\endcsname
{\relax\ifmmode #2\else\hbox{$#2$}\fi}}
\newcommand{\ul}[1]{\underline{#1}}
\newcommand{\AProVE}{{\sf AProVE}}
\newcommand{\Haskell}{{\sf Haskell}}
\newcommand{\muterm}{\mbox{\sc mu-term\/}}
\newcommand{\CONFident}{\textsf{CONFident\/}}
\newcommand{\infChecker}{\mbox{\sf infChecker}}
\newcommand{\AGES}{\mbox{\sf AGES}}
\newcommand{\MaceFour}{\mbox{\sf Mace4}}
\newcommand{\ProverNine}{\mbox{\sf Prover9}}
\newcommand{\TTTTwo}{\textsf{T\kern-0.15em\raisebox{-0.55ex}T\kern-0.15emT\kern-0.15em\raisebox{-0.55ex}2}}
\newcommand{\FORT}{\textsf{Fort}}
\newcommand{\ConfCSR}{\textsf{ConfCSR\/}}
\newcommand{\Hakusan}{\textsf{Hakusan\/}}
\newcommand{\var}{{\cV}ar} 
\newcommand{\Var}{{\cV}ar} 
\newcommand{\bigfracn}[3]{
\begin{array}[b]{c}
\displaystyle #1 \\\hline\displaystyle #2
\end{array}
\hbox to 0pt{\raisebox{0.7em}{{\tiny (#3)}}}
}
\newcommand{\NPos}[1]{\ol{\mathcal{ P}os^{#1}}} 
\newcommand{\toppos}{{\Lambda}}
\newcommand{\ol}[1]{\overline{#1}}  
\newcommand{\Pos}{{\mathcal{P}os}}
\newcommand{\rootTerm}{\mathit{root}}
\newcommand{\grounding}[1]{{{#1}^\downarrow}}
\newcommand{\groundingTermOn}[2]{{{#1}^{\downarrow_{#2}}}}
\newcommand{\level}{{\textit{lv}}}
\newcommand{\Symbols}{{\cF}}
\newcommand{\Variables}{{\cX}}
\newcommand{\TermsOn}[2]{{\cT(#1,#2)}}
\newcommand{\Terms}{{\TermsOn{\Symbols}{\Variables}}}
\newcommand{\dom}{{\cD}om}
\newcommand{\csr}{\text{CSR}}
\newcommand{\srs}{\text{SRS}}
\newcommand{\trs}{\text{TRS}}
\newcommand{\cstrs}{\text{CS-TRS}}
\newcommand{\csctrs}{\text{CS-CTRS}}
\newcommand{\ctrs}{\text{CTRS}}
\newcommand{\dctrs}{\text{DCTRS}}
\newcommand{\gtrs}{\text{GTRS}}
\newcommand{\Rmaps}[1]{{M_{#1}}}
\newcommand{\CRmaps}[1]{{\mbox{\it CM}_{#1}}}
\newcommand{\CnvRmaps}[1]{{\mbox{\it CnvM}_{#1}}}
\newcommand{\muCan}{{\mu^{can}_\cR}}
\newcommand{\muCanOf}[1]{{\mu^{can}_{#1}}}
\newcommand{\muCnv}{{\mu^{cnv}_\cR}}
\newcommand{\muCnvOf}[1]{{\mu^{cnv}_{#1}}}
\newcommand{\muTop}{{\mu_\top}}
\newcommand{\muBot}{{\mu_\bot}}
\newcommand{\NVar}[1]{\mathcal{V}ar^{\cancel{#1\,}}} 
\newcommand{\SPredicates}{{\Pi}}
\newcommand{\RuleHornClause}{\text{HC}}
\newcommand{\RuleReflexivity}{\text{Rf}}
\newcommand{\RuleCompatibility}{\text{Co}} 
\newcommand{\RulePropagation}{\text{Pr}} 
\newcommand{\combinationOfModules}[2]{{\fS{Comb}(#1,#2)}}
\newcommand{\disjointComb}[2]{{\fS{DisjU}(#1,#2)}}
\newcommand{\constructorSharingComb}[2]{{\fS{CShC}(#1,#2)}}
\newcommand{\composableComb}[2]{{\fS{CompC}(#1,#2)}}
\newcommand{\RuleJO}{\text{J}} 
\newcommand{\RuleOR}{\text{O}} 
\newcommand{\RuleSE}{\text{SE}} 
\newcommand{\Uconf}{{\cU_{\text{conf}}}}
\newcommand{\fcond}{\gamma}
\newcommand{\fseq}{\mathsf{F}}
\newcommand{\CRproblem}[1]{{\mathit{CR}(#1)}}
\newcommand{\ECRproblem}[1]{{\mathit{ECR}(#1)}}
\newcommand{\GCRproblem}[1]{{\mathit{GCR}(#1)}}
\newcommand{\WCRproblem}[1]{{\mathit{WCR}(#1)}}
\newcommand{\SCRproblem}[1]{{\mathit{SCR}(#1)}}
\newcommand{\JOproblem}[1]{{\mathit{JO}(#1)}}
\newcommand{\SJOproblem}[1]{{\mathit{SJO}(#1)}}
\newcommand{\yesTree}{{\sf yes}}
\newcommand{\noTree}{{\sf no}}
\newcommand{\ProcCanCSR}{{\Proc_{CanCR}}}
\newcommand{\ProcCanJoinability}{{\Proc_{CanJ}}}
\newcommand{\ProcCnvJoinability}{{\Proc_{CnvJ}}}
\newcommand{\ProcHuetExtended}{{\Proc_{\it HE}}}
\newcommand{\ProcOrthogonality}{{\Proc_{\it Orth}}}
\newcommand{\ProcConfluence}{{\Proc_{\it CR}}}
\newcommand{\ProcLocalConfluence}{{\Proc_{\it WCR}}}
\newcommand{\ProcStrongConfluence}{{\Proc_{\it SCR}}}
\newcommand{\ProcKnuthBendix}{{\Proc_{\it KB}}}
\newcommand{\ProcJoinability}{{\Proc_{\it JO}}}
\newcommand{\ProcModularDecomp}{{\Proc_{\it MD}}}
\newcommand{\ProcSimplify}{{\Proc_{\it Simp}}}
\newcommand{\ProcInlining}{{\Proc_{\it Inl}}}
\newcommand{\ProcExtraVars}{{\Proc_{\it EVar}}}
\newcommand{\ProcTrU}{{\Proc_\cU}}
\newcommand{\ProcTrUconf}{{\Proc_\Uconf}}
\newcommand{\GLtheory}{\mathsf{Th}}
\newcommand{\rewtheoryOf}[1]{{\ol{#1}}}
\newcommand{\deductionInThOf}[2]{{#1\vdash #2}}
\newcommand{\lhsr}{\ell}
\newcommand{\rhsr}{r}
\newcommand{\elhsr}{\lambda} 
\newcommand{\erhsr}{\rho}
\newcommand{\IF}{\Leftarrow}
\newcommand{\IFhorn}{\Leftarrow}
\newcommand{\gencond}{c}
\newcommand{\cto}{\approx}
\newcommand{\CP}{{\sf CP}}
\newcommand{\LHCP}{{\sf LHCP}}
\newcommand{\LHRV}{{\sf LHRV}}
\newcommand{\LH}[1]{{\sf LH}_{#1}}
\newcommand{\CCP}{{\sf CCP}}
\newcommand{\pCCP}{{\sf pCCP}}
\newcommand{\iCCP}{{\sf iCCP}}
\newcommand{\CVP}{{\sf CVP}}
\newcommand{\ECCP}{{\sf ECCP}}
\newcommand{\genrelation}{\mathrel{\mathsf{R}}}
\newcommand{\successorsBy}[1]{\mathsf{Suc}_{#1}}
\newcommand{\directSuccessorsBy}[1]{\mathsf{Suc}^=_{#1}}
\newcommand{\rew}[1]{\to_{#1}}
\newcommand{\rews}[1]{\to^*_{#1}}
\newcommand{\tos}[1]{\to^*_{#1}}
\newcommand{\leftrew}[1]{{{\:}_{#1}\!\!\leftarrow\:}}
\newcommand{\leftarrown}[1]{{{\:}^n\!\!\leftarrow}}
\newcommand{\leftarrowp}[1]{{{\:}^+\!\!\leftarrow}}
\newcommand{\leftarrows}[1]{{{\:}^*\!\!\leftarrow}}
\newcommand{\hto}{\hookrightarrow}
\newcommand{\csrew}[1]{\hto_{#1}}
\newcommand{\csrews}[1]{\hto^*_{#1}}
\newcommand{\DPHabbr}{\mathsf{h}}
\newcommand{\rewdph}[1]{\stackrel{\DPHabbr}{\to}}
\newcommand{\Proc}{\mathsf{P}}
\newcommand{\ignore}[1]{}
\newcommand{\ignoreproofs}[1]{}
\newcommand{\fS}[1]{\mathsf{#1}}
\tikzstyle{decision} = [diamond, draw, fill=yellow!20, text width=5em, text badly centered, minimum height=4em, inner sep=0pt, aspect=2]
\tikzstyle{block} = [rectangle, draw,fill=blue!20, text width=5em, text centered, minimum height=4em, rounded corners]
\tikzstyle{cloud} = [ellipse, draw,fill=red!20, text width=5em, text centered, minimum height=4em]
\tikzstyle{line} = [draw, -latex']
\tikzstyle{blockR} = [rectangle, draw, fill=red!20, text centered, minimum height=4em, rounded corners, minimum height=0.75cm]
\tikzstyle{blockB} = [rectangle, draw, fill=blue!20, text centered, minimum height=4em, rounded corners, minimum height=0.75cm]
\tikzstyle{blockG} = [rectangle, draw, fill=green!20, text centered, minimum height=4em, rounded corners, minimum height=0.75cm]
\tikzstyle{blockY} = [rectangle, draw, fill=yellow!20, text centered, minimum height=4em, rounded corners, minimum height=0.75cm]
\tikzstyle{blockW} = [rectangle, draw, fill=white!20, text centered, minimum height=4em, rounded corners, minimum height=0.75cm]
\tikzstyle{blockK} = [rectangle, draw, fill=gray!20, text centered, minimum height=4em, rounded corners, minimum height=0.75cm]
\tikzstyle{mblockB} = [rectangle, draw, fill=blue!20, text centered, minimum height=4em, double,rounded corners, minimum height=0.75cm]
\tikzstyle{mblockW} = [rectangle, draw, fill=white!20, text centered, minimum height=4em, double,rounded corners, minimum height=0.75cm]
\tikzstyle{mblockR} = [rectangle, draw, fill=red!20, text centered, minimum height=4em, double,rounded corners, minimum height=0.75cm]
\tikzstyle{mblockG} = [rectangle, draw, fill=green!20, text centered, minimum height=4em, double,rounded corners, minimum height=0.75cm]
\tikzstyle{circleY} = [circle, draw, fill=yellow!20, text centered, minimum height=4em, rounded corners, minimum height=0.75cm]
\tikzstyle{circleR} = [circle, draw, fill=red!20, text centered, minimum height=4em, rounded corners, minimum height=0.75cm]
\tikzstyle{circleG} = [circle, draw, fill=green!20, text centered, minimum height=4em, rounded corners, minimum height=0.75cm]
\tikzstyle{circleW} = [circle, draw, fill=white!20, text centered, minimum height=4em, rounded corners, minimum height=0.75cm]
\tikzstyle{triangleW} = [isosceles triangle, draw, fill=white!20, text centered, shape border rotate = 90, isosceles triangle stretches]
\tikzstyle{triangleG} = [isosceles triangle, draw, fill=green!20, text centered, shape border rotate = 90, isosceles triangle stretches]
\tikzstyle{triangleR} = [isosceles triangle, draw, fill=red!20, text centered, shape border rotate = 90, isosceles triangle stretches]
\begin{document}

\title{Proving Confluence in the Confluence Framework with \\ CONFident}

\author{Ra\'ul Guti\'errez
\\
DLSIIS, Universidad Polit\'ecnica de Madrid, Spain\\
r.gutierrez{@}upm.es
\and Salvador Lucas\thanks{Address for correspondence: Departamento de Sistemas Inform\'aticos y Computaci\'on.
Universitat Polit\`ecnica de Val\`encia, Camino de Vera s/n, 46022 Valencia, Spain.}
\\
DSIC \& VRAIN, Universitat Polit\`ecnica de Val\`encia, Spain\\
slucas{@}dsic.upv.es
\and Miguel V\'itores\\
VRAIN, Universitat Polit\`ecnica de Val\`encia, Spain\\
miguelvitoresv{@}gmail.com
}
\maketitle

\runninghead{R. Guti\'errez et al.}{Proving Confluence with CONFident}

\begin{abstract}
This article describes the \emph{confluence framework}, a novel framework for
proving and disproving confluence using a divide-and-conquer modular strategy,
and its implementation in \CONFident{}. Using this approach, we are able to
automatically prove and disprove confluence of \emph{Generalized Term Rewriting Systems},
where (i) only selected arguments of function symbols can be rewritten
and (ii) a rather general class of conditional rules can be used.
This includes, as particular cases, several variants of rewrite systems
such as (context-sensitive) \emph{term rewriting systems},
\emph{string rewriting systems}, and
(context-sensitive) \emph{conditional term rewriting
systems}.
The divide-and-conquer modular strategy allows us to combine  in a proof tree
different techniques
for proving confluence, including
modular decompositions,
checking joinability of (conditional) critical and variable pairs, transformations, etc., and auxiliary tasks
required by them, e.g., joinability of terms, joinability of conditional
pairs, etc.

\end{abstract}

\begin{keywords}
  confluence,  \and program analysis, \and rewriting.
\end{keywords}

\section{Introduction}

Reduction relations $\to$ are pervasive in computer science and semantics of programming languages as suitable means to describe computations $s\to^* t$, where $\to^*$ denotes zero or more steps issued with $\to$.
In general, $s$ and $t$  are abstract values (i.e., elements of an arbitrary set $A$),
but often denote program expressions: terms, lambda expressions, configurations in imperative programming, etc.
If $s\to^*t$ holds, we say that $s$ \emph{reduces} to $t$ or that $t$ is a \emph{reduct} of $s$.
Confluence is the property of reduction  relations guaranteeing that whenever
$s$ has two different reducts $t$ and $t'$ (i.e., $s \to^* t$ and $s \to^* t'$),
both $t$ and $t'$ are \emph{joinable}, i.e., they have a common reduct $u$
(hence $t \to^* u$ and $t' \to^* u$ holds for some $u$).
Confluence is one of the most important properties of reduction relations: for instance, 
(i) it ensures that  for all expressions $s$, \emph{at most} one irreducible reduct  $t$
of $s$ can be obtained (if any); and
(ii)  it ensures that two divergent computations can always join in the future. Thus,
the semantics and implementation of rewriting-based languages is 
less dependent on specific strategies to implement reductions.

In this paper, rewriting steps are specified by using \emph{Generalized Term Rewriting Systems},
\gtrs{s} $\cR=(\Symbols,\SPredicates,\mu,H,R)$ \cite{Lucas_LocalConferenceOfConditionalAndGeneralizedTermRewritingSystems_JLAMP24},
where
(i) $\Symbols$ is a signature of function symbols;
(ii) $\SPredicates$ is a signature of predicate symbols;
(iii) replacement restrictions on selected arguments $i$ of $k$-ary function symbols $f\in\Symbols$
can be specified by means of a \emph{replacement map} $\mu$ as in \emph{context-sensitive rewriting} \cite{Lucas_ContextSensitiveRewriting_CSUR20};
also,
(iv) \emph{atomic} conditions $A$ can be included in the conditional part $\gencond$ of
conditional rules $\lhsr\to\rhsr\IF\gencond$ in $R$, provided that
(v) predicates $P$ occurring in such atomic conditions $P(t_1,\ldots,t_n)$ for terms $t_1,\ldots,t_n$  are \emph{defined} by means of Horn clauses in $H$.

Since
Term Rewriting Systems (TRSs \cite{BaaNip_TermRewAllThat_1998}),
Conditional TRSs \cite{Kaplan_ConditionalRewriteSystems_TCS84},
Context-Sensitive TRSs\linebreak (CS-TRSs \cite{Lucas_ContextSensitiveRewriting_CSUR20}),
and
Context-Sensitive CTRSs \cite[Section 8.1]{Lucas_ApplicationsAndExtensionsOfContextSensitiveRewriting_JLAMP21}
are particular cases of \gtrs{s} (see~\cite[Section 7.3]{Lucas_LocalConferenceOfConditionalAndGeneralizedTermRewritingSystems_JLAMP24}), our results apply to all of them.

\begin{example}
\label{ExCOPS_387_CS_CTRS}
Consider the
CTRS $\cR$ in \cite[Example 10]{Gmeiner_TransformationalApproachesForConditionalTermRewriteSystems_PhD14}
(COPS/387.trs\footnote{\emph{Confluence Problems Data Base}: \url{http://cops.uibk.ac.at/}.}), over a
signature $\Symbols$ consisting of function symbols
$\fS{f}$, $\fS{g}$, and $\fS{s}$, with
$R$ consisting of the \emph{conditional rules}
\begin{eqnarray}
\fS{g}(\fS{s}(x)) & \to & \fS{g}(x)\label{ExCOPS_387_CS_CTRS_rule1}\\
\fS{f}(\fS{g}(x)) & \to & x \IF x \cto \fS{s}(\fS{0})\label{ExCOPS_387_CS_CTRS_rule2}
\end{eqnarray}
where predicate $\cto$ represents a \emph{reachability test} which is performed as part of the preparation of a rewriting
step using rule (\ref{ExCOPS_387_CS_CTRS_rule2}).
This qualifies $\cR$ as an \emph{oriented} CTRS, see, e.g., \cite[Definition 7.1.3]{Ohlebusch_AdvTopicsTermRew_2002}.
This system is \emph{not} confluent, as we have the following (local) \emph{peak}\footnote{We use $\underrightarrow{\text{under}}$ and $\overleftarrow{\text{over}}$ arrows to highlight which parts of
the expression are rewritten and how.}:
\begin{eqnarray}
\fS{f}(\fS{g}(\fS{0}))\leftrew{(\ref{ExCOPS_387_CS_CTRS_rule1})}\underrightarrow{\fS{f}(\overleftarrow{\fS{g}(\fS{s}(\fS{0}))})}\rew{(\ref{ExCOPS_387_CS_CTRS_rule2})}\fS{s}(\fS{0})
\label{ExCOPS_387_CS_CTRS_peakOCTRS}
\end{eqnarray}
but $\fS{s}(\fS{0}))$ and $\fS{f}(\fS{g}(\fS{0}))$ are \emph{irreducible} (no rule applies on them)
and hence \emph{not} joinable.
Now consider the \emph{replacement map} $\muBot(f)=\emptyset$ for all symbols $f$, which forbids reductions on \emph{all} arguments of function symbols. Then, we obtain a CS-CTRSs $\cR_\bot$,
where (\ref{ExCOPS_387_CS_CTRS_peakOCTRS}) is \emph{not} possible as the leftmost reduction
is forbidden due to $\muBot(\fS{f})=\emptyset$, which disables the rewriting step on $\fS{g}(\fS{s}(\fS{0}))$.
 Using our framework we are able  to prove that $\cR$ \emph{is not confluent}, and also that $\cR_\bot$ \emph{is confluent}.
 \end{example}

Confluence has been investigated for several reduction-based formalisms and systems, see, e.g.,
\cite[Chapters 4, 7.3, 7.4 \& 8]{Ohlebusch_AdvTopicsTermRew_2002} and the references therein.
Confluence is \emph{undecidable} already for TRSs, see, e.g., \cite[Section 4.1]{Ohlebusch_AdvTopicsTermRew_2002}.
Since TRSs are \gtrs{s}, it is undecidable for \gtrs{s} too.
Thus, 
no algorithm is able to prove or disprove confluence
of the reduction relation associated to all such systems.
Hence, existing techniques for proving and disproving confluence are
\emph{partial}, i.e., they succeed on some kinds of systems and fail on others.
However, the combination of techniques in a certain order or the use of
auxiliary properties can help to prove or disprove confluence.

In this paper,
we introduce
a \emph{confluence framework} for proving confluence of \gtrs{s},
inspired by the \emph{Dependency Pair Framework}, originally
developed for proving (innermost) termination of
TRSs~\cite{GieThiSch_DPFramework_LPAR04,GieThiSchFal_MechImpDPs_JAR06}
and then generalized and extended to cope with other kind of termination
problems
\cite{BlaGenHer_DependencyPairsTerminationInDependentTypeTheoryModuloRewriting_FSCD19,%
FalKap_DependencyPairsForRewritingWithNonFreeConstructors_CADE07,%
FalKap_DPsForRewritingWithBuiltInNumbersAndSemanticDataStructures_RTA08,%
GutLuc_ProvingTerminationInTheContextSensitiveDependencyPairFramework_WRLA10,
LucMesGut_The2DDependencyPairFrameworkForConditionalRewriteSystemsPartI_JCSS18},
including termination of \gtrs{s}
\cite{Lucas_TerminationOfGeneralizedTermRewritingSystems_FSCD24},
and also to other properties like \emph{feasibility}, which tries to find a substitution
satisfying a combination of atoms with respect to a first-order theory~\cite{GutLuc_AutomaticallyProvingAndDisprovingFeasibilityConditions_IJCAR20}.

In the confluence framework, we define two kinds of problems:
three variants of \emph{confluence
problems} and
two variants of \emph{joinability problems}. \emph{Confluence problems}
encapsulate the system $\cR$ whose confluence is tested.
Such confluence problems are transformed, decomposed, simplified,
etc., into
other (possibly different) problems by using the so-called \emph{processors}
that can be plugged in and out in a proof strategy, allowing us
to find the best place to apply a proof technique in practice.
Processors embed existing results about confluence of (variants of) rewrite systems,
confluence-preserving transformations, etc.
Besides, \emph{joinability problems} are
produced by some processors acting on confluence problems.
They are used to prove or disprove the joinability of conditional pairs (including conditional critical pairs \cite[Definition 3.2]{Kaplan_FairConditionalTermRewritingSystemsUnificationTerminationAndConfluence_ADT84}
and
conditional variable pairs \cite[Definition 59]{Lucas_LocalConferenceOfConditionalAndGeneralizedTermRewritingSystems_JLAMP24}).
They are also treated by appropriate processors.
The obtained proof is depicted as a labeled \emph{proof tree} from which the (non-)confluence
of the targetted rewrite system can be proved.
Processors apply on the obtained problems until
(i) a trivial problem is obtained (which is then labeled with \yesTree) and the proof either
continues by considering pending problems, or else \emph{finishes} and \yesTree{} is returned if no problem remains to be solved; or
(ii) a counterexample is obtained and the problem is then labeled with \noTree{} and the proof \emph{finishes} as well but \noTree{} is returned; or
(iii) the successive application of all available processors finishes  \emph{unsuccessfully} and then
the whole proof finishes unsuccessfully (and `MAYBE' is returned);
or
(iv) the ongoing proof is eventually interrupted
due to a \emph{timeout}, which is usually prescribed in this kind of proof processes whose
termination is not guaranteed or could take too much time, and the whole proof fails.
The use of processors often requires calls to external tools to solve proof obligations like \emph{termination},
\emph{feasibility},
\emph{theorem proving},
etc.

\medskip
This paper is an extended and revised version of ~\cite{GutVitLuc_ConfluenceFrameworkProvingConfluenceWithCONFident_LOPSTR22}.
The main differences are:
\begin{enumerate}
\item The confluence framework has been extended to cope with Generalized Term Rewriting Systems, thus extending the scope of \cite{GutVitLuc_ConfluenceFrameworkProvingConfluenceWithCONFident_LOPSTR22}, where only \trs{s}, \cstrs{s}, and \ctrs{s} were treated.
In particular, we can treat \csctrs{s} now as particular cases of \gtrs{s}.
\item In \cite{GutVitLuc_ConfluenceFrameworkProvingConfluenceWithCONFident_LOPSTR22}
only \emph{confluence}
and \emph{joinability} problems
were considered.
Additional related problems are considered now:
\emph{local} and \emph{strong} confluence problems,
and \emph{strong joinability} problems.
They permit a better organization of confluence proofs.
Of course, they also permit the use of the framework for (dis)proving local confluence and strong confluence of \gtrs{s}.
\item 16 processors applicable to \gtrs{s} are described in this paper  (versus 10 in \cite{GutVitLuc_ConfluenceFrameworkProvingConfluenceWithCONFident_LOPSTR22}).
They apply on (local, strong) confluence problems for \gtrs{s}.
\item Details about the \emph{implementation} of the confluence framework in \CONFident{} 
are given now, including a more precise description of \CONFident{} proof strategy and its implementation.
\item Updated information about the participation of \CONFident{} in the 2023 International Confluence Competition, CoCo 2023, is provided.
\end{enumerate}

\medskip
\noindent
After some preliminaries in Section \ref{SecPreliminaries},  Section
\ref{SecGeneralizedTermRewritingSystems} describes Generalized Term Rewriting Systems.
Section~\ref{SecConfFramework} defines the
problems and processors used in the confluence framework.
Section
\ref{SecListOfProcessors} gives a list of processors that can be used in the framework.
Section~\ref{SecStrategy} presents the proof strategy of \CONFident{}.
Section~\ref{SecStructureCONFident} provides some details about the general implementation of \CONFident.
Section~\ref{SecExperimentalResults}
provides an experimental evaluation of the  tool, including an analysis of the use of processors in proofs of (non-)confluence.
Section~\ref{SecRelatedWork} discusses related work.
Section~\ref{SecConclusions} concludes.

\section{Preliminaries}\label{SecPreliminaries}

In the following, \emph{w.r.t.} means \emph{with respect to} and \emph{iff} means \emph{if and only if}.
We assume some familiarity with the basic notions of term rewriting \cite{BaaNip_TermRewAllThat_1998,Ohlebusch_AdvTopicsTermRew_2002,Terese_TermRewritingSystems_2003}
and first-order logic \cite{Fitting_FirstOrderLogicAndAutomatedTheoremProving_1997,Mendelson_IntroductionToMathematicalLogicFourtEd_1997}, where missing definitions can be found.
For the sake of readability, though, here we summarize the main notions and notations we use.

\paragraph{Abstract Reduction Relations.}
Given a binary relation $\genrelation\:\subseteq A\times A$ on a set $A$,
we often write $a\:\genrelation\:b$ or $b\:\genrelation^{-1}a$
instead of $(a,b)\in\:\genrelation$.
The \emph{reflexive} closure of $\genrelation$ is denoted by $\genrelation^=$;
the \emph{transitive} closure of $\genrelation$ is denoted by
$\genrelation^+$;
and the \emph{reflexive and transitive} closure by $\genrelation^*$.
Given $a\in A$, and a relation $\genrelation$, let $\successorsBy{\genrelation}(a)=\{b\mid a\:\genrelation^* b\}$ be the set of \emph{$\genrelation$-successors} of $a$ (and then adding $a$ itself) and $\directSuccessorsBy{\genrelation}(a)= \{b\mid a\:\genrelation^=b\}$ be the set of
\emph{direct $\genrelation$-successors} of $a$ (and also adding $a$).
An element $a\in A$ is \emph{irreducible},
if there is no $b$ such that $a\:\genrelation\:b$;
we say that $b$ is an $\genrelation$-normal form of $a$ (written $a\:\genrelation^!\:b$),
if  $a~\genrelation^*b$ and $b$ is an $\genrelation$-normal form.
We say that $b\in A$ is $\genrelation$-reachable from $a\in A$ if $a~\genrelation^*b$.
We say that $a,b\in A$ are \emph{$\genrelation$-joinable}
if there is $c\in A$ such that  $a~\genrelation^*c$ and $b~\genrelation^*c$.
We say that $a,b\in A$ are \emph{strongly $\genrelation$-joinable}
if there are $c,c'\in A$ such that  $a~\genrelation^=c$, $b~\genrelation^*c$,
$a~\genrelation^*c'$, and $b~\genrelation^=c'$.
Also, $a,b\in A$ are  \emph{$\genrelation$-convertible}
if there is $c\in A$ such that  $a~(\genrelation\cup\genrelation^{-1})^*b$.
Given $a\in A$, if there is no infinite sequence $a=a_1~\genrelation~a_2~\genrelation~\cdots~\genrelation~a_n~\genrelation\cdots$, then $a$ is $\genrelation$-\emph{terminating};
$\genrelation$ is \emph{terminating} if $a$ is $\genrelation$-terminating for all $a\in A$.
We say that $\genrelation$ is (locally) {\em confluent} if, for every $a,b,c\in A$, whenever $a~\genrelation^*b$ and $a~\genrelation^*c$ (resp.\ $a~\genrelation\: b$ and
$a~\genrelation\: c$), $b$ and $c$ are $\genrelation$-joinable.
Also, $\genrelation$ is {\em strongly confluent} if, for every $a,b,c\in A$, whenever $a\:\genrelation\:b$ and $a\:\genrelation\:c$, $b$ and $c$ are strongly $\genrelation$-joinable.
\paragraph{Signatures, Terms, Positions.}

In this paper, $\Variables$ denotes a
countable set of \emph{variables}.
A \emph{signature of symbols} is a set of \emph{symbols}
each with a fixed \emph{arity}.
We use $\Symbols$ to denote
a \emph{signature of function symbols} $f, g, \ldots,$
whose arity is given by a
mapping $ar:\Symbols\rightarrow \mathbb{N}$.
The set of
terms built from $\Symbols$ and $\Variables$ is $\Terms$.
The set of variables occurring in $t$ is $\Var(t)$;
we often use $\Var(t,t',\ldots)$ to denote the set of variables occurring in a sequence
of terms.
Terms are viewed as labeled trees in the usual way.
\emph{Positions} $p$
are represented by chains of positive natural numbers used to address subterms $t|_p$
of $t$.
The \emph{set of positions} of a term $t$ is $\Pos(t)$.
The set of positions of a subterm $s$ in $t$ is denoted $\Pos_s(t)$.
The set of positions of non-variable symbols in $t$ are denoted as $\Pos_\Symbols(t)$.

\paragraph{Replacement Maps.}

Given a signature $\Symbols$,
a \emph{replacement map} is a mapping
$\mu$
satisfying that, for all  symbols $f$ in $\Symbols$, $\mu(f)\subseteq \{1,\ldots,ar(f)\}$
 \cite{Lucas_ContextSensitiveRewriting_CSUR20}.
The set of  replacement maps for the signature $\Symbols$ is $\Rmaps{\Symbols}$.
Extreme cases
are $\muBot$,
 disallowing replacements in all arguments of function symbols:
$\muBot(f)=\emptyset$ for all $f\in\Symbols$;
and $\muTop$, restricting no replacement: $\muTop(f)=\{1,\ldots,k\}$ for all $k$-ary $f\in\Symbols$.
The
set $\Pos^\mu(t)$ of {\em $\mu$-replacing (or \emph{active}) positions}  of $t$ is
$\Pos^\mu(t)=\{\toppos\}$, if $t\in\Variables$, and
$\Pos^\mu(t)=\{\toppos\}\cup\{i.p\mid i\in\mu(f), p\in\Pos^\mu(t_i)\}$, if
$t=f(t_1,\ldots,t_k)$.
The set of {\em non-$\mu$-replacing} (or \emph{frozen}) positions of $t$ is $\ol{\Pos^\mu}(t)=\Pos(t)-
\Pos^\mu(t)$.
Positions of \emph{active} non-variable symbols in $t$ are denoted as $\Pos^\mu_\Symbols(t)$.
Given a term $t$, $\Var^\mu(t)$ (resp.\ $\NVar{\mu}(t)$)
is the set of variables occurring at active (resp.\ frozen)
positions
in $t$: $\Var^\mu(t)=\{x\in\Var(t)\mid\exists p\in\Pos^\mu(t),x=t|_p\}$
and
 $\NVar{\mu}(t)=\{x\in\Var(t)\mid\exists p\in\NPos{\mu}(t),x=t|_p\}$.
 In general, $\Var^\mu(t)$ and $\NVar{\mu}(t)$  are not disjoint: $x\in\Var(t)$ may occur active and also
 frozen in $t$.

\paragraph{Unification.}
A renaming $\rho$ is a bijection from $\Variables$ to $\Variables$.
A substitution $\sigma$ is  a mapping
$\sigma:\Variables\to\Terms$
from variables
into terms which is homomorphically extended to a mapping (also
denoted $\sigma$) $\sigma:\Terms\to\Terms$.
It is standard to assume that substitutions $\sigma$
satisfy $\sigma(x)=x$ except for a
\emph{finite} set
of variables.
Thus, we often write $\sigma=\{x_1\mapsto t_1,\ldots,x_n\mapsto t_n\}$ where $t_i\neq x_i$ for $1\leq i\leq n$
to denote a substitution.
Terms $s$ and $t$ \emph{unify} if
there is a substitution $\sigma$ (i.e., a \emph{unifier}) such that $\sigma(s)=\sigma(t)$. If $s$ and $t$ unify, then there is a (unique, up to renaming) \emph{most general unifier} (\emph{mgu})
$\theta$ of $s$ and $t$ satisfying that, for any other unifier $\sigma$ of $s$ and $t$, there is a
substitution $\tau$ such that, for all $x\in\Variables$, $\sigma(x)=\tau(\theta(x))$.

\paragraph{First-Order Logic.}

Here, $\SPredicates$ denotes a signature of \emph{predicate symbols}.
Atoms and
first-order formulas
are built using such
function and predicate symbols, variables in $\Variables$,
quantifiers $\forall$ and $\exists$ and
logical connectives for conjunction $(\wedge)$,
disjunction $(\vee)$,
negation $(\neg)$,
and implication $(\Rightarrow)$,
in the usual way.
A first-order theory (FO-theory for short) $\GLtheory$ is a set of sentences
(formulas whose variables are all \emph{quantified}).
In the following, given an FO-theory
$\GLtheory$ and a formula $\varphi$,
$\GLtheory\vdash\varphi$ means that $\varphi$ is \emph{deducible} from
(or a \emph{logical consequence} of) $\GLtheory$ by using
a correct and complete deduction procedure.

\paragraph{Feasibility Sequences.}

An
\emph{f-condition} $\fcond$
is an atom
 \cite{GutLuc_AutomaticallyProvingAndDisprovingFeasibilityConditions_IJCAR20}.
Sequences
$\fseq=(\fcond_i)_{i=1}^n=(\fcond_1,\ldots,\fcond_n)$
of f-conditions
are called \emph{f-sequences}.
We often drop `f-' when no confusion arises.
Given an FO-theory $\GLtheory$, a condition $\fcond$ is $\GLtheory$-\emph{feasible}
(or just \emph{feasible} if no confusion arises)
if $\deductionInThOf{\GLtheory}{\sigma(\fcond)}$
holds for some  substitution $\sigma$;
otherwise, it is \emph{infeasible}.
A sequence $\fseq$ is $\GLtheory$-feasible (or just \emph{feasible})  if there is a substitution $\sigma$ satisfying all conditions in the sequence, i.e.,
for all $\fcond\in\fseq$, $\deductionInThOf{\GLtheory}{\sigma(\fcond)}$
holds.

\paragraph{Grounding variables.}
Let $\Symbols$ be a signature and $\Variables$ be a set of variables such that $\Symbols\cap\Variables=\emptyset$.
Let $\Symbols_\Variables=\Symbols\cup C_\Variables$ where variables
$x\in\Variables$ are considered as (different) \emph{constant} symbols $c_x$ of
$C_\Variables=\{c_x\mid x\in\Variables\}$ and $\Symbols$ and $C_\Variables$ are disjoint
\cite{GutLucVit_ConfluenceOfConditionalRewritingInLogicForm_FSTTCS21}, see also
\cite[page 224]{AveLor_ConditionalRewriteSystemsWithExtraVariablesAndDeterministicLogicPrograms_LPAR94}.
Given a term $t\in\Terms$, a ground term $t^\downarrow$ is
obtained by replacing each occurrence of $x\in\Variables$ in $t$ by $c_x$.
Given a substitution $\sigma=\{x_1\mapsto t_1,\ldots,x_n\mapsto t_n\}$, we define
$\sigma^\downarrow=\{x_1\mapsto t^\downarrow_1,\ldots,x_n\mapsto t^\downarrow_n\}$.

\section{Generalized term rewriting systems}
\label{SecGeneralizedTermRewritingSystems}

The material in this section is taken from \cite[Section 7]{Lucas_LocalConferenceOfConditionalAndGeneralizedTermRewritingSystems_JLAMP24}.
We consider definite Horn clauses $\alpha:A\IF \gencond$ (with label $\alpha$)
where $\gencond$ is a sequence $A_1,\ldots,A_n$ of atoms.
If $n=0$, then $\alpha$ is written $A$ rather than $A\IF$.
Let $\Symbols$ be a signature of function symbols,
$\SPredicates$ be a signature of predicate symbols,
$\mu\in\Rmaps{\Symbols}$ be a replacement map,
$H$ be a set of clauses $A\Leftarrow \gencond$ where $\rootTerm(A)\notin\{\rew{},\rews{}\}$,
and
$R$ be a set of \emph{rewrite rules} $\lhsr\to\rhsr\IF\gencond$ such that $\lhsr$ is not a variable
(in both cases, $\gencond$ is a sequence $A_1,\ldots,A_n$ of atoms).
The tuple
$\cR=(\Symbols,\SPredicates,\mu,H,R)$ is called
a \emph{Generalized Term Rewriting System} (\gtrs{},
\cite[Definition 51]{Lucas_LocalConferenceOfConditionalAndGeneralizedTermRewritingSystems_JLAMP24}).
As in \cite[Definition 6.1]{MidHam_CompletenessResultsForBasicNarrowing_AAECC94}, rules
$\lhsr\to \rhsr \IF\gencond\in R$
are classified according to the distribution of
variables:
type 1,  if $\Var(r)\cup\Var(c)\subseteq\Var(\ell)$;
type 2, if $\Var(r)\subseteq\Var(\ell)$;
type 3, if $\Var(r)\subseteq\Var(\ell)\cup\Var(c)$; and
type 4, otherwise.
A rule of type $n$ is often called an $n$-rule.
A \gtrs{} $\cR$ is called an $n$-\gtrs{} if all its rules are of type $n$;
if $\cR$ contains at least one $n$-rule which is \emph{not} an $m$-rule for some $m<n$, then
we say that $\cR$ is a \emph{proper} $n$-\gtrs{}.
The FO-theory of a \gtrs{} $\cR=(\Symbols,\SPredicates,\mu,H,R)$ is
\[\ol{\cR}=\{(\RuleReflexivity),(\RuleCompatibility)\}
\cup
\{(\RulePropagation)_{f,i}\mid f\in\Symbols,i\in\mu(f)\}
\cup
\{(\RuleHornClause)_\alpha\mid \alpha\in H\cup R\}
\]
where, as displayed in Table \ref{TableFOSentencesCSCTRSs},
\begin{table}[!h]
\caption{Generic sentences of the first-order theory of rewriting}
\begin{center}
\begin{tabular}{l@{~}l}
Label & Sentence\\
\hline
$(\RuleReflexivity)$
&
$(\forall x)~x \tos{} x$
\\
$(\RuleCompatibility)$
&
$(\forall x,y,z)~x\to y\wedge y \tos{} z\Rightarrow x\tos{} z$
\\
$(\RulePropagation)_{f,i}$
&
$(\forall x_1,\ldots,x_k,y_i)~x_i\to y_i\Rightarrow f(x_1,\ldots,x_i,\ldots,x_k)\to{} f(x_1,\ldots,y_i,\ldots,x_k)$
\\
$(\RuleHornClause)_{A\IF A_1,\ldots,A_n}$
&
$(\forall x_1,\ldots,x_p)~A_1\wedge\cdots\wedge A_n\Rightarrow A$\\
&\hspace{0.5cm} where $x_1,\ldots,x_p$ are the variables occurring in $A_1,\ldots,A_n$ and $A$\\
\hline
\end{tabular}
\end{center}
\label{TableFOSentencesCSCTRSs}
\end{table}
$(\RuleReflexivity)$ expresses \emph{reflexivity} of many-step rewriting;
$(\RuleCompatibility)$ expresses \emph{compatibility} of one-step
and many-step rewriting;
for each $k$-ary function symbol $f$, $i\in\mu(f)$, and $x_1,\ldots,x_k$ and $y_i$ distinct variables,
$(\RulePropagation)_{f,i}$ enables the \emph{propagation} of rewriting steps in the $i$-th immediate \emph{active} subterm of a term with root symbol  $f$;
finally,
for each Horn clause $\alpha\in H\cup R$,
$(\RuleHornClause)_\alpha$ makes explicit
 the
relationship between Horn clause symbol $\IF$
(also used in rewrite rules which are particular Horn clauses, actually) and
logic implication $\Rightarrow$.
\begin{definition}[Rewriting as deduction]
\label{DefRewritingAsDeduction}
Let $\cR$ be a \gtrs{}.
For all terms $s$ and $t$, we write
$s\rew{\cR}t$ (resp.\ $s\rews{\cR}t$)
if $\ol{\cR}\vdash s\to t$
(resp.\ $\ol{\cR}\vdash s\to^* t$).
\end{definition}
\begin{figure}[t]\small
\vspace*{-2mm}
\begin{tabular}{ll}
Join
&
$
\begin{array}{crl}
(\RuleJO) & (\forall x,y,z) & x \to^* z\wedge y\to^*z\Rightarrow x\cto y
\end{array}
$
 \\[0.3cm]
Oriented
&
$
\begin{array}{crl}
(\RuleOR) & (\forall x,y)& x \to^* y\Rightarrow x\cto y\hspace{1.2cm}
\end{array}
$
 \\[0.3cm]
Semi-equational
&
$
\begin{array}{crl}
(\RuleSE_1) & (\forall x) & x\cto x
\\
(\RuleSE_2) & (\forall x,y,z) & x \rew{} y\wedge y \cto z\Rightarrow x\cto z
\\
(\RuleSE_3) & (\forall x,y,z) & y \rew{} x\wedge y \cto z\Rightarrow x\cto z
\end{array}
$
\end{tabular}
\caption{Sentences for different semantics of \csctrs{s}}
\label{FigSentencesForSemanticsOfCTRSs}\vspace*{-2mm}
\end{figure}
Figure \ref{FigSentencesForSemanticsOfCTRSs} displays some sentences
which can be included in $H$ to make
the  \emph{meaning} of predicate $\cto$ often used in the conditions of rules explicit;
see, e.g., \cite[Definition 7.1.3]{Ohlebusch_AdvTopicsTermRew_2002}:
$(\RuleJO)$ interprets $\cto$ as \emph{$\rew{\cR}$-joinability} of terms;
$(\RuleOR)$ interprets $\cto$ as \emph{$\rew{\cR}$-reachability};
and
$(\RuleSE)_1$, $(\RuleSE)_2$, $(\RuleSE)_3$ provide the interpretation of
$\cto$ as \emph{$\rew{\cR}$-conversion}.
Accordingly, we let
$H_\cto=\{(\RuleJO)\}$, or
$H_\cto=\{(\RuleOR)\}$, or
$H_\cto=\{(\RuleSE)_1,(\RuleSE)_2,(\RuleSE)_3\}$
and then we include $H_\cto$ in $H$.
Given a \gtrs{} $\cR=(\Symbols,\SPredicates,\mu,H,R)$,
a number of well-known classes of rule-based systems is obtained:
\begin{itemize}
\itemsep=0.9pt
\item if $\SPredicates=\{\rew{},\rews{}\}$,
$\mu=\muTop$, and $R$ consists of  unconditional $2$-rules only, then $\cR$ is a
TRS and we often just refer to it as $(\Symbols,R)$.
\item if $\SPredicates=\{\rew{},\rews{}\}$
and $R$ consists of  unconditional $2$-rules only, then $\cR$ is a
CS-TRS
\cite{Lucas_ContextSensitiveRewriting_CSUR20} and we often just refer to it as
$(\Symbols,\mu,R)$.
\item if $\SPredicates=\{\rew{},\rews{},\cto\}$
and for all $\lhsr\to\rhsr\IF \gencond\in R$,
(i)  $\gencond$ consists of conditions
$s\cto t$ (for some terms $s$ and $t$),
and (ii) $H=H_\cto$
then, depending on $H_\cto$, as explained above, $\cR$ is a
(J-,O-,SE-)\csctrs{};
if $\mu=\muTop$, then $\cR$ is a (J-,O-,SE-)CTRS.
 If no confusion arises, we often use $(\Symbols,\mu,R)$ and $(\Symbols,R)$
instead of $(\Symbols,\SPredicates,\mu,H_\cto,R)$ or
$(\Symbols,\SPredicates,\muTop,H_\cto,R)$, although these notations are more self-contained as $H$ embeds the evaluation semantics of $\cto$ \cite[Remark 53]{Lucas_LocalConferenceOfConditionalAndGeneralizedTermRewritingSystems_JLAMP24}.
\end{itemize}
\begin{definition}[Confluence and termination of \gtrs{s}]
\label{DefConfluenceOfGTRSs}
\label{DefTerminationOfGTRSs}
A \gtrs{} $\cR$ is (locally) confluent (resp.\ terminating)
if $\rew{\cR}$ is (locally) confluent (resp.\ terminating).
\end{definition}

\begin{remark}[Confluence and termination of CS-TRSs and CS-CTRSs]
As remarked above, TRSs, CS-TRSs, CTRSs, and CS-CTRSs are particular cases of \gtrs{s}.
In the realm of context-sensitive rewriting, it is often useful to make explicit the replacement map $\mu$ when
referring to the context-sensitive rewriting relation (by writing $\csrew{\cR,\mu}$ and $\csrews{\cR,\mu}$, or just $\csrew{}$ and $\csrews{}$)
and computational properties of CS-TRSs and CS-CTRSs $\cR$
using a replacement map $\mu$, i.e., we usually talk of $\mu$-termination or (local, strong) $\mu$-confluence of $\cR$, see \cite{Lucas_ContextSensitiveRewriting_CSUR20}.
This is useful to \emph{compare} properties of context-sensitive systems and the corresponding properties of unrestricted systems (TRSs and CTRSs).
In this paper, though, we use a uniform notation, $\rew{\cR}$, for the rewrite relation associated to a \gtrs{} $\cR$, and also a uniform designation of the properties,
just following Definition \ref{DefConfluenceOfGTRSs}.
\end{remark}

A rule  $\alpha:\lhsr\to\rhsr\IF \gencond\in\cR$
is \emph{(in)feasible} if $\gencond$ is $\ol{\cR}$-(in)feasible.
Two terms $s$ and $t$ are $\rew{\cR}$-joinable iff $s^\downarrow$ and $t^\downarrow$
are $\rew{\cR}$-joinable, cf.\ \cite[Proposition 6]{GutLucVit_ConfluenceOfConditionalRewritingInLogicForm_FSTTCS21}.
As in \cite[Section 5]{Lucas_LocalConferenceOfConditionalAndGeneralizedTermRewritingSystems_JLAMP24} we consider \emph{conditional pairs} $\langle s,t\rangle \IF  A_1,\ldots,A_n$, where $s,t$ are terms and $A_1,\ldots,A_n$ are atoms.
For a \gtrs{} $\cR$, we say that
$\pi:\langle s,t\rangle\IF\gencond$ is \emph{(in)feasible}  if $\gencond$ is $\ol{\cR}$-(in)feasible.
Also,  $\pi$ is  (strongly) \emph{joinable}
if for all
substitutions $\sigma$, whenever $\ol{\cR}\vdash\sigma(\fcond)$ holds for all $\fcond\in\gencond$,
 terms $\sigma(s)$
and $\sigma(t)$ are (strongly) joinable.
A conditional pair $\pi$ is \emph{trivial} if $s=t$.
Trivial and infeasible con\-di\-tion\-al pairs are both joinable.

\begin{definition}[Extended critical pairs of a \gtrs{}, {\cite[Definitions 59 \& 60]{Lucas_LocalConferenceOfConditionalAndGeneralizedTermRewritingSystems_JLAMP24}}]
\label{DefConditionalPairsGTRS}

Let $\cR=(\Symbols,\SPredicates,\mu,H,R)$ be a  \gtrs{} and
$\alpha:\lhsr\to\rhsr\IF\gencond,\alpha':\lhsr'\to\rhsr'\IF\gencond'\in R$
feasible rules sharing no variable (rename if necessary).
\begin{itemize}
\itemsep=0.85pt
\item
Let $p\in\Pos^\mu_\Symbols(\lhsr)$ be a nonvariable position of $\lhsr$
such that $\lhsr|_p$ and $\lhsr'$ unify with
\emph{mgu} $\theta$.
Then,
\begin{eqnarray}
\langle\theta(\lhsr[\rhsr']_p),\theta(\rhsr)\rangle\IF\theta(c),\theta(c') \label{LblConditionalCriticalPair}
\end{eqnarray}
is a \emph{conditional critical pair} (\CCP)
of $\cR$.
If $p=\toppos$ and $\alpha'$ is a renamed version of $\alpha$, then (\ref{LblConditionalCriticalPair}) is called \emph{improper};
otherwise, it is called \emph{proper}.
\item Let $x\in\Var^\mu(\lhsr)$,
$p\in\Pos^\mu_x(\lhsr)$,
and $x'$ be a fresh variable.
Then,
\begin{eqnarray}
\langle\lhsr[x']_p,\rhsr\rangle\IF x\to x',c \label{LblVariableCriticalPair}
\end{eqnarray}
is a \emph{conditional variable pair}
(\CVP) of $\cR$.
Variable $x$ is called the \emph{critical variable} of the pair.
\end{itemize}
In both cases,
$p$ is called the \emph{critical position}. We use the following notation:
\begin{itemize}
\itemsep=0.85pt
\item  $\pCCP(\cR)$ is the set of feasible
\emph{proper  conditional critical pairs} of $\cR$;
\item $\iCCP(\cR)$ is the set of
feasible \emph{improper
conditional critical pairs}  of $3$-rules
in $\cR$ (as improper critical pairs of 2-rules are joinable); and
\item $\CVP(\cR)$ is the set of all
\emph{feasible conditional variable pairs} in $\cR$.
\end{itemize}
Then,
\[\ECCP(\cR)=\pCCP(\cR)\cup\iCCP(\cR)\cup\CVP(\cR)\]
is the set of \emph{extended conditional critical pairs} of $\cR$.
\end{definition}
For unconditional systems (\cstrs{s} and \trs{s}) we can focus on
smaller sets of possibly conditional pairs to analize confluence:
\begin{itemize}
\itemsep=0.85pt
\item For \cstrs{s} $\cR$, we have $\iCCP(\cR)=\emptyset$ ($\cR$ contains no $3$-rule),
and $\pCCP(\cR)$ is written $\CP(\cR,\mu)$ or just $\CP(\cR)$
if no confusion arises.
Also, following \cite[Definition 20]{LucVitGut_ProvingAndDisprovingConfluenceOfContextSensitiveRewriting_JLAMP22}, instead of $\CVP(\cR)$, we can use the set of $\LH{\mu}$-critical pairs
\[\begin{array}{rcl}
\LHCP(\cR,\mu) &\!\! =\!\! & \{\langle \lhsr[x']_p,\rhsr\rangle\IF x\rew{}x'\mid \lhsr\to\rhsr\in\cR,x\in\Var^\mu(\lhsr)\cap(\NVar{\mu}(\lhsr)\cup\NVar{\mu}(\rhsr)),\\
&& \hspace{1.8cm} p\in\Pos^\mu_x(\lhsr)\}\\
&\!\! \subseteq\!\! & \CVP(\cR)
\end{array}
\]
which
provides a more specific set of conditional pairs
obtained from an unconditional rule $\lhsr\to\rhsr$
capturing \emph{possibly harmful} peaks coming from variables which are \emph{active} in the left-hand side $\lhsr$ but are also \emph{frozen} in the same $\lhsr$ or else in the right-hand side $\rhsr$ of the rule
(see \cite[Section 8]{Lucas_LocalConferenceOfConditionalAndGeneralizedTermRewritingSystems_JLAMP24} for a comparison of $\LH{\mu}$-critical pairs and conditional variable pairs for a \cstrs{} $\cR$).
\item For TRSs $\cR$ (where
$\muTop$ can be assumed to view it as a CS-TRS),
$\iCCP(\cR)=\emptyset$ (no $3$-rules), $\LHCP(\cR,\muTop)=\emptyset$ (as $\NVar{\muTop}(t)=\emptyset$ for all terms $t$) and $\pCCP(\cR)$ is written $\CP(\cR)$.
\end{itemize}

\eject

\begin{example}
\label{ExCOPS_387_CS_CTRS_pCCPs_iCCPs_CVPs}
Consider the CTRS $\cR$ in Example \ref{ExCOPS_387_CS_CTRS}.
With rule (\ref{ExCOPS_387_CS_CTRS_rule2}),
i.e., $\fS{f}(\fS{g}(x)) \to x \IF x \cto \fS{s}(\fS{0})$,
$1\in\Pos_\Symbols(\lhsr_{(\ref{ExCOPS_387_CS_CTRS_rule2})})$ and
(\ref{ExCOPS_387_CS_CTRS_rule1})', i.e.,
$\fS{g}(\fS{s}(x')) \to \fS{g}(x')$, since
$\fS{f}(\fS{g}(x))|_1=\fS{g}(x)$ and $\fS{g}(\fS{s}(x'))$ unify with $\theta=\{x\mapsto \fS{s}(x')\}$,
we obtain the following (feasible) \emph{proper conditional critical pair}
\begin{eqnarray}
\langle\fS{f}(\fS{g}(x')) , \fS{s}(x')\rangle \IF \fS{s}(x') \cto\fS{s}(\fS{0})
\label{ExCOPS_387_CS_CTRS_pCCP}
\end{eqnarray}
Thus, $\pCCP(\cR)=\{(\ref{ExCOPS_387_CS_CTRS_pCCP})\}$.
Since $\cR$ is a 1-CTRS, $\iCCP(\cR)=\emptyset$.
With (\ref{ExCOPS_387_CS_CTRS_rule2}) and $x\in\Var^\muTop(\fS{f}(\fS{g}(x)))=\Var(\fS{f}(\fS{g}(x)))$, we obtain
$\CVP(\cR)=\{(\ref{ExCOPS_387_CS_CTRS_CVP})\}$ for the
following \emph{conditional variable pair}:
\begin{eqnarray}
\langle\fS{f}(\fS{g}(x')),x\rangle \IF x\rew{}x', x \cto \fS{s}(\fS{0})\label{ExCOPS_387_CS_CTRS_CVP}
\end{eqnarray}
Thus, $\ECCP(\cR)=\{(\ref{ExCOPS_387_CS_CTRS_pCCP}),(\ref{ExCOPS_387_CS_CTRS_CVP})\}$.
\end{example}

\begin{example}
\label{ExCOPS_387_CS_CTRS_Rbot_NoECCPs}
For the \csctrs{} $\cR_\bot$ in Example \ref{ExCOPS_387_CS_CTRS} there is no conditional critical pair because the only active position of the left-hand sides
$\lhsr_{(\ref{ExCOPS_387_CS_CTRS_rule1})}$ and
$\lhsr_{(\ref{ExCOPS_387_CS_CTRS_rule2})}$ of rules
(\ref{ExCOPS_387_CS_CTRS_rule1}) and  (\ref{ExCOPS_387_CS_CTRS_rule2}) is
$\toppos$.
However, $\lhsr_{(\ref{ExCOPS_387_CS_CTRS_rule1})}$ and
$\lhsr_{(\ref{ExCOPS_387_CS_CTRS_rule2})}$  do \emph{not} unify.
Hence $\pCCP(\cR_\bot)=\emptyset$ and, again, $\iCCP(\cR_\bot)=\emptyset$, being a
1-\csctrs.
Also,
$\CVP(\cR_\bot)=\emptyset$ because all variables in the left-hand sides of the rules in
 $\cR_\bot$ are \emph{frozen} due to $\muBot$.
 Thus, $\ECCP(\cR_\bot)=\emptyset$.
 \end{example}

\section{Confluence framework}\label{SecConfFramework}

This section describes our confluence framework for proving and disproving
(local, strong) confluence of \gtrs{s}.
As mentioned in the introduction, the
framework is inspired by existing frameworks for proving termination of (variants of)
TRSs, starting from \cite{GieThiSch_DPFramework_LPAR04}.
We encapsulate the different stages of confluence proofs for a given \gtrs{} $\cR$ as
\emph{problems} and the techniques used to treat them and develop the proof
as \emph{processors}. Proofs are organized in a \emph{proof tree} whose nodes are
the aforementioned problems and whose branches are defined by the (possibly repeated
and parallel)
use of processors. Section \ref{SecListOfProcessors} describes a list of processors that can be used in the confluence
framework.
Forthcoming techniques can often be implemented as a new processor
of the framework and then included in an existing strategy for a practical use.
In general, it is hard to find a single technique that is able to obtain
a complete proof at once. In practice, most proofs are a
combination of different (repeatedly used) techniques.
This requires the definition of \emph{proof strategies} as a combination of processors.
Section \ref{SecStrategy} describes \CONFident's proof strategy.

\subsection{Problems}
\label{SecProblems}

A \emph{problem} is just a structure that contains information used to prove
the property we want to analyze. We define some variants of
\emph{Confluence} and \emph{Joinability} problems.
\emph{Confluence problems} are used to (dis)prove confluence of
\gtrs{s} (and also local and strong confluence),
and \emph{joinability problems} are used to (dis)prove
(strong) joinability of conditional pairs.
In the following, we often use $\tau$ to refer to a problem when no confusion arises.

\begin{definition}[Confluence Problems]\label{def:CProblem}
Let $\cR$ be a \gtrs{}.
A (local, strong) \emph{confluence problem},  denoted
	$\CRproblem{\cR}$ (resp.\ $\WCRproblem{\cR}$, $\SCRproblem{\cR}$),   is \emph{positive} if
	$\cR$ is (locally, strongly) confluent; otherwise, it is \emph{negative}.
\end{definition}

\begin{definition}[Joinability Problems]\label{def:JProblem}
Let $\cR$ be a \gtrs{} and $\pi$ be a \emph{conditional pair}.
	A (strong) \emph{joinability problem}, denoted
	$\JOproblem{\cR,\pi}$ (resp.\ $\SJOproblem{\cR,\pi}$), is \emph{positive}
	if $\pi$ is (strongly) joinable; otherwise, it is
	\emph{negative}.
\end{definition}
In the following, unless established otherwise, our definitions and results pay no attention to the
specific type (confluence or joinability) of problems at stake. We just refer to them as ``problems''.

\begin{remark}[Relationship between problems]
\label{RemRelationshipBetweenProblems}
From well-known results, see, e.g.,
\cite[Section 2.7]{BaaNip_TermRewAllThat_1998}, the following relations hold for these problems:
\begin{center}
\begin{tabular}{l}
If $\SCRproblem{\cR}$ is positive, then $\CRproblem{\cR}$ is positive.\\[2pt]
If $\CRproblem{\cR}$ is positive, then $\WCRproblem{\cR}$ is positive.\\[2pt]
If $\SJOproblem{\cR,\pi}$ is positive, then $\JOproblem{\cR,\pi}$ is positive.
\end{tabular}
\end{center}
It is well-known that, in general, these implications cannot be reversed.
\end{remark}

\subsection{Processors}\label{SecProcessors}

A processor $\Proc$ is a partial function that takes a problem
$\tau$ as an input and, if $\Proc$ is defined for $\tau$, then it returns either
a (possibly empty) set of problems $\tau_1,\ldots,\tau_n$ for some $n\geq 0$ or ``\noTree''.
Usually,
$\tau_1,\ldots,\tau_n$ are (hopefully) \emph{simpler} problems.
We say that a processor is \emph{sound} if it propagates \emph{positiveness} of \emph{all} returned problems
$\tau_1,\ldots,\tau_n$ upwards as \emph{positiveness} of the input problem $\tau$.
Symmetrically, a processor is \emph{complete} if \emph{negativeness} of \emph{some} returned problem is propagated upwards as \emph{negativeness} of the input problem $\tau$.
Furthermore, if a
complete processor returns ``\noTree'', it tells us that $\tau$ is negative
and if a sound processor returns an empty set of problems, then $\tau$ is
trivially positive.

\begin{definition}
\label{DefProcessorSoundnessAndCompleteness}
A \emph{processor} $\Proc{}$ is a partial function from
problems into sets of problems; alternatively it can return ``\noTree''.
The domain of $\Proc{}$ (i.e., the set of problems on which $\Proc$ is defined)
is denoted $\dom(\Proc{})$.
We say that $\Proc{}$ is
\begin{itemize}
\item \emph{sound} if for all $\tau \in \dom(\Proc{})$, $\tau$ is positive
whenever $\Proc{}(\tau) \neq$ ``\noTree'' and all $\tau'\in\Proc{}(\tau)$ are
positive.
\item \emph{complete} if for all $\tau \in \dom(\Proc{})$,
$\tau$ is negative whenever $\Proc{}(\tau) =$ ``\noTree'' or some $\tau'\in\Proc{}(\tau)$  is negative.
\end{itemize}
\end{definition}
Roughly speaking, soundness is used for \emph{proving} problems \emph{positive},
and completeness is used to prove them \emph{negative}.
Sound and complete processors are obviously desirable, as they can be used for both purposes.
However, it is often the case that processors which are sound but not complete are
available and heavily used (and vice versa) as they implement important techniques for (dis)proving
confluence.
Section \ref{SecListOfProcessors} describes several  processors
and their use in the confluence framework.

\subsection{Proofs in the confluence framework}
\label{SecProofsInTheConfluenceFramework}

Confluence problems can be proved positive or negative by using a proof tree as
follows.
Our definitions and results are given, in particular, for confluence problems $\CRproblem{\cR}$ for a \gtrs{} $\cR$.
They straightforwardly adapt to $\WCRproblem{\cR}$, $\SCRproblem{\cR}$, $\JOproblem{\cR,\pi}$,
and $\SJOproblem{\cR,\pi}$.

\begin{definition}[Confluence Proof Tree] {\rm }\label{DefConfProofTree}
	Let $\cR$ be a \gtrs{}. A confluence proof tree $\cT$ for $\cR$
	 is a tree whose root label is $\CRproblem{\cR}$, whose inner (i.e., non-leaf) nodes are labeled with problems $\tau$,
	and whose leaves are labeled either with problems $\tau$, or with ``\yesTree'' or ``\noTree''. For every inner
	node $\mathsf{n}$ labeled with $\tau$, there is a processor $\Proc{}$ such
	that $\tau \in \dom(\Proc{})$ and:
	\begin{enumerate}
		\item if $\Proc{}(\tau)=$``\noTree'' then $\mathsf{n}$ has just one child, labeled
		with ``\noTree''.
		\item if $\Proc{}(\tau)=\emptyset$ then $\mathsf{n}$ has just one child, labeled with
		``\yesTree''.
		\item if $\Proc{}(\tau) = \{\tau_1,\ldots,\tau_m\}$ with $m > 0$, then
		$\mathsf{n}$ has $m$ children labeled with the problems
		$\tau_1,\ldots,\tau_m$.
	\end{enumerate}
\end{definition}

\noindent In this way, a confluence proof tree is obtained by the combination of
different  processors.
The proof of the following result is obvious from the previous definitions.

\begin{theorem}[Confluence Framework] {\rm }\label{ThConfFramework}
	Let $\cR$ be a \gtrs{} and $\cT$ be a confluence proof tree for $\cR$. Then:
	\begin{enumerate}	
		\item if all leaves in $\cT$ are labeled with {\sf ``yes''} and all
		involved processors are sound for the problems they
		are applied to, then $\cR$ is confluent.
		\item if $\cT$ has a leaf labeled with {\sf ``no''} and all
		processors in the path from the root to such a leaf are complete for the problems they are applied to, then $\cR$ is not confluent.
	\end{enumerate}
\end{theorem}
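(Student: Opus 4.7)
The plan is to prove both statements by structural induction on the proof tree $\cT$, working bottom-up from the leaves to the root and using, at each internal node, the hypothesis on the processor applied there. Since Definition~\ref{DefConfProofTree} enumerates exactly three shapes an inner node can take (a \noTree{} child, a \yesTree{} child coming from $\Proc(\tau)=\emptyset$, or $m\ge 1$ problem-labeled children), the inductive step splits cleanly along those cases and uses the appropriate clause of Definition~\ref{DefProcessorSoundnessAndCompleteness}.

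For part (1), I will show by induction on the height of the subtree rooted at a node $\mathsf{n}$ that, under the stated hypotheses, the label $\tau$ of $\mathsf{n}$ is a positive problem. In the base case $\mathsf{n}$ is a leaf; since by hypothesis no leaf is labeled \noTree{}, the leaf is labeled \yesTree{}, so by Definition~\ref{DefConfProofTree}.2 the processor $\Proc$ applied at the parent satisfies $\Proc(\tau)=\emptyset$, and soundness of $\Proc$ (vacuously over the empty set of returned problems) yields that $\tau$ is positive. For the inductive step at an internal node, case~3 of Definition~\ref{DefConfProofTree} applies: $\Proc(\tau)=\{\tau_1,\dots,\tau_m\}\ne\emptyset$ and $\Proc(\tau)\ne\noTree$. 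By the inductive hypothesis each $\tau_i$ is positive, and soundness of $\Proc$ at $\tau$ propagates positiveness upward to $\tau$. Applying this at the root gives that $\CRproblem{\cR}$ is positive, i.e., $\cR$ is confluent.

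For part (2), I do not need induction over the whole tree but only along the branch $\pi$ from the root to a distinguished leaf labeled \noTree{}. I show by induction on the length of the suffix of $\pi$ below a node $\mathsf{n}$ that the label of $\mathsf{n}$ is negative. The base case is the \noTree{} leaf itself: by Definition~\ref{DefConfProofTree}.1 its parent $\mathsf{n}$ satisfies $\Proc(\tau)=\noTree$ for the processor $\Proc$ applied there, so by completeness of $\Proc$, $\tau$ is negative. For the inductive step, $\mathsf{n}$ has some child $\mathsf{n}'$ on $\pi$ whose label $\tau'\in\Proc(\tau)$ is negative by the inductive hypothesis; completeness of $\Proc$ then forces $\tau$ to be negative. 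At the root this gives that $\CRproblem{\cR}$ is negative, i.e., $\cR$ is not confluent.

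There is essentially no genuine obstacle here: the result is an immediate formal consequence of the definitions of proof tree, soundness and completeness. The only point requiring a touch of care is the empty-children case in part~(1), where one must recognize that ``$\Proc(\tau)=\emptyset$ and $\Proc$ sound'' does deliver positiveness of $\tau$ through a vacuous universal quantification over returned problems; this is precisely why Definition~\ref{DefProcessorSoundnessAndCompleteness} is phrased with ``$\Proc(\tau)\ne\noTree$ and all $\tau'\in\Proc(\tau)$ are positive'' rather than requiring a nonempty witness. The same adaptation of the argument, with $\CRproblem{\cR}$ replaced by $\WCRproblem{\cR}$, $\SCRproblem{\cR}$, $\JOproblem{\cR,\pi}$, or $\SJOproblem{\cR,\pi}$, yields the analogous statements for local/strong confluence and (strong) joinability problems.
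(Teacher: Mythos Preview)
Your proposal is correct; the paper itself does not give an explicit proof and simply remarks that the result ``is obvious from the previous definitions,'' so your induction on the tree structure is exactly the routine unfolding of those definitions. One small presentational wrinkle: in part~(1) your base case should be an inner node whose single child is a \yesTree{} leaf (since leaves labeled \yesTree{} carry no problem $\tau$), but you effectively handle this already when you invoke $\Proc(\tau)=\emptyset$ at the parent.
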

Figures~\ref{FigProofTreesInTheConfluenceFramework} and \ref{FigProofTreesInTheConfluenceFramework2}
at the end of Setion \ref{SecListOfProcessors}
display examples of proofs obtained by using the confluence framework for some of the examples discussed
in this paper.

\section{List of processors}
\label{SecListOfProcessors}

In this section, we enumerate some processors for use in the confluence
framework, organized according to their functionality.
Table \ref{TableListOfProcessors} displays the complete list, which we develop in the following sections.
Table \ref{TableFinishingProcessorsInTheConfluenceFramework}
shows which processors (according to their definitions)
are able to \emph{finish} a proof branch
in a proof tree by either returning an empty set (which is translated by labeling with ``\yesTree''
a leaf of the tree) or by directly returning ``\noTree''.

\begin{table}[!ht]
\vspace*{-3mm}
\caption{Available processors}
\begin{center}
\scalebox{0.95}{
\begin{tabular}{lll}
Group & Section & Processors\\
\hline
Cleansing & \ref{SecCleansingProcessors} & $\ProcExtraVars$, $\ProcSimplify$, $\ProcInlining$\\
Modular decomposition & \ref{SectionModularDecomposition} & $\ProcModularDecomp$\\
Local/Strong confluence & \ref{SecProcessorForLocalStrongConfluenceProblems} & $\ProcHuetExtended$\\
Confluence of \ctrs{s} by transformation & \ref{SecProcTrU}--\ref{SecProcTrUconf} &
$\ProcTrU$,
$\ProcTrUconf$\\
Confluence and orthogonality & \ref{SecOrthogonality} & $\ProcOrthogonality$\\
Confluence and local/strong confluence & \ref{SecConfluenceAsLocalStrongConfluence} & $\ProcConfluence$, $\ProcLocalConfluence$, $\ProcStrongConfluence$\\
Confluence and (local) confluence of \csr & \ref{SecConfluenceAsLocalConfluenceOfCSR}--\ref{SecConfluenceAsCanonicalConfluenceOfCSR} & $\ProcCanJoinability$, $\ProcCnvJoinability$, $\ProcCanCSR$\\
Confluence by termination and local confluence & \ref{SecKnuthBendix} & $\ProcKnuthBendix$\\
Joinability & \ref{SecJoinabilityProcessor} & $\ProcJoinability$
\end{tabular} }
\end{center}
\label{TableListOfProcessors}
\end{table}

\begin{table}[ht]
\vspace*{-3mm}
\caption{Ending Processors in the Confluence Framework}
\begin{center}
\scalebox{0.95}{
\begin{tabular}{ll@{\hspace{1.5cm}}|ll@{\hspace{1.5cm}}|ll}
Proc.\ & May end with &
Proc.\ & May end with &
Proc.\ & May end with
\\
\hline
$\ProcExtraVars$ & \noTree
&
$\ProcOrthogonality$  & \yesTree
&
$\ProcKnuthBendix$  & \yesTree\\
$\ProcHuetExtended$  & \yesTree &
$\ProcJoinability$  & \yesTree\:/\:\noTree
\end{tabular} }
\end{center}
\label{TableFinishingProcessorsInTheConfluenceFramework}
\end{table}

\subsection{Cleansing processors}
\label{SecCleansingProcessors}

In this section we present a number of processors implementing
simple tests to detect and correct particular situations (extra variables, trivial rules, trivial conditions in rules, infeasible conditions, etc.) leading to simplifications
of rules or even to an immediate answer.
Sometimes, this is done on the input \gtrs{}, just before attempting a proof;
sometimes after
applying processors that split the system into components, or that
transform the rules to produce
new ones exhibiting such problems.

\medskip
We consider processors
$\ProcExtraVars$ which checks whether rules with extra variables may definitely imply a non-confluent behavior;
$\ProcSimplify$ which removes trivial (components of) rules to simplify them; and
$\ProcInlining$ which tries to obtain substitutions that can be used to remove conditions in rules.

\subsubsection{Extra variables check: $\ProcExtraVars$}

An obvious reason for non-confluence is the presence of extra variables in rules. For instance, a rule $\lhsr\to x$ for some \emph{extra}
variable $x\notin\Var(\lhsr)$  `produces' non-confluence: the following peak is always possible: $x'\leftrew{}\lhsr\rew{}x$
for some fresh variable $x'\notin\Var(\lhsr)\cup\{x\}$, but both $x$ and $x'$ are irreducible.
In general, as a simple generalization of this fact,
given a  \gtrs{} $\cR=(\Symbols,\SPredicates,\mu,H,R)$, if there is
a \emph{feasible} conditional rule $\lhsr\to\rhsr\IF\gencond\in R$, and a variable
$x\in\Var(\rhsr)-\Var(\lhsr,\gencond)$
such that $p\in\Pos_x(\rhsr)$ and for all $q<p$,  $\rootTerm(\rhsr|_q)$ is \emph{not a defined symbol},
then $\cR$ is not (locally, strongly) confluent. Hence, we define the following processor:

\[\begin{array}{rcl}
\ProcExtraVars(\CRproblem{\cR})=\noTree\\
\ProcExtraVars(\WCRproblem{\cR})=\noTree\\
\ProcExtraVars(\SCRproblem{\cR})=\noTree
\end{array}\]
if $\cR$ contains a rule as above. Then, $\ProcExtraVars$ is complete and (trivially) sound.

\subsubsection{Simplification: $\ProcSimplify$}
\label{SecPreprocessing}

The following simplifications of \gtrs{s}
are often useful in proofs of confluence problems.
\begin{enumerate}
\itemsep=0.9pt
\item
\emph{Removing trivial rules.}
All rules $ t\to t$ or $ t\to t\IF\gencond$ for some term $t$ are \emph{removed}.
\item
\emph{Removing trivial conditions.}
Conditions $t\cto t$ in the conditional part $\gencond$ of rules $\lhsr\to\rhsr\IF\gencond$ of J-, O-, or SE-\csctrs{s} are \emph{removed}.
\item \emph{Removing infeasible conditional rules. }
Conditional rules $\lhsr\to\rhsr\IF\gencond$ with an infeasible condition $c$ are \emph{removed}, as they
will not be applied in reduction steps.
\item \emph{Removing ground atoms.} Ground atoms $B$ occurring in \emph{feasible} conditions $\gencond$ in clauses $A\IFhorn\gencond\in H$ or rules $\lhsr\to\rhsr\IF\gencond\in R$ can be \emph{removed} without affecting
the role of the so-simplified clause or rule in computations.
\end{enumerate}
These are applied as much as possible (to each rule in the input system)
by means of a \emph{simplifying processor}
$\ProcSimplify$:
\[\begin{array}{rcl}
\ProcSimplify(\CRproblem{\cR}) & = & \{\CRproblem{\cR'}\}\\
\ProcSimplify(\WCRproblem{\cR}) & = & \{\WCRproblem{\cR'}\}\\
\ProcSimplify(\SCRproblem{\cR}) & = & \{\SCRproblem{\cR'}\}
\end{array}\]
where $\cR'=(\Symbols,\SPredicates,\mu,H',R')$ is obtained by using the previous transformations to obtain $H'$ and $R'$ from $H$ and $R$, respectively.
Since $\rew{\cR}$ and $\rew{\cR'}$ coincide, (local, strong) confluence of
$\cR$ and $\cR'$ also coincide.
Thus, $\ProcSimplify$ is \emph{sound} and \emph{complete}.

\begin{example}\label{COPS409Ex}
  The following example (\verb$#409$ in COPS\footnote{Confluence Problems database, see \url{https://cops.uibk.ac.at/}}) displays
  an \emph{oriented} CTRS.\footnote{In the following, in order to keep a close connection with the original sources,
rather than call them \gtrs{s}, we use TRS, CTRS, \csctrs{}, etc., when citing external
examples.}
\begin{eqnarray}
 \fS{b} & \to &  \fS{b}\label{COPS409Ex_rule1}\\
  \fS{g}( \fS{s}(x)) & \to &  x\label{COPS409Ex_rule2}\\
  \fS{h}(\fS{s}(x)) & \to &  x\label{COPS409Ex_rule3}\\
  \fS{f}(x,y) & \to &    \fS{g}(\fS{s}(x)) \IF  \fS{c}(\fS{g}(x)) \cto   \fS{c}(\fS{a})\label{COPS409Ex_rule4}\\
 \fS{f}(x,y) & \to &  \fS{h}(\fS{s}(x)) \IF \fS{c}(\fS{h}(x)) \cto \fS{c}(\fS{a})\label{COPS409Ex_rule5}
\end{eqnarray}
Since rule (\ref{COPS409Ex_rule1}) fits the first case above, we can remove it.
Thus, we have:
$\ProcSimplify(\CRproblem{\cR})=\{\CRproblem{\cR'}\}$
where $\cR'$ consists of the rules $(\ref{COPS409Ex_rule2}),\ldots,(\ref{COPS409Ex_rule5})$.
\end{example}

\subsubsection{Inlining: $\ProcInlining$}
\label{SecProcInlining}

As in \cite[Definition
	9.4 \& Lemma 9.5]{Sternagel_ReliableConfluenceAnalysisOfConditionalTermRewriteSystems_PhD17},
the so-called \emph{inlining of rules} is useful to shrink the conditions of
\emph{O-rules} in a \gtrs{} $\cR=(\Symbols,\SPredicates,\mu,H,R)$.
By an \emph{O-rule} we mean a rule $\alpha:\lhsr\to\rhsr\IF\gencond\in R$ where $\gencond$ consists of
conditions
which are given the usual \emph{reachability} semantics,
by explicitly writing $s\rews{}t$, or, in an indirect way, as $s\cto{}t$
with $\cto$
defined by a single
clause $x\cto y\IF x\rews{}y$ in $H$.
For simplicity, in
the remainder of the section we assume that the last format is used.

\begin{definition}[Inlining]
\label{DefInlining}
Let
$\alpha:\lhsr\to\rhsr\IF s_1\cto t_1, \cdots,s_n\cto t_n$ be an O-rule and
$t_i=x$ for some variable
\begin{eqnarray}
x\notin\Var(\lhsr,s_i,t_1,\ldots,t_{i-1},t_{i+1},\ldots t_n)\cup\NVar{\mu}(\rhsr,s_1\ldots,s_n),
\label{LblVariableRestrictionInlining}
\end{eqnarray}
and $1\leq i\leq n$. Let $\sigma=\{x\mapsto s_i\}$. The \emph{inlining} of the $i$-th condition of $\alpha$ with $x$ is
\begin{eqnarray}
\alpha_{x,i}:\lhsr\to\sigma(\rhsr)\IF \sigma(s_1)\cto t_1, \cdots, \sigma(s_{i-1})\cto
	t_{i-1}, \sigma(s_{i+1})\cto t_{i+1}, \cdots,\sigma(s_n)\cto t_n
\end{eqnarray}
Given a \gtrs{} $\cR=(\Symbols,\SPredicates,\mu,H,R\uplus\{\alpha\})$ where $\alpha$ is an O-rule,
the
 \emph{inlining} of  the $i$-th condition of $\alpha$ in $\cR$ with $x$ is
 $\cR_{\alpha,x,i}=(\Symbols,\SPredicates,\mu,H,R\uplus\{\alpha_{x,i}\})$.
\end{definition}
For the sake of readability, the proof of the following result is in  Appendix \ref{ApProofPropInliningGTRSs}.
\begin{proposition}
\label{PropInliningGTRSs}
Let $\cR=(\Symbols,\SPredicates,\mu,H,R)$ be a \gtrs{},
$\alpha\in R$,
 $i$, and $x$ as in Definition \ref{DefInlining}. Let $s$ and $t$ be terms.
\begin{enumerate}
\item  If $s\rew{\cR}t$,
then
$s\rews{\cR_{\alpha,x,i}}t$.
\item If $s\rew{\cR_{\alpha,x,i}}t$,
then
$s\rew{\cR}t$.
\end{enumerate}
\end{proposition}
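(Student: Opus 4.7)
The plan is to prove both parts simultaneously by induction on the structure of derivations in the first-order theories $\ol{\cR}$ and $\ol{\cR_{\alpha,x,i}}$ associated to $\cR$ and $\cR_{\alpha,x,i}$ according to Definition \ref{DefRewritingAsDeduction}, with a nested induction handling the $\rews{}$ judgements that appear as premises. Each direction reduces to a single interesting case: a deduction whose last rule is an instance of the Horn clause $(\RuleHornClause)_{\alpha}$ (for Part 2) or $(\RuleHornClause)_{\alpha_{x,i}}$ (for Part 1). All other last rules---$(\RuleReflexivity)$, $(\RuleCompatibility)$, $(\RulePropagation)_{f,i}$, and the Horn clauses for rules in $R\setminus\{\alpha\}$, which are shared between the two systems---go through directly by applying the inductive hypothesis to the subderivations of their premises.

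For Part (2) I would take a step $s\rew{\cR_{\alpha,x,i}}t$ issued by $\alpha_{x,i}$ at an active position $p$ with substitution $\theta'$, and define $\theta := \theta'\circ\sigma$, so $\theta(y)=\theta'(y)$ for $y\neq x$ and $\theta(x)=\theta'(s_i)$. Using the freshness of $x$ in $\Var(\lhsr)$ one gets $\theta(\lhsr)=\theta'(\lhsr)$, while by construction $\theta(\rhsr)=\theta'(\sigma(\rhsr))$. The $i$-th premise of $\alpha$ becomes $\theta'(s_i)\rews{\cR}\theta'(s_i)$, which holds by reflexivity. Each remaining $j$-th premise coincides, after unfolding $\sigma$ and using $x\notin\Var(t_j)$, with the $j$-th premise of $\alpha_{x,i}$ under $\theta'$, so applying the inductive hypothesis to the (strictly smaller) subderivation witnessing that premise in $\cR_{\alpha,x,i}$ yields the corresponding $\cR$-derivation.

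For Part (1), consider a step $s\rew{\cR}t$ issued by $\alpha$ at active position $p$ with substitution $\theta$. I would define $\theta'$ on $\Var(\alpha_{x,i})\subseteq\Var(\alpha)\setminus\{x\}$ by $\theta'(y):=\theta(y)$ and exploit the following structural identity: for any $u\in\{\rhsr,s_1,\ldots,s_n\}$, the hypothesis $x\notin\NVar{\mu}(u)$ forces every occurrence of $x$ in $u$ to lie on a purely $\mu$-active path, so $\theta'(\sigma(u))$ differs from $\theta(u)$ only at those (active) positions, where $\theta(s_i)$ stands in place of $\theta(x)$. The $i$-th premise of $\alpha$ gives $\theta(s_i)\rews{\cR}\theta(x)$, and by the inductive hypothesis $\theta(s_i)\rews{\cR_{\alpha,x,i}}\theta(x)$; replaying these reductions at each active $x$-occurrence inside $\sigma(u)$---licensed by the propagation sentences $(\RulePropagation)_{f,i}$ in $\ol{\cR_{\alpha,x,i}}$---yields $\theta'(\sigma(u))\rews{\cR_{\alpha,x,i}}\theta(u)$. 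Applied to $u=s_j$ and combined with the inductive hypothesis on the $j$-th premise of $\alpha$ (using $\theta(t_j)=\theta'(t_j)$ since $x\notin\Var(t_j)$), this verifies the conditions of $\alpha_{x,i}$ under $\theta'$; applied to $u=\rhsr$ and placed inside the context, it delivers $s = C[\theta(\lhsr)] \rew{\cR_{\alpha,x,i}} C[\theta'(\sigma(\rhsr))] \rews{\cR_{\alpha,x,i}} C[\theta(\rhsr)] = t$.

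The hard part will be the position bookkeeping in Part (1): making precise that every patch-up reduction $\theta(s_i)\rews{}\theta(x)$ is replayable in context, i.e., its overall position $p\cdot q\cdot r$ is $\mu$-active in the ambient term. Here $p$ is active because the original step was, $q$ is active in $\rhsr$ or $s_j$ precisely because the variable-restriction (\ref{LblVariableRestrictionInlining}) excludes $x$ from $\NVar{\mu}(\rhsr,s_1,\ldots,s_n)$, and $r$ is active inside $\theta(s_i)$ because the $\cR$-reduction $\theta(s_i)\rews{\cR}\theta(x)$ already respected $\mu$. The remaining freshness clauses in (\ref{LblVariableRestrictionInlining}) are exactly what keep the substitutions $\theta,\theta'$ in the two directions compatible without accidental variable capture, so that $\theta'\circ\sigma$ and $\theta\restriction(\Var(\alpha)\setminus\{x\})$ really invert each other on the relevant variables.
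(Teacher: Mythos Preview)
Your proposal is correct and follows essentially the same approach as the paper: both proceed by induction on the derivation, isolate the single interesting case where the step uses $\alpha$ (respectively $\alpha_{x,i}$), define the companion substitution by composing with $\sigma$ (respectively restricting away $x$ and setting $\theta(x)=\theta'(s_i)$), and exploit the variable restriction (\ref{LblVariableRestrictionInlining}) to ensure that the ``patch-up'' reductions $\theta(s_i)\rews{}\theta(x)$ occur at active positions so that $(\RulePropagation)_{f,i}$ applies. The only cosmetic differences are that the paper phrases the induction over the length of a Hilbert-style proof (after grounding $s,t$ via \cite[Theorem 10]{Lucas_LocalConferenceOfConditionalAndGeneralizedTermRewritingSystems_JLAMP24}) rather than directly on derivation structure, and treats the two directions in sequence rather than simultaneously.
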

\begin{corollary}
\label{CoroInliningGTRSs}
Let $\cR=(\Symbols,\SPredicates,\mu,H,R)$ be a \gtrs{},
$\alpha\in R$,
 $i$, and $x$ as in Definition \ref{DefInlining}. Then, $\rews{\cR}$ and $\rews{\cR_{\alpha,x,i}}$ coincide.
\end{corollary}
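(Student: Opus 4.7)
The plan is to derive the corollary almost immediately from Proposition \ref{PropInliningGTRSs}, by taking reflexive-transitive closures of both inclusions it provides.

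First I would handle the inclusion $\rews{\cR}\subseteq\rews{\cR_{\alpha,x,i}}$. Part 1 of the proposition states that every one-step $\cR$-rewrite is a (possibly multi-step) $\cR_{\alpha,x,i}$-rewrite, i.e.\ $\rew{\cR}\subseteq\rews{\cR_{\alpha,x,i}}$. Since $\rews{\cR_{\alpha,x,i}}$ is reflexive and transitive by construction, it contains the reflexive-transitive closure of any relation it contains; concretely, given $s\rews{\cR}t$ via a sequence $s=s_0\rew{\cR}s_1\rew{\cR}\cdots\rew{\cR}s_m=t$, I apply Proposition \ref{PropInliningGTRSs}(1) to each step to obtain $s_{k}\rews{\cR_{\alpha,x,i}}s_{k+1}$, and then concatenate by transitivity to conclude $s\rews{\cR_{\alpha,x,i}}t$. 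Reflexivity handles the case $m=0$.

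Symmetrically, for $\rews{\cR_{\alpha,x,i}}\subseteq\rews{\cR}$, Part 2 gives the stronger inclusion $\rew{\cR_{\alpha,x,i}}\subseteq\rew{\cR}\subseteq\rews{\cR}$, so a sequence $s=s_0\rew{\cR_{\alpha,x,i}}s_1\rew{\cR_{\alpha,x,i}}\cdots\rew{\cR_{\alpha,x,i}}s_m=t$ is turned step-by-step into $s_k\rew{\cR}s_{k+1}$, hence $s\rews{\cR}t$; again, reflexivity covers $m=0$.

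There is no real obstacle here: all the technical work is already contained in Proposition \ref{PropInliningGTRSs}, whose proof (deferred to the appendix) must handle the variable-condition \eqref{LblVariableRestrictionInlining} to ensure that the substitution $\sigma=\{x\mapsto s_i\}$ faithfully mimics the effect of first rewriting $\sigma(s_i)$ to $x$ and then applying $\alpha$. Once that proposition is granted, the corollary is a one-line closure argument, and the statement that the two relations \emph{coincide} is exactly the conjunction of the two inclusions just established.
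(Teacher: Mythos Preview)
Your proposal is correct and is precisely the intended argument: the paper states the corollary without proof, treating it as an immediate consequence of Proposition~\ref{PropInliningGTRSs} via reflexive-transitive closure, which is exactly what you do.
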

Corollary \ref{CoroInliningGTRSs} entails that confluence of $\cR$ and $\cR_{\alpha,x,i}$ coincide.
By Proposition \ref{PropInliningGTRSs}, local (resp.\ strong)
confluence of $\cR$ implies local (resp.\ strong)
confluence of $\cR_{\alpha,x,i}$.
In general, though, the opposite direction does \emph{not} hold.

\begin{example}
\label{ExHindleyCTRS}
Consider the following O-CTRS $\cR$:
\begin{eqnarray}
\fS{b} & \rew{} & \fS{a}\label{ExHindleyCTRS_rule1}\\
\fS{b} & \rew{} & x \IF \fS{c}\cto x\label{ExHindleyCTRS_rule2}\\
\fS{c} & \rew{} & \fS{b}\label{ExHindleyCTRS_rule3}\\
\fS{c} & \rew{} & \fS{d}\label{ExHindleyCTRS_rule4}
\end{eqnarray}
Note that $\cR$ is \emph{not} locally confluent:
\[
\fS{a}\leftrew{(\ref{ExHindleyCTRS_rule1})} \fS{b}\rew{(\ref{ExHindleyCTRS_rule2})}\fS{d}
\]
because the conditional part of (\ref{ExHindleyCTRS_rule2}) is satisfied by
$\fS{c}\rews{\cR}\fS{d}$.
However, the \emph{inlining} of (\ref{ExHindleyCTRS_rule2}) yields the rule
$\fS{b}\rew{}\fS{c}$ which, together with
(\ref{ExHindleyCTRS_rule1}),
(\ref{ExHindleyCTRS_rule3}), and (\ref{ExHindleyCTRS_rule4})
form the well-known \emph{locally confluent} TRS
$\cR_{(\ref{ExHindleyCTRS_rule2}),x,1}=\{\fS{b} \rew{} \fS{a},\fS{b} \rew{} \fS{c},\fS{c} \rew{} \fS{b},\fS{c} \rew{} \fS{d}\}$.
\end{example}
As for $\ProcSimplify$, with $\ProcInlining$ we assume that all rules in the input system $\cR$ have
been inlined as much as possible to obtain $\cR'$. Then we have
\[\begin{array}{rcl}
\ProcInlining(\CRproblem{\cR}) & = & \{\CRproblem{\cR'}\}\\
\ProcInlining(\WCRproblem{\cR}) & = & \{\WCRproblem{\cR'}\}\\
\ProcInlining(\SCRproblem{\cR}) & = & \{\SCRproblem{\cR'}\}
\end{array}\]
By Corollary \ref{CoroInliningGTRSs}, $\ProcInlining$ is sound and complete for confluence problems
$\CRproblem{\cR}$.
By Proposition \ref{PropInliningGTRSs}, $\ProcInlining$ is \emph{complete} (but in general \emph{not sound}, see Example \ref{ExHindleyCTRS}) for local (resp.\ strong) confluence problems
$\WCRproblem{\cR}$ ($\SCRproblem{\cR}$).

\subsection{Modular decomposition: $\ProcModularDecomp$}
\label{SectionModularDecomposition}

The \emph{decomposition} of a confluence problem $\CRproblem{\cR}$ into
 two problems $\CRproblem{\cR_1}$ and $\CRproblem{\cR_2}$,
 by splitting up the input \gtrs{} as $\cR=\cR_1\cup\cR_2$ for appropriate
 components or \emph{modules} $\cR_1$ and $\cR_2$,
can be useful in breaking down confluence problems into smaller ones.
On this basis, in this section we discuss a processor $\ProcModularDecomp$
implementing this approach.
Soundness and completeness of $\ProcModularDecomp$ can be proved
by using existing results about
\emph{modularity} of (local, strong) confluence.
As modularity of \gtrs{s} has not been investigated yet,
in this section we focus on TRSs (as particular \gtrs{s}), see \cite{Gramlich_ModularityInTermRewritingRevisited_TCS12}
and also \cite[Section 8]{Ohlebusch_AdvTopicsTermRew_2002}, which we follow here.
Our discussion would easily generalize to CTRSs, for which a number of modularity
results for confluence are also available, see \cite{Middeldorp_ModularPropertiesOfConditionalTermRewritingSystems_IC93}.
In general, a property $\cP$ of rewriting-based systems is \emph{modular}
if for all systems $\cR_1$ and $\cR_2$ satisfying $\cP$,
the union $\cR$ of $\cR_1$ and $\cR_2$ also satisfies $\cP$, see \cite[Definition 8.1.1]{Ohlebusch_AdvTopicsTermRew_2002}.
Usually, properties can be proved modular only if  $\cR_1$ and
$\cR_2$ fulfill
particular \emph{combination} conditions that are parameterized by $\fS{Comb}$:
we write $\combinationOfModules{\cR_1}{\cR_2}$ to express that
 $\cR_1$ and $\cR_2$ satisfy the requirements of a particular combination
$\fS{Comb}$ of modules.
\begin{definition}[Modularity of confluence]
\label{DefModularityOfConfluence}
(Local, Strong) Confluence is called \emph{modular with respect to a given combination
$\fS{Comb}$} of TRSs (\emph{$\fS{Comb}$-modular} for short)
if for all TRSs
$\cR_1=(\Symbols_1,R_1)$ and $\cR_2=(\Symbols_2,R_2)$
satisfying the condition $\combinationOfModules{\cR_1}{\cR_2}$, the following holds:
\emph{if} both $\cR_1$ and $\cR_2$ are (locally, strongly) confluent, then the union
$\cR_1\cup\cR_2=(\Symbols_1\cup\Symbols_2,R_1\cup R_2)$
 is also (locally, strongly) confluent.
\end{definition}
For TRSs $\cR$, processor $\ProcModularDecomp$ tries to find such a decomposition:
\[\begin{array}{rcl}
\ProcModularDecomp(\CRproblem{\cR}) & = & \{\CRproblem{\cR_1},\CRproblem{\cR_2}\}\\
\ProcModularDecomp(\WCRproblem{\cR}) & = & \{\WCRproblem{\cR_1},\WCRproblem{\cR_2}\}\\
\ProcModularDecomp(\SCRproblem{\cR}) & = & \{\SCRproblem{\cR_1},\SCRproblem{\cR_2}\}
\end{array}
\]
if $\cR=\cR_1\cup\cR_2$, $\combinationOfModules{\cR_1}{\cR_2}$ holds and
(local, strong) confluence is $\fS{Comb}$-modular.
This definition guarantees \emph{soundness} of $\ProcModularDecomp$ on (local, strong) confluence problems as it is implied by the modularity of the corresponding property.
Thus,  $\ProcModularDecomp$ is \emph{sound} on (local, strong) confluence problems
if $\combinationOfModules{\cR_1}{\cR_2}$ holds and
(local, strong) confluence is $\fS{Comb}$-modular.

\subsubsection{Modular combinations and modularity results for \trs{s}}
\label{SecModularityResults}

In the literature, a number of \emph{combinations}
$\combinationOfModules{\cR_1}{\cR_2}$
of TRSs $\cR_1$ and $\cR_2$
have been considered to prove modularity.
In particular,
\begin{itemize}
\item \emph{disjoint} combinations
(where $\cR_1$ and $\cR_2$ share no function symbol \cite{Toyama_OnTheCRPropDirSumTRS_JACM87}),
\item \emph{constructor-sharing} combinations
(where $\cR_1$ and $\cR_2$ may share constructor symbols only \cite{KurOhu_ModularityOfSimpleTerminationOfTermRewritingSystemsWithSharedConstructors_TCS92}),
\item \emph{composable} combinations
(where $\cR_1$ and $\cR_2$ may share constructor symbols and also defined symbols
provided that they also share the rules defining them \cite{MidToy_CompletenessOfCombinationsOfConstructorSystems_JSC93}).
\end{itemize}
See also  \cite[Definition 8.1.4]{Ohlebusch_AdvTopicsTermRew_2002} for definitions of all these
combinations of TRSs,
which we  refer as
$\disjointComb{\cR_1}{\cR_2}$,
$\constructorSharingComb{\cR_1}{\cR_2}$,
and
$\composableComb{\cR_1}{\cR_2}$, respectively.
Note that
\begin{eqnarray}
\disjointComb{\cR_1}{\cR_2}\Rightarrow\constructorSharingComb{\cR_1}{\cR_2}\Rightarrow\composableComb{\cR_1}{\cR_2}\label{LblHierarchyOfCombinations}
\end{eqnarray}
see \cite[Figure 8.2]{Ohlebusch_AdvTopicsTermRew_2002}.
In Table \ref{TableModularityResultsForTRSs} we show an excerpt of
the results displayed
in \cite[Table 8.1]{Ohlebusch_AdvTopicsTermRew_2002}
regarding modularity of confluence, local confluence, and strong confluence of TRSs.
In this table, we use the notion of \emph{layer-preserving TRS} (LP) \cite[Definition 5.5]{Ohlebusch_ModularPropertiesOfComposableTermRewritingSystems_JSC95}:
in a composable combination, i.e., $\composableComb{\cR_1}{\cR_2}$ holds,
let $\cB=\Symbols_1\cap\Symbols_2$ be the set of \emph{shared function symbols}
and, for $i\in\{1,2\}$, $\cA_i=\Symbols_i-\cB$ be the \emph{alien} symbols for $\cR_{3-i}$.
Let $i\in\{1,2\}$. Then, $\cR_i=(\Symbols_i,R_i)$ is
called \emph{layer-preserving} if for all $\lhsr\to\rhsr\in\cR_i$,
we have $\rootTerm(\rhsr)\in\cA_i$ whenever $\rootTerm(\lhsr)\in\cA_i$.
A \emph{constructor-sharing} union is \emph{layer-preserving} if $\cR_1$ and $\cR_2$ contain neither collapsing rules nor constructor-lifting rules (i.e., rules $\lhsr\to\rhsr$
such that $\rootTerm(\rhsr)$ is a shared constructor).

\begin{table}[!h]
\caption{Excerpt of \cite[Table 8.1]{Ohlebusch_AdvTopicsTermRew_2002}. Here L is \emph{linearity},
 LL is \emph{left-linearity}, and LP is \emph{layer preservation}}
\begin{center}
\scalebox{0.95}{
\begin{tabular}{llll}
Property & Disjoint union & Constructor-sharing & Composable\\
\hline
Confluence
&
\cite[Coro.\ 4.1]{Toyama_OnTheCRPropDirSumTRS_JACM87}
& +LP
\cite[Coro.\ 5.11]{Ohlebusch_OnTheModularityOfConfluenceOfConstructorSharingTRS_CAAP94}
& +LP
\cite{Ohlebusch_ModularPropertiesOfComposableTermRewritingSystems_PhD94}
\\
&& +LL \cite{RaoVui_OperationalAndSemanticEquivalenceBetweenRecursivePrograms_JACM80},
see \cite[Coro.\ 8.6.38(1)]{Ohlebusch_AdvTopicsTermRew_2002}
\\
Local Conf.\ &
\cite{Middeldorp_ModularPropertiesOfTermRewritingSystems_PhD90}
&
\cite{Middeldorp_ModularPropertiesOfTermRewritingSystems_PhD90}
&
\cite{Middeldorp_ModularPropertiesOfTermRewritingSystems_PhD90}
\\
Strong Conf. &
+L \cite{Huet_ConfReduc_JACM80}
&
+L \cite{Huet_ConfReduc_JACM80}
&
+L \cite{Huet_ConfReduc_JACM80}
\\
\hline
\end{tabular} }
\end{center}
\label{TableModularityResultsForTRSs}
\end{table}
\begin{remark}[Comments on Table \ref{TableModularityResultsForTRSs}]
\label{RemCommentsOnTableModularityResultsForTRSs}
As remarked in \cite[Sections 8.2.1 \& 8.6.3]{Ohlebusch_AdvTopicsTermRew_2002},
for composable combinations (hence for disjoint unions and constructor-sharing combinations, see (\ref{LblHierarchyOfCombinations})),
the following \emph{non interfering}  property \cite{Middeldorp_ModularPropertiesOfTermRewritingSystems_PhD90} holds:
\begin{eqnarray}
\CP(\cR_1\cup\cR_2) & = & \CP(\cR_1)\cup\CP(\cR_2)\label{LblCPsOfUnionIsUnionOfCPs}
\end{eqnarray}
that is,
the set of critical pairs of the union is the union of the critical pairs of the components.

\medskip
Accordingly, \cite{Ohlebusch_AdvTopicsTermRew_2002} makes the following observations that justify the last two rows in Table \ref{TableModularityResultsForTRSs}:
\begin{itemize}
\item
Middeldorp proved that \emph{local confluence is
modular} for
disjoint unions of TRSs, see, e.g.,
\cite[Theorem 2.4]{Middeldorp_ModularPropertiesOfConditionalTermRewritingSystems_IC93},
originally in \cite{Middeldorp_ModularPropertiesOfTermRewritingSystems_PhD90}.
Ohlebusch observes that local confluence is modular
for any combination of TRSs satisfying (\ref{LblCPsOfUnionIsUnionOfCPs}) \cite[page 249, penultimate paragraph]{Ohlebusch_AdvTopicsTermRew_2002}.
Thus, local confluence is modular for constructor-sharing and composable combinations too, see also \cite[Corollary 8.6.41(1)]{Ohlebusch_AdvTopicsTermRew_2002}.
\item  
As explained in the paragraph below \cite[Example 8.2.2]{Ohlebusch_AdvTopicsTermRew_2002}, the results about
\emph{modularity of strong confluence}, not explicit
in \cite{Huet_ConfReduc_JACM80}, are a consequence of
\cite[Lemma 3.2]{Huet_ConfReduc_JACM80} (\emph{A linear TRS is strongly closed iff it is strongly confluent}) and the \emph{non interfering}  property (\ref{LblCPsOfUnionIsUnionOfCPs}) for composable (hence disjoint, sharing constructor) combinations.
\end{itemize}
\end{remark}
\subsubsection{Soundness of $\ProcModularDecomp$}

As a consequence of the discussion in Section \ref{SecModularityResults},
(see Table \ref{TableModularityResultsForTRSs} and
Remark \ref{RemCommentsOnTableModularityResultsForTRSs}),
given $\cR$, $\cR_1$, and $\cR_2$ such that $\cR=\cR_1\cup\cR_2$,
$\ProcModularDecomp$ is \emph{sound} for
\begin{itemize}
\item
$\CRproblem{\cR}$
if
(i) $\disjointComb{\cR_1}{\cR_2}$ holds, or
(ii) $\constructorSharingComb{\cR_1}{\cR_2}$ holds and $\cR$ is \emph{left-linear},
or
(iii) $\constructorSharingComb{\cR_1}{\cR_2}$ or $\composableComb{\cR_1}{\cR_2}$ holds and $\cR_1$ and $\cR_2$ are \emph{layer preserving}.
\item
$\WCRproblem{\cR}$ if $\composableComb{\cR_1}{\cR_2}$ (and hence $\disjointComb{\cR_1}{\cR_2}$ or $\constructorSharingComb{\cR_1}{\cR_2}$) holds.
\item
$\SCRproblem{\cR}$ if
$\cR$ is linear, and $\disjointComb{\cR_1}{\cR_2}$ or $\constructorSharingComb{\cR_1}{\cR_2}$
or $\composableComb{\cR_1}{\cR_2}$ holds.
\end{itemize}

\begin{example}\label{Ex4_IWC23}
Consider the following TRS \cite[Example 4]{Oostrom_TheZpropertyForLeftLinearTermRewritingViaConvectiveContextSensitiveCompleteness_IWC23}:
\begin{eqnarray}
\fS{nats} & \to & \fS{from}(\fS{0})\label{Ex4_IWC23_rule1}\\
\fS{inc}(x\fS{:}y)& \to &  \fS{s}(x)\fS{:}\fS{inc}(y)\label{Ex4_IWC23_rule2}\\
\fS{hd}(x\fS{:}y) & \to &  x\label{Ex4_IWC23_rule3}\\
\fS{tl}(x\fS{:}y)  & \to &  y\label{Ex4_IWC23_rule4}\\
\fS{from}(x) & \to &  x\fS{:}\fS{from}(\fS{s}(x))\label{Ex4_IWC23_rule5}\\
\fS{inc}(\fS{tl}(\fS{from}(x))) & \to & \fS{tl}(\fS{inc}(\fS{from}(x)))\label{Ex4_IWC23_rule6}
\end{eqnarray}
With $\cR_1=\{(\ref{Ex4_IWC23_rule3})\}$ and
$\cR_2=\{(\ref{Ex4_IWC23_rule1}),(\ref{Ex4_IWC23_rule2}),(\ref{Ex4_IWC23_rule4}),
(\ref{Ex4_IWC23_rule5}),(\ref{Ex4_IWC23_rule6})\}$, with $\_\fS{:}\_$ the only shared constructor
symbol, $\constructorSharingComb{\cR_1}{\cR_2}$ holds.
Since, $\cR$ is \emph{left-linear},
$
\ProcModularDecomp(\CRproblem{\cR})  =  \{\CRproblem{\cR_1},\CRproblem{\cR_2}\}
$.
\end{example}

\subsubsection{Completeness of $\ProcModularDecomp$}

Regarding modularity of disjoint unions $\cR=\cR_1\cup\cR_2$,
where $\disjointComb{\cR_1}{\cR_2}$ holds,
\cite[Corollary 4.1]{Toyama_OnTheCRPropDirSumTRS_JACM87} proves that
\emph{$\cR$ is confluent iff both $\cR_1$ and $\cR_2$ are}.
This entails that $\ProcModularDecomp$ is \emph{complete} on confluence problems for disjoint unions.

\medskip
Also, $\ProcModularDecomp$ is \emph{complete} on
local confluence problems for disjoint unions: \cite[Theorem 2.4]{Middeldorp_ModularPropertiesOfConditionalTermRewritingSystems_IC93} establishes
modularity of local confluence
and, according to \cite[Definition 2.2]{Middeldorp_ModularPropertiesOfConditionalTermRewritingSystems_IC93},
local confluence is modular for a disjoint union of
TRSs if the following equivalence holds: $\cR_1\cup\cR_2$ is locally confluent iff both $\cR_1$ and $\cR_2$ are locally confluent.
The point is that, for disjoint unions, joinability of critical pairs in $\CP(\cR_i)$ for $i\in\{1,2\}$
cannot depend on reductions using $\cR_{3-i}$.
Since (\ref{LblCPsOfUnionIsUnionOfCPs}) holds for disjoint unions, it follows that local confluence of $\cR=\cR_1\cup\cR_2$ (i.e., joinability of all critical pairs in $\CP(\cR_1\cup\cR_2)$)
implies local confluence of both $\cR_1$ and $\cR_2$.
For composable combinations (hence for constructor sharing combinations), Ohlebusch provides a similar treatment in \cite{Ohlebusch_ModularPropertiesOfComposableTermRewritingSystems_JSC95}:
\cite[Proposition 5.3(1)]{Ohlebusch_ModularPropertiesOfComposableTermRewritingSystems_JSC95} establishes modularity of local confluence for composable combinations, and, for composable combinations,
$\cR_1\cup\cR_2$ is locally confluent iff both $\cR_1$ and $\cR_2$ are locally confluent
\cite[Definition 3.2]{Ohlebusch_ModularPropertiesOfComposableTermRewritingSystems_JSC95}.
Similarly, since strong confluence is characterized by the strong joinability of critical pairs,
$\ProcModularDecomp$ is also \emph{complete} on
strong confluence problems for disjoint unions.

\subsection{Processor for local/strong confluence problems: $\ProcHuetExtended$}
\label{SecProcessorForLocalStrongConfluenceProblems}

Processor $\ProcHuetExtended$ treats local and strong confluence problems
as (strong) joinability of conditional critical and variable pairs.
\paragraph{Local confluence.} Extended conditional critical pairs $\ECCP(\cR)$ enable the following characterizations of local confluence of
\gtrs{s} (and, in particular, of
\cstrs{s}, \ctrs{s}, \csctrs{s}, etc.), thus
extending the well-known result for TRSs by Huet \cite[Lemma 3.1]{Huet_ConfReduc_JACM80}.

\begin{theorem}[Local confluence of \gtrs{s}]
\label{TheoLocalConfluenceCSCTRSsWithExtendedCCPs}
Let $\cR$ be a \gtrs{}.
Then,
\begin{enumerate}
\item\label{TheoLocalConfluenceCSCTRSsWithExtendedCCPs_CSCTRS}
 $\cR$ is locally confluent iff each $\pi\in\ECCP(\cR)$ is joinable \cite[Theorem 62]{Lucas_LocalConferenceOfConditionalAndGeneralizedTermRewritingSystems_JLAMP24}.
\item\label{TheoLocalConfluenceCSCTRSsWithExtendedCCPs_SECSCTRS}
 If $\cR$ is an SE-\csctrs{} such that
 ($\dagger$) for all
$\langle s,t\rangle\IF x\rew{}x',\gencond\in\CVP(\cR)$,
and $u\cto v\in\gencond$, $x\notin\NVar{\mu}(u)\cup\NVar{\mu}(v)$,
then, $\cR$ is locally confluent iff each $\pi\in\pCCP(\cR)\cup\iCCP(\cR)$ is joinable
\cite[Corollary 63]{Lucas_LocalConferenceOfConditionalAndGeneralizedTermRewritingSystems_JLAMP24}.
\item\label{TheoLocalConfluenceCSCTRSsWithExtendedCCPs_CSTRS}
 If $\cR$ is a CS-TRS,
then $\cR$ is locally confluent iff each $\pi\in\CP(\cR,\mu)\cup\LHCP(\cR,\mu)$ is joinable
\cite[Theorem 30]{LucVitGut_ProvingAndDisprovingConfluenceOfContextSensitiveRewriting_JLAMP22}.
\end{enumerate}
\end{theorem}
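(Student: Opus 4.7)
The plan is to prove each of the three items by reducing local confluence of $\cR$ to the joinability of an appropriate set of conditional pairs, following the classical strategy of Huet's Critical Pair Lemma generalized to the \gtrs{} setting. In each case, the ``only if'' direction is essentially immediate: every pair $\pi$ in the listed set arises from an actual local peak at the root of some instance, so if $\cR$ is locally confluent, the two sides of that peak are joinable, and this lifts to joinability of $\pi$ via the mgu that produced it. The substance of the theorem therefore lies in the ``if'' direction.

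For item \ref{TheoLocalConfluenceCSCTRSsWithExtendedCCPs_CSCTRS}, I would fix an arbitrary local peak $s \leftrew{\cR} u \rew{\cR} s'$ using rules $\alpha_1 : \lhsr_1 \to \rhsr_1 \IF \gencond_1$ at active position $p_1$ and $\alpha_2 : \lhsr_2 \to \rhsr_2 \IF \gencond_2$ at active position $p_2$, with matching substitutions $\sigma_1, \sigma_2$ whose conditions are entailed by $\ol{\cR}$. The proof splits by the relative placement of $p_1$ and $p_2$:  (a) if the positions are parallel, then the two redexes are disjoint and each step can be reapplied after the other to close the diamond; (b) if $p_1 \le p_2$ with $q = p_2 - p_1 \in \Pos^\mu_\Symbols(\lhsr_1)$, then $\lhsr_1|_q$ and $\lhsr_2$ unify (after renaming) and the resulting peak is, up to context and substitution, an instance of an element of $\pCCP(\cR) \cup \iCCP(\cR)$; (c) if $p_1 \le p_2$ with $q$ pointing into a variable $x$ of $\lhsr_1$, then the peak is an instance of a pair in $\CVP(\cR)$, where the auxiliary condition $x \to x'$ records precisely the residual rewrite step that has to be performed on the variable's copies in $\rhsr_1$. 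In all three subcases, the hypothesis of joinability of the relevant pair, combined with the propagation rule $(\RulePropagation)_{f,i}$ of $\ol{\cR}$, yields joinability of $s$ and $s'$.

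For item \ref{TheoLocalConfluenceCSCTRSsWithExtendedCCPs_SECSCTRS}, I would argue that under hypothesis $(\dagger)$ every $\pi \in \CVP(\cR)$ is automatically joinable in the SE setting, so case (c) above contributes nothing new and only $\pCCP(\cR) \cup \iCCP(\cR)$ needs to be checked. The point is that $(\dagger)$ forbids the critical variable $x$ from appearing in frozen positions of any conditional term $u \cto v$, while the SE semantics interprets $\cto$ as $\rew{\cR}$-conversion, which is symmetric; thus the step $x \to x'$ occurring in the variable pair's condition can be absorbed on both sides of the conditions using $(\RuleSE_2)$ and $(\RuleSE_3)$ without obstruction from the context-sensitive restriction. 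For item \ref{TheoLocalConfluenceCSCTRSsWithExtendedCCPs_CSTRS}, specializing to unconditional CS-TRSs, one shows that a variable pair built from $x \in \Var^\mu(\lhsr)$ that also satisfies $x \notin \NVar{\mu}(\lhsr) \cup \NVar{\mu}(\rhsr)$ is joinable by parallel rewriting of all active occurrences of $x$ in $\rhsr$; the only potentially problematic variable pairs are therefore those in $\LHCP(\cR, \mu)$.

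The main obstacle I expect is the nested variable case under context-sensitive restrictions. When $\sigma_1(x)$ itself contains the inner redex, the variable $x$ may have several occurrences in $\rhsr_1$, some at active and some at frozen positions, and only the active copies admit the residual rewrite step. Handling this correctly is what forces the introduction of the CVP pair in the general case, and the $\LHCP$ refinement in the unconditional case; once the bookkeeping of active-versus-frozen copies of $x$ is made explicit via the $x \to x'$ condition, the argument in each of the three items reduces to an application of the hypothesis together with compatibility and propagation in $\ol{\cR}$.
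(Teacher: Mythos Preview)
The paper does not give its own proof of this theorem: each of the three items is imported verbatim from the cited sources (Theorem~62 and Corollary~63 of \cite{Lucas_LocalConferenceOfConditionalAndGeneralizedTermRewritingSystems_JLAMP24}, and Theorem~30 of \cite{LucVitGut_ProvingAndDisprovingConfluenceOfContextSensitiveRewriting_JLAMP22}), and the statement is simply used to justify soundness and completeness of $\ProcHuetExtended$. So there is nothing in the paper to compare your argument against directly.

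That said, your sketch is the expected Huet-style case analysis and is essentially the route taken in those cited papers. One point where your explanation is a bit loose is item~\ref{TheoLocalConfluenceCSCTRSsWithExtendedCCPs_SECSCTRS}: you claim that under $(\dagger)$ every $\pi\in\CVP(\cR)$ is ``automatically joinable'', but your justification only addresses why the modified substitution $\sigma'=\sigma[x\mapsto\sigma(x')]$ still satisfies the \emph{conditions} $\gencond$ (via symmetry of SE-conversion and $(\dagger)$). You do not say why the peak itself closes when $x$ also occurs at frozen positions of $\lhsr$ or $\rhsr$ --- in that situation $\sigma'(\lhsr)$ need not coincide with the term obtained after the inner step, so ``reapplying the rule'' is not immediate. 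The actual argument in \cite{Lucas_LocalConferenceOfConditionalAndGeneralizedTermRewritingSystems_JLAMP24} handles this by showing joinability of the CVP directly (using that the rule can be fired with $\sigma'$ and that the two reducts differ only at frozen copies of $x$, which are then reconciled), rather than by bypassing the CVP altogether. Your case~(c) bookkeeping in item~\ref{TheoLocalConfluenceCSCTRSsWithExtendedCCPs_CSCTRS} already contains the right idea; you just need to carry it through explicitly in the SE case rather than appealing to symmetry of $\cto$ alone.
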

Accordingly, for \gtrs{s} $\cR=(\Symbols,\SPredicates,\mu,H,R)$,
\[\begin{array}{r@{\:}c@{\:}l}
\ProcHuetExtended(\WCRproblem{\cR}) & = & \{\JOproblem{\cR,\pi_1},\ldots,\JOproblem{\cR,\pi_n}\}, \text{where } \\
  &&\hspace*{-10mm} \{\pi_1,\ldots,\pi_n\} =
\left \{
\begin{array}{ll}
\CP(\cR) & \text{if $\cR$ is a TRS}\\
\CP(\cR,\mu)\cup\LHCP(\cR,\mu) & \text{if $\cR$ is a CS-TRS}\\
\pCCP(\cR)\cup\iCCP(\cR) & \text{if $\cR$ is an SE-\csctrs{} satisfying ($\dagger$)}\\
\ECCP(\cR) & \text{otherwise}
\end{array}
\right .
\end{array}
\]
\begin{example}\label{ExCOPS_387_CS_CTRS_PHuet}
Consider the CTRS $\cR$ in Example \ref{ExCOPS_387_CS_CTRS} with $\ECCP(\cR)=\{(\ref{ExCOPS_387_CS_CTRS_pCCP}),(\ref{ExCOPS_387_CS_CTRS_CVP})\}$ (see Example \ref{ExCOPS_387_CS_CTRS_pCCPs_iCCPs_CVPs}).
Thus, we have $\ProcHuetExtended(\WCRproblem{\cR})  = \{\JOproblem{\cR,(\ref{ExCOPS_387_CS_CTRS_pCCP})},\JOproblem{\cR,(\ref{ExCOPS_387_CS_CTRS_CVP})}\}$.
\end{example}
\begin{example}[Continuing Examples \ref{ExCOPS_387_CS_CTRS} and \ref{ExCOPS_387_CS_CTRS_Rbot_NoECCPs}]
\label{ExCOPS_387_CS_CTRS_PHuet_Rbot_PKnuthBendix}
For the \csctrs{}
$\cR_\bot$ in Example \ref{ExCOPS_387_CS_CTRS},
$\ECCP(\cR_\bot)=\emptyset$. Thus,
$\ProcHuetExtended(\WCRproblem{\cR_\bot})=\emptyset$.
\end{example}
Note that,  by
Theorem \ref{TheoLocalConfluenceCSCTRSsWithExtendedCCPs}, $\ProcHuetExtended$ is \emph{sound and complete} on local confluence problems.

\paragraph{Strong confluence.} Linear TRSs whose critical pairs are \emph{strongly joinable} are strongly confluent,
\cite[Lemma 6.3.3]{BaaNip_TermRewAllThat_1998}.
Accordingly, for \emph{linear} TRSs $\cR$,
\[\begin{array}{r@{\:}c@{\:}l}
\ProcHuetExtended(\SCRproblem{\cR}) & = & \{\SJOproblem{\cR,\pi_1},\ldots,\SJOproblem{\cR,\pi_n}\}, \text{where }
  \{\pi_1,\ldots,\pi_n\} =
\CP(\cR)
\end{array}
\]
This processor is \emph{sound} and \emph{complete}
on strong confluence problems for linear TRSs: the existence of a
non-strongly-joinable conditional pair witnesses non-strong-joinabilty.

\medskip
\noindent
Table~\ref{TableUseOfCleansingAndModularProcessorsInTheConfluenceFramework}~summarizes the use of $\ProcExtraVars$, $\ProcSimplify$, $\ProcInlining$,
$\ProcModularDecomp$, and $\ProcHuetExtended$
 in the confluence framework.

\begin{table}[ht]
\vspace*{-4mm}
\caption{Use of $\ProcExtraVars$, $\ProcSimplify$, $\ProcInlining$, $\ProcModularDecomp$ and $\ProcHuetExtended$
 in the confluence framework}
\begin{center}
\scalebox{0.94}{
\begin{tabular}
{l@{~~}c@{~\:}c@{~\:}c@{~\:}c@{~\:}c@{~\:}c@{~\:}c@{~\:}c@{~\:}c@{~\:}c@{~\:}c@{~\:}c@{~\:}|c@{~\:}|c@{~\:}c}
Problem
& $\ProcExtraVars$
 & $\ProcSimplify$
 & $\ProcInlining$
 & $\ProcModularDecomp$
& $\ProcHuetExtended$
\\
 \hline
CR
& $\checkmark$
& $\checkmark$
& $\checkmark$
& $\checkmark$
 &
\\
WCR
& $\checkmark$
& $\checkmark$
& $\checkmark$
& $\checkmark$
& $\checkmark$
\\
SCR
& $\checkmark$
& $\checkmark$
& $\checkmark$
& $\checkmark$
& $\checkmark$
\\
 \hline
 TRSs
 & $\checkmark$
& $\checkmark$
&
& $\checkmark$
 & $\checkmark$
\\
 CS-TRSs
& $\checkmark$
& $\checkmark$
&
  &
& $\checkmark$
\\
CTRSs
& $\checkmark$
& $\checkmark$
& $\checkmark$
 &
& $\checkmark$
\\
CS-CTRSs
& $\checkmark$
& $\checkmark$
& $\checkmark$
 &
 & $\checkmark$
\\
\gtrs{s}
& $\checkmark$
& $\checkmark$
& $\checkmark$
 &
& $\checkmark$
 \\
 \hline
\end{tabular} }
\end{center}
{\small{
\noindent $\ProcExtraVars$, $\ProcSimplify$, and $\ProcHuetExtended$ are sound and complete. $\ProcInlining$ is complete, but it is sound on CR problems only.
$\ProcModularDecomp$ is sound;
if $\ProcModularDecomp$ is
used with \emph{disjoint unions} of TRSs,
then it is \emph{complete} on all problems it applies to;
if $\ProcModularDecomp$ is used with \emph{composable combinations} of TRSs (hence, disjoint unions and constructor-sharing combinations), then it is \emph{complete} on  WCR problems. } }
\label{TableUseOfCleansingAndModularProcessorsInTheConfluenceFramework}
\end{table}

\begin{table}[!hb]
\vspace*{-3mm}
\caption{Use of Processors for Confluence Problems in the confluence framework}
\begin{center}
\scalebox{0.95}{
\begin{tabular}
{l@{~~}c@{~\:}c@{~\:}c@{~\:}c@{~\:}c@{~\:}c@{~\:}c@{~\:}c@{~\:}c@{~\:}c@{~\:}c@{~\:}c@{~\:}|c@{~\:}|c@{~\:}c}
Problem
   & $\ProcTrU$
 & $\ProcTrUconf$
 & $\ProcOrthogonality$
 & $\ProcConfluence$
 & $\ProcLocalConfluence$
 & $\ProcStrongConfluence$
 & $\ProcCanJoinability$
 & $\ProcCnvJoinability$
 & $\ProcCanCSR$
  & $\ProcKnuthBendix$
 \\
 \hline
CR
 & $\checkmark$
 & $\checkmark$
 & $\checkmark$
 &
 & $\checkmark$
 & $\checkmark$
 & $\checkmark$
 & $\checkmark$
 & $\checkmark$
 & $\checkmark$
\\
WCR
 & $\checkmark$
 & $\checkmark$
 & $\checkmark$
 & $\checkmark$
 &
 & $\checkmark$
 & $\checkmark$
 & $\checkmark$
 & $\checkmark$
 &
\\
SCR
 &
&
 &
 & $\checkmark$
 & $\checkmark$
 &
 &
 &
 &
 &
 \\
 \hline
 TRSs
 &
 &
 & $\checkmark$
 & $\checkmark$
 & $\checkmark$
 & $\checkmark$
 & $\checkmark$
 & $\checkmark$
 & $\checkmark$
 & $\checkmark$
\\
 CS-TRSs
 &
 &
 & $\checkmark$
 & $\checkmark$
 & $\checkmark$
 & $\checkmark$
 &
 &
 &
 & $\checkmark$
\\
CTRSs
 & $\checkmark$
 & $\checkmark$
 & $\checkmark$
 & $\checkmark$
 & $\checkmark$
 & $\checkmark$
 &
 &
 &
 & $\checkmark$
\\
CS-CTRSs
 &
 &
 &
 & $\checkmark$
 & $\checkmark$
 & $\checkmark$
 &
 &
 &
 & $\checkmark$
\\
\gtrs{s}
 &
&
 &
 & $\checkmark$
 & $\checkmark$
 & $\checkmark$
 &
 &
 &
 & $\checkmark$
\\
\hline

Sound
 & $\checkmark$
 & $\checkmark$
 & $\checkmark$
 &
 &
 & $\checkmark$
 & $\checkmark$
 & $\checkmark$
 & $\checkmark$
 & $\checkmark$
\\
 Complete
 &
 &
 & $\checkmark$
 &
 & $\checkmark$
 &
 &
 &
 &
 & $\checkmark$
\end{tabular} }
\\[5pt]
$\ProcConfluence$ is \emph{sound} on WCR problems,
and \emph{complete} on SCR-problems.
\end{center}
\label{TableUseOfConfluenceProcessorsInTheConfluenceFramework}
\end{table}

\subsection{Processors for confluence problems}

In this section we discuss a number of processors to be used with confluence problems.
Processors $\ProcTrU$
and
$\ProcTrUconf$ treat confluence problems for \ctrs{s} $\cR$ by transforming them into
\trs{s} 
$\cU(\cR)$
and
$\Uconf(\cR)$ and then solving the corresponding confluence problem.
Following the results obtained by Huet for left-linear TRSs without critical pairs
\cite{Huet_ConfReduc_JACM80}, processor $\ProcOrthogonality$ exploits (variants of) \emph{orthogonality} investigated for \cstrs{s} and \ctrs{s} to treat confluence problems.
Processors $\ProcConfluence$, $\ProcLocalConfluence$ and $\ProcStrongConfluence$ relate confluence problems and local/strong confluence problems.
Processors $\ProcCanJoinability$ and $\ProcCnvJoinability$ translate confluence problems for \trs{s} into local confluence problems of \cstrs{s}.
Processor $\ProcCanCSR$ translates confluence problems for \trs{s} into confluence problems for \cstrs{s}.
Finally, for terminating \gtrs{s} $\cR$, $\ProcKnuthBendix$ either translates
confluence problems  into joinability problems of an appropriate set of conditional critical pairs, or else into a local confluence problem for $\cR$.
Table~\ref{TableUseOfConfluenceProcessorsInTheConfluenceFramework}~summarizes the use of processors for confluence problems in the confluence framework.

 \subsubsection{Confluence of terminating CTRSs as confluence of TRSs: $\ProcTrU$}
\label{SecProcTrU}

An oriented \ctrs{} $\cR$ is
\emph{deterministic} (\dctrs{}) if for every rule $\lhsr\to\rhsr\IF s_1\cto t_1,\ldots,s_n\cto t_n$  in $\cR$ and every $1\leq i\leq n$,
we have $\var(s_i)\subseteq\var(\lhsr)\cup\bigcup_{j=1}^{i-1}\Var(t_j)$
\cite[Def.\ 7.2.33]{Ohlebusch_AdvTopicsTermRew_2002}.
In the following, $\cU$ is the transformation in
\cite[Def.\ 7.2.48]{Ohlebusch_AdvTopicsTermRew_2002} that, given a  3-\dctrs{s}
$\cR=(\Symbols,R)$ obtains a \trs{} $\cU(\cR)$.
Each rule
$
\alpha:\lhsr\to\rhsr \IF s_1\cto t_1,\ldots,s_n\cto t_n\in R
$
is transformed into $n+1$ unconditional rules:
\begin{eqnarray*}
\lhsr &\to& U^\alpha_1(s_1,\vec{x}_1) \label{eq:ul}\\
U^\alpha_{i-1}(t_{i-1},\vec{x}_{i-1}) & \to & U^\alpha_i(s_i,\vec{x}_i) \qquad 2\leq
i \leq n \label{eq:ui}\\
U^\alpha_n(t_n,\vec{x}_n) & \to & r \label{eq:ur}
\end{eqnarray*}
where $U^\alpha_i$ are fresh new symbols and $\vec{x}_i$ are sequences
of
variables in
$\Var(\lhsr) \cup \Var(t_1) \cup \cdots \cup \Var(t_{i-1})
$
for $1\leq i\leq n$.
Unconditional rules remain unchanged.
Then, $\cU(\cR)=(\cU(\Symbols),\cU(R))$, where
$\cU(\Symbols)$ is  the signature $\Symbols$ extended with the new symbols
introduced by transformation $\cU$ and $\cU(R)$ is the new set of rules obtained from $R$ as explained above.
\begin{example}[Transformation $\cU$]
\label{ExTrU}
For the
\dctrs{}
$\cR=(\Symbols,R)$ with $R=\{\fS{a}\to\fS{b}\IF\fS{a}\cto\fS{b},\fS{a}\to\fS{b}\IF\fS{a}\cto\fS{c}\}$,
$\cU(R)$ consists of:
\begin{eqnarray}
\fS{a} & \to & U(\fS{a})\\
U(\fS{b}) & \to & \fS{b}\\
\fS{a} & \to &  U'(\fS{a})\\
U'(\fS{c}) & \to & \fS{b}
\end{eqnarray}
Note that $\cU(\Symbols)=\{\fS{a},\fS{b},\fS{c},U,U'\}$.
\end{example}
\begin{remark}[Simulating conditional rewriting by unconditional rewriting]
Starting from the pioneering work by Marchiori \cite{Marchiori_UnravelingsAndUltraProperties_ALP96},
several authors have investigated the ability of
transformations from \ctrs{s} $\cR$ to \trs{s} to \emph{simulate} conditional rewriting
by means of unconditional rewriting, see, e.g., \cite{GmeGraSch_SoundnessCondUnravelingCTRS_RTA12,NisSakSak_SoundnessOfUnravelingsForCTRSsViaUltraProperties_LMCS12} and the references therein.
It is  well-known that $\cU$ is \emph{simulation-complete}, i.e.,
$\rew{\cR}\:\subseteq\:\rews{\cU(\cR)}$ see \cite[Section 3]{NisSakSak_SoundnessOfUnravelingsForCTRSsViaUltraProperties_LMCS12};
however, $\cU$ is \emph{not} \emph{simulation-sound}:
there are terms $s,t\in\Terms$ such that $s\rews{\cU(\cR)}t$ but
$s\not\rews{\cR}t$, see, e.g., \cite[Example 3.3]{NisSakSak_SoundnessOfUnravelingsForCTRSsViaUltraProperties_LMCS12}.
\end{remark}
The following definition prepares Theorem \ref{TheoConfluenceOfCTRSsByUTransformation} below, which enables the
use of transformation $\cU$ in the confluence framework.

\begin{definition}
Let $\cR$ be a 3-\dctrs{}.
We say that $\cU$ preserves $\rew{\cR}$-irreducibility of a 3-\dctrs{}
$\cR=(\Symbols,R)$
if for all terms
$t\in\Terms$, if $t$ is $\rew{\cR}$-irreducible, then $t$ is $\rew{\cU(\cR)}$-irreducible.
\end{definition}
In general, this property does \emph{not} hold for all 3-\dctrs{s}. 
\begin{example}
Let $\cR$ and $\cU(\cR)$ as in Example \ref{ExTrU}.
Although term $\fS{a}$ is $\rew{\cR}$-irreducible, it is \emph{not}
$\rew{\cU(\cR)}$-irreducible as, e.g., $\fS{a}\rew{\cU(\cR)}U(\fS{a})$.
Thus, $\cU$ does \emph{not} preserve $\rew{\cR}$-irreducibility of $\cR$.
\end{example}
\begin{theorem}
\label{TheoConfluenceOfCTRSsByUTransformation}
Let $\cR$ be a 3-\dctrs{}.
If $\cR$ is terminating, $\cU$ preserves $\rew{\cR}$-irreducibility, and
$\cU(\cR)$ is
confluent, then $\cR$ is confluent.\footnote{This result fixes the buggy \cite[Theorem 8]{Lucas_ApplicationsAndExtensionsOfContextSensitiveRewriting_JLAMP21}.}
\end{theorem}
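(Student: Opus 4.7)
The plan is to prove confluence of $\cR$ by reducing any two divergent $\cR$-computations to a common $\cR$-normal form, exploiting the hypotheses in concert. Suppose $s \rews{\cR} t$ and $s \rews{\cR} t'$; we want to show $t$ and $t'$ are $\rew{\cR}$-joinable. The natural move is to pick $\cR$-normal forms of $t$ and $t'$ and show they must coincide.

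First, because $\cR$ is terminating, there exist $\rew{\cR}$-normal forms $u$ and $u'$ with $t \rews{\cR} u$ and $t' \rews{\cR} u'$. Thus we have $s \rews{\cR} u$ and $s \rews{\cR} u'$ with both $u$ and $u'$ $\rew{\cR}$-irreducible. By the standard simulation-completeness of the unraveling $\cU$ (i.e., $\rew{\cR} \subseteq \rews{\cU(\cR)}$, which holds for every 3-\dctrs), we can lift these to $s \rews{\cU(\cR)} u$ and $s \rews{\cU(\cR)} u'$.

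Now invoke confluence of $\cU(\cR)$: there is some $v$ with $u \rews{\cU(\cR)} v$ and $u' \rews{\cU(\cR)} v$. Here is where the irreducibility-preservation hypothesis does the crucial work: since $u$ and $u'$ are $\rew{\cR}$-irreducible, they are also $\rew{\cU(\cR)}$-irreducible, forcing $u = v = u'$. Therefore $t \rews{\cR} u = u' \leftrews{\cR} t'$, so $t$ and $t'$ are $\rew{\cR}$-joinable, which establishes confluence of $\cR$.

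The main obstacle in the argument is conceptual rather than technical: one must be careful that confluence of $\cU(\cR)$ alone would not suffice (since $\cU$ is not simulation-sound, a $\cU(\cR)$-joining point might be unreachable in $\cR$ and contain auxiliary $U^\alpha_i$-symbols). The irreducibility-preservation hypothesis is precisely what rules this out, by pinning the $\cU(\cR)$-join at the already-chosen $\cR$-normal forms. Termination is used only to guarantee existence of those normal forms; the entire load of bridging between $\cR$ and $\cU(\cR)$ is borne by simulation-completeness together with irreducibility preservation.
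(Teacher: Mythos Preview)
Your proof is correct and follows essentially the same approach as the paper's: use termination to reach $\rew{\cR}$-normal forms, lift via simulation-completeness of $\cU$ to $\cU(\cR)$, apply confluence of $\cU(\cR)$, and use irreducibility preservation to force the $\cU(\cR)$-join to coincide with both normal forms. The paper phrases this as a proof by contradiction and leaves the simulation-completeness step implicit (it is stated in the surrounding remark), whereas you present it directly and make that step explicit; otherwise the arguments are the same.
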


\begin{proof}
By contradiction.
If $\cR$ is not confluent, then  there are terms $s,t,t'\in\Terms$ such that
$s\rews{\cR}t$ and $s\rews{\cR}t'$ but $t$ and $t'$ are not $\to^*_\cR$-joinable.
By termination of $\rew{\cR}$, we can assume that $t$ and $t'$ are different
$\rew{\cR}$-irreducible terms, i.e.,
 $t\neq t'$.
By confluence of $\cU(\cR)$,
$t\rews{\cU(\cR)} u$ and $t'\rews{\cU(\cR)} u$ for some term $u$.
Since $\cU$ preserves $\rew{\cR}$-irreducibility,
both $t$ and $t'$ are $\rew{\cU(\cR)}$-irreducible.
Thus,
$t=u=t'$, a contradiction.
\end{proof}
For a practical use of Theorem \ref{TheoConfluenceOfCTRSsByUTransformation}, we introduce a sufficient condition for preservation of irreducibility.
In the following, given a sequence of expressions $\gencond$ and a set of variables $V$, $\groundingTermOn{\gencond}{V}$ is the \emph{partial grounding} of
$\gencond$ where all variables $x\in\Var(\gencond)\cap V$ are replaced by $c_x$.
The following result provides a \emph{sufficient condition}
for preservation of $\rew{\cR}$-irreducibility.
Intuitively, requiring \emph{feasibility} of $\groundingTermOn{\gencond}{\Var(\lhsr)}$
for all rules $\lhsr\to\rhsr\IF\gencond$ guarantees that no particular
 instantiation of variables due to pattern
matching against $\lhsr$ plays a role in the satisfaction of the conditional part $\gencond$
of the rule.

\begin{proposition}
\label{PropSufficientConditionForURpreservingRIrreducibility}
Let
$\cR=(\Symbols,R)$
be a 3-\dctrs{}.
If for all $\lhsr\to\rhsr\IF\gencond\in R$, $\groundingTermOn{\gencond}{\Var(\lhsr)}$ is $\rewtheoryOf{\cR}$-feasible,
then
$\cU$ preserves $\rew{\cR}$-irreducibility
\end{proposition}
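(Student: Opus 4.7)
The plan is to prove the contrapositive: if $t\in\Terms$ is $\rew{\cU(\cR)}$-reducible, then $t$ is $\rew{\cR}$-reducible. Consider a step $t\rew{\cU(\cR)}t'$ at some position $p$ using a rule $\rho$ of $\cU(\cR)$. Since $t$ contains no auxiliary symbol $U^\alpha_i$, the left-hand side of $\rho$ cannot contain one either, so $\rho$ is either an unconditional rule inherited from $\cR$---in which case the same step is an $\rew{\cR}$-step and we are done---or the ``entry'' rule $\lhsr\to U^\alpha_1(s_1,\vec{x}_1)$ associated with some conditional rule $\alpha:\lhsr\to\rhsr\IF s_1\cto t_1,\ldots,s_n\cto t_n$ of $\cR$. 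In the latter case $t|_p=\lhsr\sigma$ for some matcher $\sigma$, and the task becomes extending $\sigma$ to a substitution $\sigma^*$ with $s_i\sigma^*\rews{\cR}t_i\sigma^*$ for all $i$, so that $\alpha$ itself fires at $p$ in $\cR$.

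The feasibility hypothesis supplies a substitution $\tau$, whose relevant domain lies in $\Var(\gencond)\setminus\Var(\lhsr)$, such that each conditional atom of $\groundingTermOn{\gencond}{\Var(\lhsr)}$ becomes deducible in $\rewtheoryOf{\cR}$ under $\tau$; reading $\cto$ as reachability, this amounts to $\tau(\groundingTermOn{s_i}{\Var(\lhsr)})\rews{\cR}\tau(\groundingTermOn{t_i}{\Var(\lhsr)})$ for every $i$. Define $\sigma^*(x)=\sigma(x)$ for $x\in\Var(\lhsr)$ and, for $y\in\Var(\gencond)\setminus\Var(\lhsr)$, let $\sigma^*(y)$ be obtained from $\tau(y)$ by replacing each occurrence of $c_x$ (for $x\in\Var(\lhsr)$) with $\sigma(x)$. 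A direct computation confirms $s_i\sigma^*=\tau(\groundingTermOn{s_i}{\Var(\lhsr)})[c_x\mapsto\sigma(x)]$ and similarly for $t_i\sigma^*$, where $[\,\cdot\,]$ is the obvious homomorphic replacement on terms.

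The key step is a \emph{lifting lemma}: applying the replacement $[c_x\mapsto\sigma(x)]_{x\in\Var(\lhsr)}$ to every term of a $\rews{\cR}$-derivation again yields a $\rews{\cR}$-derivation. This holds because the constants $c_x$ introduced by grounding do not occur in any rule of $\cR$ and therefore behave as free variables for the purposes of $\rew{\cR}$, which is closed under arbitrary substitutions. Applying the lemma to the derivations supplied by feasibility yields $s_i\sigma^*\rews{\cR}t_i\sigma^*$ for all $i$; hence $\alpha$ applies in $\cR$ at position $p$ of $t$ via the matcher $\sigma^*$, contradicting the assumed $\rew{\cR}$-irreducibility of $t$.

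The main obstacle is the lifting lemma, whose clean formulation hinges on recognising that grounding is just a renaming of variables into inert symbols that no rewrite rule can ``see''. Once this viewpoint is made explicit, the lemma follows by a routine induction on the length of the derivation together with closure of $\rew{\cR}$ under substitutions. The related fact already used in the paper---that joinability of $s$ and $t$ is equivalent to joinability of $s^\downarrow$ and $t^\downarrow$ \cite{GutLucVit_ConfluenceOfConditionalRewritingInLogicForm_FSTTCS21}---provides precisely the ``(un)grounding'' argument required here.
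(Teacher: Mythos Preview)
Your proof is correct and follows essentially the same approach as the paper: argue (by contrapositive/contradiction) that any $\cU(\cR)$-redex in $t\in\Terms$ must come from an entry rule $\lhsr\to U^\alpha_1(\ldots)$, then use the feasibility hypothesis to extend the matcher into a substitution satisfying the conditional part of $\alpha$. Your version is in fact more careful than the paper's, which jumps directly from feasibility of $\varsigma(\groundingTermOn{\gencond}{\Var(\lhsr)})$ to satisfiability of $\sigma'(\gencond)$ without spelling out the ``lifting'' step you isolate as a lemma; making explicit that the fresh constants $c_x$ are inert for $\rew{\cR}$ and hence can be uniformly replaced by $\sigma(x)$ is exactly what is needed to bridge that gap.
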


\begin{proof}
If $t\in\Terms$ is $\rew{\cR}$-irreducible, we consider two cases:
First, if there is no $p\in\Pos(t)$ such that $t|_p=\sigma(\lhsr)$ for
some $\lhsr\to\rhsr\IF\gencond\in R$ and substitution $\sigma$,
then $t$ is  also $\rew{\cU(\cR)}$-irreducible, as the left-hand sides of rules
in $\cU(\cR)$ either coincide with those in $\cR$ or include a symbol in $\cU(\Symbols)-\Symbols$.
Otherwise,
since $t$ is $\rew{\cR}$-irreducible, $\sigma'(\gencond)$ must be $\rewtheoryOf{\cR}$-infeasible
for all substitutions $\sigma'$  such that $\sigma'(x)=\sigma(x)$ for all $x\in\Var(\lhsr)$.
However, $\groundingTermOn{\gencond}{\Var(\lhsr)}$ is $\rewtheoryOf{\cR}$-feasible, i.e., there is a substitution $\varsigma$ such that $\varsigma(\groundingTermOn{\gencond}{\Var(\lhsr)})$ holds.
Thus the substitution $\sigma'$ defined by
$\sigma'(x)=\sigma(x)$ for all $x\in\Var(\lhsr)$ and $\sigma'(y)=\varsigma(y)$ for all $y\notin\Var(\lhsr)$ also satisfies $\sigma(\gencond)$, a contradiction.
\end{proof}
\begin{example}
\label{ExUconfFailsAndURsucceeds}
Consider the following  3-\dctrs{} $\cR$:
\begin{eqnarray}
\fS{g}(x) & \to & x\label{ExUconfFailsAndURsucceeds_rule1}\\
\fS{f}(x) & \to & x  \IF  \fS{g}(x)\cto x\label{ExUconfFailsAndURsucceeds_rule2}
\end{eqnarray}
Since $\fS{g}(c_x)\cto c_x$ is clearly \emph{feasible} (we have $\ul{\fS{g}(c_x)}\rew{(\ref{ExUconfFailsAndURsucceeds_rule1})} c_x$),
$\cU$ preserves $\rew{\cR}$-irreducibility.
\end{example}
Processor $\ProcTrU$ transforms a confluence problem for a  terminating 3-\dctrs{}
$\cR$ into a
confluence problem for a TRS $\cU(\cR)$.
If $\cR$ is a terminating 3-\dctrs{} such that $\cU$ preserves $\rew{\cR}$-irreducibility,
then
\[\begin{array}{rcl}
\ProcTrU(\CRproblem{\cR})
& = & \{\CRproblem{\cU(\cR)}\}
\end{array}
\]
By relying on
Theorem \ref{TheoConfluenceOfCTRSsByUTransformation},
$\ProcTrU$ is \emph{sound}.
However, it is \emph{not} complete.

\begin{example}[$\ProcTrU$ is not complete]
\label{ExProcTrUisNotComplete}
The  3-\dctrs{} $\cR$ in Example \ref{ExTrU}
is
(locally) confluent: the rules are infeasible; thus,
the one-step reduction relation is \empty{empty}.
However, $\cU(\cR)$ defines a non-joinable peak $U(\fS{a})\leftrew{\cU(\cR)}\fS{a}\rew{\cU(\cR)}U'(\fS{a})$.
\end{example}

\subsubsection{Confluence of CTRSs as confluence of TRSs using an improved transformation: $\ProcTrUconf$}
\label{SecProcTrUconf}

Gmeiner, Nishida and Gramlich
\cite{GmeNisGra_ProvingConfluenceOfConditionalTermRewritingSystemsViaUnravelings_IWC2013}
introduced transformation $\Uconf$,
which can also be used in proofs of confluence of \emph{$3$-\dctrs{s}} in the confluence framework.
Each rule
$
\alpha:\lhsr\to\rhsr \IF s_1\cto t_1,\ldots,s_n\cto t_n
$
is transformed into $n+1$ unconditional rules
\cite[Definition 6]{GmeNisGra_ProvingConfluenceOfConditionalTermRewritingSystemsViaUnravelings_IWC2013}:
\begin{eqnarray*}
\lhsr &\to& U_{\lhsr,s_1}(s_1,\vec{x}_1) \label{eq:ul}\\
U_{\lhsr,s_1}(t_1,\vec{x}_{1}) & \to & U_{\lhsr,s_1,t_1,s_2}(s_2,\vec{x}_2)\\
\vdots
\\
U_{\lhsr,s_1,t_1,\ldots,s_{n}}(t_n,\vec{x}_n) & \to & r
\end{eqnarray*}
where
new symbols $U_{\lhsr,\ldots} $ depending on the the left-hand side
$\lhsr$ of  $\alpha$ and also on the terms occurring in the conditions of $\alpha$ are introduced.
For each $1\leq i\leq n$,
$\vec{x}_i$ is as in
transformation $\cU$.
Then, $\Uconf(\cR)=(\Uconf(\Symbols),\Uconf(R))$, where
$\Uconf(\Symbols)$ is  the signature $\Symbols$ extended with the new symbols
introduced by transformation $\Uconf$ and $\Uconf(R)$ is the new set of rules obtained from $R$.
\begin{example}[Transformation $\Uconf$]
\label{ExTrUconf}
For
$\cR=(\Symbols,R)$ in Example \ref{ExTrU},
$\Uconf(R)$ consists of the rules:
\begin{eqnarray}
\fS{a} & \to & U(\fS{a})\\
U(\fS{b}) & \to & \fS{b}\\
U(\fS{c}) & \to & \fS{b}
\end{eqnarray}
Compared with $\cU(\cR)$ in Example \ref{ExTrU}, note that $U'$ is missing thanks to the refined definition of $\Uconf$.
\end{example}
A \dctrs{} is \emph{weakly left-linear} if ``variables that occur more than once in the lhs of a conditional rule and the rhs’s of conditions should not occur at all in lhs’s of conditions or the rhs of the conditional rule'' \cite[Definition 3.17]{GmeGraSch_SoundnessCondUnravelingCTRS_RTA12}.
Processor $\ProcTrUconf$ transforms a confluence problem for a
3-\dctrs{} $\cR$ into a
confluence problem for a \trs{} $\Uconf(\cR)$,
where
$\Uconf$ is the transformation in
\cite[Definition 6]{GmeNisGra_ProvingConfluenceOfConditionalTermRewritingSystemsViaUnravelings_IWC2013}:
\[ \ProcTrUconf(\CRproblem{\cR})=
\{\CRproblem{\Uconf(\cR)}\} \]
if $\cR$ is a  weakly left-linear  3-\dctrs{}.
By  \cite[Theorem 9]{GmeNisGra_ProvingConfluenceOfConditionalTermRewritingSystemsViaUnravelings_IWC2013},
$\ProcTrUconf$ is sound; however, it is \emph{not} complete.

\begin{example}[$\ProcTrUconf$ is not complete]
\label{ExProcTrUisNotComplete}
For the 3-\dctrs{} $\cR=(\Symbols,R)$ with $R=\{\fS{a}\to\fS{b}\IF\fS{b}\cto\fS{a},\fS{a}\to\fS{b}\IF\fS{c}\cto\fS{a}\}$, we have
\[\cU(\cR)=\Uconf(\cR)=\{\fS{a}\to U(\fS{b}), U(\fS{a})\to \fS{b},\fS{a}\to U'(\fS{c}), U'(\fS{a})\to \fS{b}\}.\]
The rules of $\cR$ are infeasible and can be removed. Thus, $\rew{\cR}$ is \emph{empty},
hence confluent.
However, the peak
$U(\fS{b})\leftrew{\Uconf(\cR)}\fS{a}\rew{\Uconf(\cR)}U'(\fS{c})$ is  not joinable, as both $U(\fS{b})$ and $U'(\fS{c})$ are
$\Uconf(\cR)$-irreducible.
\end{example}
Note that $\cR$ in Example \ref{ExTrU} can be proved confluent using
$\Uconf(\cR)$: as shown in Example \ref{ExTrUconf}, $\Uconf(\cR)$ is orthogonal, hence confluent, which proves confluence of $\cR$ by \cite[Theorem 9]{GmeNisGra_ProvingConfluenceOfConditionalTermRewritingSystemsViaUnravelings_IWC2013}.
However, the following example shows that $\ProcTrU$ can be used to prove
confluence when
$\ProcTrUconf$ fails.

\begin{example}
\label{ExUconfFailsAndURsucceeds_check}
The 3-\dctrs{} $\cR$ in Example \ref{ExUconfFailsAndURsucceeds}
is clearly terminating and, as shown in the example, $\cU$ preserves $\rew{\cR}$-irreducibility.
The \trs{} $\cU(\cR)$:
\begin{eqnarray}
\fS{g}(x) & \to & x\label{ExUconfFailsAndURsucceeds_U_rule1}\\
\fS{f}(x) & \to & U(\fS{g}(x),x)\label{ExUconfFailsAndURsucceeds_U_rule2}\\
U(x,x) & \to & x\label{ExUconfFailsAndURsucceeds_U_rule3}
\end{eqnarray}
is terminating and has no critical pair.
Hence, $\cU(\cR)$ is confluent.
By Theorem \ref{TheoConfluenceOfCTRSsByUTransformation}, $\cR$ is confluent.
Note that $\cR$ is \emph{not} weakly left-linear.
Thus, $\Uconf(\cR)$ cannot be used to prove confluence of $\cR$.
\end{example}
Thus, $\ProcTrU$ and $\ProcTrUconf$ are complementary.
However, our experiments show that $\ProcTrUconf$ applies more frequently than $\ProcTrU$,
see Table \ref{TableUseOfProcessorsInExperiments}.

\subsubsection{Orthogonality: $\ProcOrthogonality$}
\label{SecOrthogonality}

A \gtrs{} $\cR$ is \emph{left-linear} if for all rules $\lhsr\to\rhsr\IF\gencond$,  the left-hand side $\lhsr$ is linear.
\begin{itemize}
\item A left-linear TRS $\cR$ whose critical pairs
are all trivial
is called
\emph{weakly orthogonal}.
For left-linear TRSs, Huet provided several results  \cite[Section 3.3]{Huet_ConfReduc_JACM80}
leading, in particular, to conclude that \emph{weakly orthogonal TRSs are confluent},
see \cite[Section
6.4]{BaaNip_TermRewAllThat_1998}.
\item A left-linear CS-TRS $(\cR,\mu)$ is $\mu$-orthogonal if $\CP(\cR,\mu)=\LHCP(\cR,\mu)=\emptyset$
\cite[Definition 35]{LucVitGut_ProvingAndDisprovingConfluenceOfContextSensitiveRewriting_JLAMP22}. By \cite[Corollary 36]{LucVitGut_ProvingAndDisprovingConfluenceOfContextSensitiveRewriting_JLAMP22},
$\mu$-orthogonal CS-TRSs are confluent.
\item A left-linear CTRS is (almost) \emph{orthogonal} if $\pCCP(\cR)=\emptyset$ (resp.\ $\pCCP(\cR)$ consists of \emph{trivial}
pairs $\langle t,t\rangle\IF\gencond$ with critical position $p=\toppos$)
\cite[Definition 7.1.10(1 \& 2)]{Ohlebusch_AdvTopicsTermRew_2002}.
By \cite[Theorem 7.4.14]{Ohlebusch_AdvTopicsTermRew_2002},\footnote{Originally in \cite[Theorem 4.6]{SuzMidIda_LevelConfluenceOfConditionalRewriteSystemsWithExtraVariablesInRightHandSides_RTA95}, this result concerns \emph{level-confluence}, which implies confluence.
}
orthogonal, properly oriented, and right-stable $3$-CTRS are confluent, where
\begin{itemize}
\item A $3$-CTRS $\cR$ is
\emph{properly oriented} if every rule $\lhsr\to\rhsr\IF s_1\cto t_1,\ldots,s_n\cto t_n$
satisfies: if $\Var(\rhsr)\notin\Var(\lhsr)$, then $\Var(s_i)\subseteq\Var(\lhsr)\cup\bigcup_{j=1}^{i-1}\Var(t_j)$ for all $1\leq i\leq n$.
\cite[Definition 7.4.13]{Ohlebusch_AdvTopicsTermRew_2002}.
\item A CTRS is \emph{right-stable} if for every rule $\lhsr\to\rhsr\IF s_1\cto t_1,\ldots,s_n\cto t_n\in\cR$
and for all $1\leq i\leq n$,
(a) $(\Var(\lhsr)\cup\bigcup_{j=1}^{i-1}\Var(s_j\cto t_j)\cup\Var(s_i))\cap\Var(t_i)=\emptyset$,
and (b) $t_i$ is either a linear constructor term or a ground $\cR_u$-irreducible term
\cite[Definition 7.4.8]{Ohlebusch_AdvTopicsTermRew_2002},
where $\cR_u$ is
obtained from the rules of $\cR$ by just dropping the conditional part: $\cR_u=\{\lhsr\to\rhsr\mid\lhsr\to\rhsr\IF\gencond\in\cR\}$.
\cite[Definition 7.1.2]{Ohlebusch_AdvTopicsTermRew_2002}.
\end{itemize}
By \cite[Corollary 7.4.11]{Ohlebusch_AdvTopicsTermRew_2002}, almost orthogonal and almost normal (i.e.,  right-stable and oriented
\cite[Definition 7.4.8(2)]{Ohlebusch_AdvTopicsTermRew_2002}) $2$-\ctrs{s} are confluent.
\end{itemize}
Thus,
we let (using the fact that confluence implies local confluence):
\[\ProcOrthogonality(\CRproblem{\cR})=\ProcOrthogonality(\WCRproblem{\cR})=\emptyset \text{ if $\cR$ is }
\left \{
\begin{array}{l}
\text{a weakly orthogonal TRS, or}\\
\text{a $\mu$-orthogonal CS-TRS, or}\\
\text{an almost orthogonal and almost normal}\\[-0.1cm]
\text{~~$2$-CTRS, or}\\
\text{an orthogonal, properly oriented, and right-stable}\\[-0.1cm]
\text{~~$3$-CTRS}
\end{array}
\right .
\]
In all these uses, $\ProcOrthogonality$ is \emph{sound} and (trivially) \emph{complete}.

\begin{example}[Continuing Example \ref{Ex4_IWC23}]
\label{Ex4_IWC23_PHuetLevy}
Since the TRS $\cR_1$ in Example \ref{Ex4_IWC23},  is orthogonal, we obtain $\ProcOrthogonality(\CRproblem{\cR_1})=\emptyset$.
\end{example}

\subsubsection{Confluence and local/strong confluence: $\ProcConfluence$, $\ProcLocalConfluence$, $\ProcStrongConfluence$}
\label{SecConfluenceAsLocalStrongConfluence}

Processors $\ProcConfluence$, $\ProcLocalConfluence$ and $\ProcStrongConfluence$ implement the
well-known relationships between confluence and local and strong confluence in Remark \ref{RemRelationshipBetweenProblems} from which soundness and completeness properties of these processors follow.

\paragraph{Local and strong confluence problems as confluence problems.}
The following processor transforms local confluence problems into confluence problems: for \gtrs{s} $\cR$,
\[
\ProcConfluence(\WCRproblem{\cR}) = \ProcConfluence(\SCRproblem{\cR}) = \{\CRproblem{\cR}\}.
\]
$\ProcConfluence$ is \emph{sound} on local confluence problems,
and \emph{complete} on strong confluence problems.

\paragraph{Confluence and strong confluence problems as local confluence problems.}
The following processors transforms confluence problems into local confluence problems: for \gtrs{s} $\cR$.
\[\begin{array}{rcl}
\ProcLocalConfluence(\CRproblem{\cR})  =  \ProcLocalConfluence(\SCRproblem{\cR})=\{\WCRproblem{\cR}\}.
\end{array}
\]
$\ProcLocalConfluence$ is \emph{complete}, but not \emph{sound}.

\paragraph{Confluence and local confluence problems as strong confluence problems.}

Strong confluence implies confluence and hence local confluence
(but not vice versa) \cite[Lemma 2.5]{Huet_ConfReduc_JACM80}, see also
\cite[Section 2.2]{Ohlebusch_AdvTopicsTermRew_2002}.
The following processor uses this fact: for \gtrs{s} $\cR$,
\[\begin{array}{rcl}
\ProcStrongConfluence(\CRproblem{\cR})=
\ProcStrongConfluence(\WCRproblem{\cR})=\{\SCRproblem{\cR}\}.
\end{array}
\]
$\ProcStrongConfluence$ is \emph{sound},
but \emph{not} complete.

\subsubsection{Confluence of a TRS as local confluence of a terminating CS-TRS: $\ProcCanJoinability$, $\ProcCnvJoinability$}
\label{SecConfluenceAsLocalConfluenceOfCSR}

Replacement maps can be used to prove confluence of a TRS $\cR$ by transforming
it the
into a CS-TRS $(\cR,\mu)$.
We consider two replacement maps:
\begin{itemize}
\item The \emph{canonical replacement map} $\muCan$ is \emph{the most restrictive replacement
map ensuring that the non-variable subterms of the left-hand sides of the rules of $\cR$ are all active}
\cite[Section 5]{Lucas_ContextSensitiveRewriting_CSUR20}.
\item The \emph{convective replacement map} $\muCnv$ is \emph{the most restrictive replacement map
that makes all critical positions $p$ of critical pairs $\langle\theta(\lhsr)[\theta(\rhsr')]_p,\theta(\rhsr)\rangle\in\CP(\cR)$ active}
\cite[Definition 3]{Oostrom_TheZpropertyForLeftLinearTermRewritingViaConvectiveContextSensitiveCompleteness_IWC23}.
\end{itemize}
Replacement maps $\mu$ that are \emph{less restrictive} than $\muCan$ (i.e., such that $\muCan(f)\subseteq\mu(f)$ for all symbols $f$ in the signature, written $\muCan\sqsubseteq\mu$) are collected in the set $\CRmaps{\cR}$
and similarly for $\muCnv$ with $\CnvRmaps{\cR}$.
Since $\muCan$ is not more restrictive than $\muCnv$,
i.e., $\muCnv\sqsubseteq\muCan$,
we have $\CRmaps{\cR}\subseteq\CnvRmaps{\cR}$.
\begin{example}\label{Ex42_COPS}
Consider the following TRS (COPS/42.trs):
\begin{eqnarray}
\fS{f}(\fS{g}(x)) & \to & \fS{f}(\fS{h}(x,x))\label{Ex42_COPS_rule1}\\
\fS{g}(\fS{a}) & \to & \fS{g}(\fS{g}(\fS{a}))\\
\fS{h}(\fS{a},\fS{a}) & \to & \fS{g}(\fS{g}(\fS{a}))
\end{eqnarray}
We have $\muCan=\muTop$.
The
critical position $1\in\Pos(\lhsr_{(\ref{Ex42_COPS_rule1})})$
of the only critical pair
$\langle\fS{f}(\fS{g}(\fS{g}(\fS{a}))),\fS{f}(\fS{h}(\fS{a},\fS{a}))\rangle$  becomes \emph{active} by just letting
$\muCnv(\fS{f})=\{1\}$ and $\muCnv(p)=\emptyset$ for any other symbol $p$.
\end{example}

\begin{example}\label{Ex4_IWC23_muCan_muCnv}
Consider the TRS $\cR$ in Example \ref{Ex4_IWC23}.We have:
\[\begin{array}{lll}
\muCan(\fS{inc})=\muCan(\fS{hd})=\muCan(\fS{tl})=\{1\}, & \muCan(\fS{\fS{:}})=\muCan(\fS{from})=\muCan(\fS{s})=\emptyset.
\end{array}
\]
Also, with rule (\ref{Ex4_IWC23_rule6}), i.e.,
$\fS{inc}(\fS{tl}(\fS{from}(x))) \to \fS{tl}(\fS{inc}(\fS{from}(x)))$,
critical position $p=1.1$ in the left-hand side $\fS{inc}(\fS{tl}(\fS{from}(x)))$
of the rule,
and rule (\ref{Ex4_IWC23_rule5}), i.e.,
$\fS{from}(x) \to  x\fS{:}\fS{from}(\fS{s}(x))$,
we obtain a critical pair
\begin{eqnarray}
\langle\fS{inc}(\fS{tl}(x\fS{:}\fS{from}(\fS{s}(x)))),\fS{tl}(\fS{inc}(\fS{from}(x)))\rangle\label{Ex4_IWC23_CP}
\end{eqnarray}
which is the only critical pair in $\CP(\cR)$.
We only need to make the arguments of $\fS{inc}$ and $\fS{tl}$ active to
guarantee that $p$ is active in $\fS{inc}(\fS{tl}(\fS{from}(x)))$. Thus, we have:
\[\begin{array}{lll}
\muCnv(\fS{inc})=\muCnv(\fS{tl})=\{1\}, & \muCnv(\fS{\fS{:}})=\muCnv(\fS{from})=\muCnv(\fS{hd})=\muCnv(\fS{s})=\emptyset.
\end{array}
\]
\end{example}
A CS-TRS $(\cR,\mu)$ is
\begin{itemize}
\item \emph{level-decreasing} if for all rules
$\lhsr\to\rhsr$ in $\cR$, the level of each variable in $\rhsr$ does not exceed its
level in $\lhsr$; the \emph{level}  $\level_\mu(t,p)$ of an occurrence
of variable $x$ at position $p$ in a term $t$ is obtained by adding the number of
frozen arguments that
are traversed from the root to the occurrence $t|_p=x$ of the variable.
Then, $\level_\mu(t,x)$ is the maximum level to which $x$ occurs in $t$
\cite[Definition 1]{GraLuc_GenNewmanLemma_RTA06}.
\item
\emph{$0$-preserving} \cite[page 6]{Oostrom_TheZpropertyForLeftLinearTermRewritingViaConvectiveContextSensitiveCompleteness_IWC23}
if for all rules $\lhsr\to\rhsr\in\cR$, if a variable occurs active in the left-hand side $\lhsr$ of rules $\lhsr\to\rhsr$,
then all its occurrences  in  $\rhsr$ are also active, i.e., $\Var^\mu(\lhsr)\cap\NVar{\mu}(\rhsr)=\emptyset$.
For left-linear TRSs, this property coincides with $\lhsr\to\rhsr$ having \emph{left-homogeneous $\mu$-replacing variables}
\cite[Section 8.1]{Lucas_ContextSensitiveRewriting_CSUR20}, written $\LHRV(\cR,\mu)$,
see, e.g., \cite[Proposition 3]{LucVitGut_ProvingAndDisprovingConfluenceOfContextSensitiveRewriting_JLAMP22}.
\end{itemize}
\begin{example}\label{Ex4_IWC23_NoLevelDecreasing_LHRV}
Consider the TRS $\cR$ in Example \ref{Ex4_IWC23}. 
\begin{itemize}
\item With $\mu=\muCan$, $\cR$ is \emph{not} level-decreasing, as for rule (\ref{Ex4_IWC23_rule5}),
\[\level_\mu(\lhsr_{(\ref{Ex4_IWC23_rule5})},x)=\level_\mu(\fS{from}(x),x)=0<2=\level_\mu(x\fS{:}\fS{from}(\fS{s}(x)),x)=\level_\mu(\rhsr_{(\ref{Ex4_IWC23_rule5})},x).\]
\item
With respect to both $\muCnv$ and $\muCan$, $\cR$ is $0$-preserving as all variable occurrences in left-hand sides of rules are frozen.
\end{itemize}
\end{example}
Let $\cR=(\Symbols,R)$ be a left-linear TRS and
assume $\cR^\mu=(\Symbols,\mu,R)$ terminating for $\mu$ as given below. We
define
\[\begin{array}{rcll}
\ProcCanJoinability(\CRproblem{\cR}) & = & \{\WCRproblem{\cR^\mu}\} &
\text{if
$\mu\in\CRmaps{\cR}$ and
$\cR$ is level-decreasing.}\\
\ProcCnvJoinability(\CRproblem{\cR}) & = & \{\WCRproblem{\cR^\mu}\} &
\text{if $\mu\in\CnvRmaps{\cR}$ and
$\LHRV(\cR,\mu)$ holds.
}
\end{array}
\]
\begin{example}[Continuing Example \ref{Ex4_IWC23}]
\label{Ex4_IWC23_useOfP_CnvJ}
Since
$\cR_2$ in Example \ref{Ex4_IWC23} is left-linear and $\LHRV(\cR_2,\muCnvOf{\cR_2})$ holds,
we obtain $\ProcCnvJoinability(\CRproblem{\cR_2})=\{\WCRproblem{\cR_2^\mu}\}$,
where $\cR_2^\mu=(\cR_2,\muCnvOf{\cR_2})$.
\end{example}
By \cite[Theorem 2]{GraLuc_GenNewmanLemma_RTA06}
(resp.\
\cite[Corollary 13]{Oostrom_TheZpropertyForLeftLinearTermRewritingViaConvectiveContextSensitiveCompleteness_IWC23}),
$\ProcCanJoinability$ (resp.\
$\ProcCnvJoinability$) is \emph{sound}.
However, they are \emph{not} complete.

\begin{example}[$\ProcCanJoinability$ and $\ProcCnvJoinability$ are not complete]
\label{ExProcCanJoinabilityIsNotComplete}
For $\cR$ consisting of the rules
\begin{eqnarray}
\fS{a} & \to & \fS{b}\label{ExProcCanJoinabilityIsNotComplete_rule1}\\
\fS{c} & \to & \fS{d}(\fS{a})\label{ExProcCanJoinabilityIsNotComplete_rule2}\\
\fS{c}& \to & \fS{d}(\fS{b})\label{ExProcCanJoinabilityIsNotComplete_rule3}
\end{eqnarray}
we have
$\muCan=\muCnv=\muBot$.
In particular, $\muCnv(\fS{d})=\emptyset$.
The (only) critical pair
\begin{eqnarray}
\langle\fS{d}(\fS{a}),\fS{d}(\fS{b})\rangle\label{LblExProcCanJoinabilityIsNotComplete_CP}
\end{eqnarray}
is \emph{not} $(\cR,\muBot)$-joinable,
hence $(\cR,\muBot)$ is \emph{not} locally confluent.
However,
$\cR$ is confluent as it is terminating and $\pi$ is clearly $\cR$-joinable.
\end{example}

\begin{remark}[$\ProcCnvJoinability$ subsumes $\ProcCanJoinability$]
Since $\muCnv\sqsubseteq\muCan$, and level-decreasingness implies the \LHRV{} property
(see \cite[page 72]{GraLuc_GenNewmanLemma_RTA06}), and hence $0$-decreasingness, all uses of $\ProcCanJoinability$ are covered by $\ProcCnvJoinability$.
In practice, though, this may depend on the specific choice(s) of $\mu$ when implementing the processor.
For instance, consider $\cR'$ consisting of rules
(\ref{ExProcCanJoinabilityIsNotComplete_rule1}),
(\ref{ExProcCanJoinabilityIsNotComplete_rule2}),
and
(\ref{ExProcCanJoinabilityIsNotComplete_rule3}),
together with
\begin{eqnarray}
\fS{d}(\fS{b}) & \to \fS{b}\label{ExProcCanJoinabilityIsNotComplete_addedRule}
\end{eqnarray}
For $\cR'$ and $\cR$ in Example \ref{ExProcCanJoinabilityIsNotComplete}, with
$\CP(\cR)=\CP(\cR')=\{(\ref{LblExProcCanJoinabilityIsNotComplete_CP})\}$,
we have
$\muCnv=\muCnvOf{\cR'}=\muBot$.
Still, (\ref{LblExProcCanJoinabilityIsNotComplete_CP}) is not $(\cR',\muBot)$-joinable.
However, $\muCanOf{\cR'}(\fS{d})=\{1\}$ now (due to rule (\ref{ExProcCanJoinabilityIsNotComplete_addedRule})) and hence (\ref{LblExProcCanJoinabilityIsNotComplete_CP}) is $(\cR',\muCanOf{\cR'})$-joinable.
Thus, $(\cR',\muCanOf{\cR'})$ is locally confluent and (by soundness of $\ProcCanJoinability$
and termination of $(\cR',\muCanOf{\cR'})$),
$\cR$ is confluent.
Typically, an implementation of $\ProcCnvJoinability$ would try $\muCnv$ \emph{only}, although
$\muCan\in\CnvRmaps{\cR}$ is a possible choice as well.
\end{remark}

\subsubsection{Confluence of a TRS as confluence of canonical \csr: $\ProcCanCSR$}
\label{SecConfluenceAsCanonicalConfluenceOfCSR}

By relying on \cite[Corollary 8.23]{Lucas_ContextSensitiveRewriting_CSUR20},
processor $\ProcCanCSR$ transforms a confluence problem $\CRproblem{\cR}$ for a TRS $\cR$
into a confluence problem for a CS-TRS:
\[
\ProcCanCSR(\CRproblem{\cR})= \{\CRproblem{\cR^\mu}\}
\]
if $\cR$ is a left-linear and \emph{normalizing} TRS (i.e., every term has a normal form),
and  $\cR^\mu=(\cR,\mu)$ for some $\mu\in\CRmaps{\cR}$.
$\ProcCanCSR$ is \emph{sound} but \emph{not} complete, see \cite[Sections 8.4 \& 8.5]{Lucas_ContextSensitiveRewriting_CSUR20} for a
discussion.

\subsubsection{Confluence of terminating systems as local confluence: $\ProcKnuthBendix$}
\label{SecKnuthBendix}

The characterization of confluence of terminating TRSs as the joinability of all its critical pairs
is a landmark, early result by Knuth and Bendix \cite{KnuBen_SimpWordProbUnivAlg_CPAA70}.
Thus, we let
\[
\begin{array}{r@{\:}c@{\:}l}
\ProcKnuthBendix(\CRproblem{\cR}) & = & \left \{
\begin{array}{ll}
\{\mathit{JO}(\cR,\pi_1),\ldots,\mathit{JO}(\cR,\pi_n)\} &
\text{if $\{\pi_1,\ldots,\pi_n\}=\pCCP(\cR)\cup\iCCP(\cR)$}\\
&
\text{are overlays and $\cR$ is a
J-CTRS}\\
\{\WCRproblem{\cR}\} & \text{
otherwise}
\end{array}
\right .
\end{array}\]
if
$\cR$ is a terminating \gtrs{} \cite{Lucas_TerminationOfGeneralizedTermRewritingSystems_FSCD24}.
By relying on \cite[Theorem 4]{DerOkaSiv_ConfluenceOfConditionalRewriteSystems_CTRS87} for
the first case of the application of $\ProcKnuthBendix$ and Theorem \ref{TheoLocalConfluenceCSCTRSsWithExtendedCCPs} (plus Newman's Lemma) for the second one,
$\ProcKnuthBendix$ is  \emph{sound} and \emph{complete}.

\begin{example}[Continuing Example \ref{COPS409Ex}]
\label{COPS409Ex_KBproc}
 For the \emph{oriented} 1-CTRS $\cR'$ obtained by $\ProcSimplify$
 in Example \ref{COPS409Ex}, there is a proper \CCP:
\begin{eqnarray}
\langle \fS{h}(\fS{s}(x)), \fS{g}(\fS{s}(x))\rangle \IF  \fS{c}(\fS{g}(x)) \cto   \fS{c}(\fS{a}),\fS{c}(\fS{h}(x)) \cto \fS{c}(\fS{a})\label{COPS409Ex_pCCP}
\end{eqnarray}
which is an \emph{overlay}, as the critical position is $p=\toppos$.
Since $\cR'$ is a $1$-CTRS, we dismiss improper critical pairs.
Thus, we have
$\ProcKnuthBendix(\CRproblem{\cR'})=\{\JOproblem{\cR',(\ref{COPS409Ex_pCCP})}\}$.
\end{example}
\begin{example}[Continuing Examples \ref{ExCOPS_387_CS_CTRS} and \ref{ExCOPS_387_CS_CTRS_Rbot_NoECCPs}]
\label{ExCOPS_387_CS_CTRS_PHuet_Rbot_PKnuthBendix}
The \csctrs{} $\cR_\bot$ in Example \ref{ExCOPS_387_CS_CTRS}, is clearly terminating. Thus,
$\ProcKnuthBendix(\CRproblem{\cR_\bot})=\{\WCRproblem{\cR_\bot}\}$.
\end{example}

\subsection{Joinability processor: $\ProcJoinability$}
\label{SecJoinabilityProcessor}

 For \gtrs{s} $\cR$
and conditional pairs $\pi$, we have the following processor:
\[\begin{array}{rcl}
\ProcJoinability(\JOproblem{\cR,\pi}) & = & \left \{
\begin{array}{cl}
\emptyset & \text{if $\pi$ is joinable}\\
\mathsf{no} & \text{otherwise}
\end{array}
\right .
\\[0.5cm]
\ProcJoinability(\SJOproblem{\cR,\pi}) & = & \left \{
\begin{array}{cl}
\emptyset & \text{if $\pi$ is strongly joinable}\\
\mathsf{no} & \text{otherwise}
\end{array}
\right .
\end{array}
\]
For both uses $\ProcJoinability$ is \emph{sound} and \emph{complete}.
As in \cite{GutLucVit_ConfluenceOfConditionalRewritingInLogicForm_FSTTCS21,
Lucas_LocalConferenceOfConditionalAndGeneralizedTermRewritingSystems_JLAMP24,
LucVitGut_ProvingAndDisprovingConfluenceOfContextSensitiveRewriting_JLAMP22},
we often prove (non)joinability of terms and critical pairs by proving the \emph{(in)feasibility} of sequences
(see Section \ref{SecPreliminaries}).

\begin{proposition}\label{PropJoinabilityOfCCPsInGTRSs}
(cf.\ \cite[Proposition 21 \& Section 7.5]{Lucas_LocalConferenceOfConditionalAndGeneralizedTermRewritingSystems_JLAMP24}) Let $\cR$ be a \gtrs{} and $\pi:\langle s,t\rangle\IF\gencond$ be a conditional pair
with variables $\vec{x}$, and $z$ be a variable not in $\vec{x}$.
 If (i) $\sigma(c)$ is feasible for some substitution $\sigma$, and
(ii) $\sigma(c),\sigma(s)\to^*z,\sigma(t)\to^*z$ is infeasible (for some
$z\notin\Var(\sigma(c),\sigma(s),\sigma(t))$),
then $\pi$ is not joinable.
\end{proposition}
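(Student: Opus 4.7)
The plan is to prove the contrapositive in spirit: assume $\pi$ is joinable and derive that the sequence in condition (ii) must be feasible, contradicting the hypothesis. The proof proceeds by exhibiting a concrete instantiating substitution built out of $\sigma$ together with a witness of joinability.

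First I would unpack the definitions. By (i), feasibility of $\sigma(\gencond)$ provides a substitution $\tau_0$ with $\ol{\cR}\vdash\tau_0(\sigma(\fcond))$ for every atom $\fcond\in\gencond$. Composing, set $\tau=\tau_0\circ\sigma$, so that $\ol{\cR}\vdash\tau(\fcond)$ for every $\fcond\in\gencond$. Thus $\tau$ is a substitution satisfying the conditional part of $\pi$, and joinability of $\pi$ (as defined in Section~\ref{SecGeneralizedTermRewritingSystems}) would force the existence of a term $u$ with $\tau(s)\rews{\cR}u$ and $\tau(t)\rews{\cR}u$.

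Second, I would upgrade this joinability witness into a substitution witnessing feasibility of the sequence in (ii). Because $z\notin\Var(\sigma(\gencond),\sigma(s),\sigma(t))$, we may safely extend $\tau_0$ to a substitution $\tau_1$ by setting $\tau_1(z)=u$ and $\tau_1(y)=\tau_0(y)$ for every other variable $y$. Then $\tau_1(\sigma(\fcond))=\tau_0(\sigma(\fcond))$ for each $\fcond\in\gencond$, so those atoms remain satisfied; and by construction $\tau_1(\sigma(s)\rews{}z)=\tau(s)\rews{\cR}u$ and $\tau_1(\sigma(t)\rews{}z)=\tau(t)\rews{\cR}u$ both hold in $\ol{\cR}$. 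Hence $\tau_1$ witnesses $\ol{\cR}$-feasibility of the whole sequence $\sigma(\gencond),\sigma(s)\rews{}z,\sigma(t)\rews{}z$, contradicting (ii). Therefore $\pi$ cannot be joinable.

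The argument has no serious obstacle; the only delicate point is ensuring that the freshness of $z$ is used correctly so that extending $\tau_0$ at $z$ does not disturb the already-verified atoms of $\sigma(\gencond)$, and that the semantics of the reachability atom $u\rews{}v$ in $\ol{\cR}$ is exactly $\rews{\cR}$ (Definition~\ref{DefRewritingAsDeduction}), so that $\tau(s)\rews{\cR}u$ can be read off as $\ol{\cR}\vdash\tau_1(\sigma(s)\rews{}z)$. Both are immediate from the setup in Sections~\ref{SecPreliminaries} and \ref{SecGeneralizedTermRewritingSystems}.
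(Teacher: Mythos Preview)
Your argument is correct and is precisely the natural proof: use the feasibility witness for $\sigma(\gencond)$ together with joinability of $\pi$ to build a substitution (extending at the fresh variable $z$) that would make the whole sequence in (ii) feasible, yielding a contradiction. The paper does not supply its own proof of this proposition; it simply imports the result from \cite[Proposition 21 \& Section 7.5]{Lucas_LocalConferenceOfConditionalAndGeneralizedTermRewritingSystems_JLAMP24}, so there is nothing to compare against beyond noting that your proof is the expected one.
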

\begin{remark}
\label{RemPropJoinabilityOfConditionalPairsBySatisfiability_nonJoinability}
As discussed in \cite[Remark 22]{Lucas_LocalConferenceOfConditionalAndGeneralizedTermRewritingSystems_JLAMP24}, in order to use Proposition \ref{PropJoinabilityOfCCPsInGTRSs},
the following \emph{heuristics} are useful.
\begin{enumerate}
\item[H1] The simplest choice is the empty substitution, i.e., $\sigma=\varepsilon$ or its \emph{grounded} version $\sigma=\grounding{\varepsilon}=\{x\mapsto c_x\mid x\in\Variables\}$.
This is easily mechanizable, see Examples \ref{ExCOPS_387_CS_CTRS_pCCP_notJoinable}
and \ref{ExCOPS_387_CS_CTRS_pCCPs_iCCPs_CVPs_joinability}.
\item[H2] Use $\sigma=\{x\mapsto \lhsr^\downarrow, x'\mapsto\rhsr^\downarrow\}$ for some unconditional rule $\lhsr\to\rhsr$ in $\cR$
if $\pi$ is a conditional variable
pair $\langle s,t\rangle\IF x\rew{}x',\gencond$.
\item[H3] Choosing another substitution $\sigma$, usually trying to
fulfill the conditions in the proposition.
\end{enumerate}
\end{remark}

\begin{example}[Continuing Example \ref{ExCOPS_387_CS_CTRS}]
\label{ExCOPS_387_CS_CTRS_pCCP_notJoinable}
We use Proposition \ref{PropJoinabilityOfCCPsInGTRSs} to show
 non-joinability of (\ref{ExCOPS_387_CS_CTRS_pCCP}): the sequence
$\fS{s}(x') \rews{} \fS{s}(\fS{0})$ is feasible, but
\[\fS{s}(x') \rews{} \fS{s}(\fS{0}), \fS{f}(\fS{g}(x'))\rews{}z,\fS{s}(x')\rews{}z\]
is
infeasible:
the only substitution satisfying
$\fS{s}(x') \rews{} \fS{s}(\fS{0})$ is $\sigma=\{x'\mapsto\fS{0}\}$;
furthermore, in order to satisfy the last condition $\fS{s}(x')\rews{}z$ we need $\sigma=\{x'\mapsto\fS{0},z\mapsto\fS{0}\}$;
however, $\sigma(\fS{f}(\fS{g}(x')))=\fS{f}(\fS{g}(\fS{0}))$ is irreducible; thus
$\fS{f}(\fS{g}(\fS{0}))\rews{}\fS{0}$ is \emph{not} satisfied.
By Proposition \ref{PropJoinabilityOfCCPsInGTRSs},
the conditional critical pair (\ref{ExCOPS_387_CS_CTRS_pCCP}) is not joinable.
By Theorem \ref{TheoLocalConfluenceCSCTRSsWithExtendedCCPs}.(\ref{TheoLocalConfluenceCSCTRSsWithExtendedCCPs_CSCTRS}), $\cR$
in Example \ref{ExCOPS_387_CS_CTRS}
is not locally $\ol{\cR}$-confluent nor $\ol{\cR}$-confluent.
As for $\cR_\bot$, since $\pCCP(\cR_\bot)=\iCCP(\cR_\bot)=\CVP(\cR_\bot)=\emptyset$ (see Example \ref{ExCOPS_387_CS_CTRS_Rbot_NoECCPs}),
by Theorem \ref{TheoLocalConfluenceCSCTRSsWithExtendedCCPs}.(\ref{TheoLocalConfluenceCSCTRSsWithExtendedCCPs_CSCTRS}),
$\cR_\bot$ is locally confluent.
Since $\cR_\bot$ is terminating, by Newman's Lemma, $\cR_\bot$ is confluent.
\end{example}
The following results, originally established in \cite{GutLucVit_ConfluenceOfConditionalRewritingInLogicForm_FSTTCS21} for \ctrs{s}, can also be used with \gtrs{s}.

\begin{proposition}\label{PropJoinabilityOfCCPs}
Let $\cR$ be a \gtrs{} and $\pi:\langle s,t\rangle\IF\gencond$ be a conditional pair
with variables $\vec{x}$, and $z$ be a variable not in $\vec{x}$.
Then,
\begin{enumerate}
\item\label{PropJoinabilityOfCCPs_Coro17_FSTTCS21}
 (cf.\ \cite[Corollary 17]{GutLucVit_ConfluenceOfConditionalRewritingInLogicForm_FSTTCS21})
If $\ol{\cR}\vdash(\forall\vec{x})(\exists z)\:\gencond\Rightarrow s\to^*z\wedge t\to^*z$
holds,
then $\pi$ is joinable.
 \item\label{PropJoinabilityOfCCPs_Coro18_FSTTCS21}
 (cf.\ \cite[Corollary 18]{GutLucVit_ConfluenceOfConditionalRewritingInLogicForm_FSTTCS21})
If $s^\downarrow\to^*z\wedge t^\downarrow\to^*z$
is feasible,
then $\pi$ is joinable.
\item\label{PropJoinabilityOfCCPs_Prop22_FSTTCS21}
 (cf.\ \cite[Proposition 22]{GutLucVit_ConfluenceOfConditionalRewritingInLogicForm_FSTTCS21})
 If $\Var(\gencond)\cap\Var(s,t)=\emptyset$, then $\pi$ is joinable iff
 $\gencond$ is infeasible or $s^\downarrow\to^*z,t^\downarrow\to^*z$ is feasible.
\end{enumerate}
\end{proposition}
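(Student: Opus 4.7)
The proof proposal is to handle the three parts separately, leveraging the semantic definitions of joinability and feasibility, together with the fact (noted in the paragraph just before Definition \ref{DefConditionalPairsGTRS}) that $\rew{\cR}$-joinability of two terms coincides with $\rew{\cR}$-joinability of their grounded versions.

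For part~\ref{PropJoinabilityOfCCPs_Coro17_FSTTCS21}, I would unfold the definition of joinability of $\pi$: given an arbitrary substitution $\sigma$ with $\ol{\cR}\vdash\sigma(c)$, I must produce $u$ with $\sigma(s)\rews{\cR}u$ and $\sigma(t)\rews{\cR}u$. Instantiating the hypothesis $\ol{\cR}\vdash(\forall\vec{x})(\exists z)\,c\Rightarrow s\to^*z\wedge t\to^*z$ at $\sigma(\vec{x})$ and using $\sigma(c)$ via modus ponens gives exactly such a $u$ (the witness for $z$). Translating $\to^*$ back to $\rews{\cR}$ using Definition~\ref{DefRewritingAsDeduction} finishes this part.

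For part~\ref{PropJoinabilityOfCCPs_Coro18_FSTTCS21}, feasibility of $s^\downarrow\to^*z\wedge t^\downarrow\to^*z$ yields some term $u$ with $s^\downarrow\rews{\cR}u$ and $t^\downarrow\rews{\cR}u$, i.e.\ $s^\downarrow$ and $t^\downarrow$ are $\rew{\cR}$-joinable. By the observation that joinability of $s,t$ and joinability of $s^\downarrow,t^\downarrow$ coincide (cited just above from \cite[Proposition 6]{GutLucVit_ConfluenceOfConditionalRewritingInLogicForm_FSTTCS21}), $s$ and $t$ themselves are joinable. Stability of $\rews{\cR}$ under substitutions then gives, for every $\sigma$, that $\sigma(s)\rews{\cR}\sigma(u)$ and $\sigma(t)\rews{\cR}\sigma(u)$, so $\pi$ is joinable unconditionally on $\sigma$.

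For part~\ref{PropJoinabilityOfCCPs_Prop22_FSTTCS21}, the hypothesis $\Var(c)\cap\Var(s,t)=\emptyset$ lets one decouple satisfaction of the conditions from the terms $s,t$. The ``if'' direction is immediate: infeasibility of $c$ makes $\pi$ vacuously joinable (trivial/infeasible pairs are joinable by the convention adopted before Definition~\ref{DefConditionalPairsGTRS}), and feasibility of $s^\downarrow\to^*z,t^\downarrow\to^*z$ yields joinability via part~\ref{PropJoinabilityOfCCPs_Coro18_FSTTCS21}. For the ``only if'' direction, assume $\pi$ is joinable and $c$ is feasible, say witnessed by some substitution $\sigma_0$ with $\ol{\cR}\vdash\sigma_0(c)$. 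Build $\sigma$ by setting $\sigma(x)=\sigma_0(x)$ for $x\in\Var(c)$ and $\sigma(y)=c_y$ for $y\in\Var(s,t)$; the disjointness hypothesis ensures this is well defined, and $\sigma(c)=\sigma_0(c)$ still holds while $\sigma(s)=s^\downarrow$ and $\sigma(t)=t^\downarrow$. Joinability of $\pi$ then provides $u$ with $s^\downarrow\rews{\cR}u$ and $t^\downarrow\rews{\cR}u$, which is exactly feasibility of $s^\downarrow\to^*z,t^\downarrow\to^*z$ (with $z\mapsto u$).

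The only delicate step is the construction of $\sigma$ in part~\ref{PropJoinabilityOfCCPs_Prop22_FSTTCS21}: one must verify that replacing the variables of $s,t$ by the fresh constants $c_y$ does not interfere with the deducibility of $\sigma(c)$, which is immediate from the disjointness assumption since $c$ mentions none of those variables. The remaining verifications are direct translations of the CTRS proofs in \cite{GutLucVit_ConfluenceOfConditionalRewritingInLogicForm_FSTTCS21}, using the GTRS inference system $\ol{\cR}$ of Section~\ref{SecGeneralizedTermRewritingSystems} in place of the CTRS one; since none of the three arguments inspects the internal structure of rewrite derivations beyond reflexivity, compatibility, and stability, the adaptation is routine.
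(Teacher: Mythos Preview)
Your proposal is correct and follows the natural line of argument. Note, however, that the paper does not actually give a proof of this proposition: it merely states the three items with explicit citations to \cite{GutLucVit_ConfluenceOfConditionalRewritingInLogicForm_FSTTCS21} (Corollaries~17, 18 and Proposition~22) and remarks that those results, originally proved for CTRSs, carry over to \gtrs{s}. Your sketch is essentially what such a carry-over proof looks like, and it matches the arguments in the cited paper.

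One minor notational slip in part~\ref{PropJoinabilityOfCCPs_Coro18_FSTTCS21}: the term you call $u$ is the common reduct of $s^\downarrow$ and $t^\downarrow$, hence ground (possibly containing constants $c_x$). After invoking the equivalence with joinability of $s$ and $t$, the common reduct of $s$ and $t$ is a different term (namely $u$ with each $c_x$ replaced back by $x$); it is to \emph{that} term that you should apply $\sigma$ in the stability step. This does not affect correctness, only the naming.
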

The tool \infChecker{} \cite{GutLuc_AutomaticallyProvingAndDisprovingFeasibilityConditions_IJCAR20}
can be used to automatically prove (in)feasibility of sequences.

\begin{example}[Continuing Example \ref{Ex4_IWC23_useOfP_CnvJ}]
\label{Ex4_IWC23_joinabilityCP}
The (unconditional) critical pair (\ref{Ex4_IWC23_CP}), i.e.,
$\langle\fS{inc}(\fS{tl}(x\fS{:}\fS{from}(\fS{s}(x)))),\fS{tl}(\fS{inc}(\fS{from}(x)))\rangle$,
obtained in Example \ref{Ex4_IWC23_useOfP_CnvJ} for $\cR_2$ in
Example \ref{Ex4_IWC23} is $\ol{\cR}_2^\mu$-joinable:
\[
\begin{array}{rcl}
\fS{inc}(\ul{\fS{tl}(x\fS{:}\fS{from}(\fS{s}(x)))}) & \rew{\cR_2^\mu} & \fS{inc}(\fS{from}(\fS{s}(x)))
\end{array}
\]
and
\[\fS{tl}(\fS{inc}(\ul{\fS{from}(x)}))
\rew{\cR_2^\mu}
\fS{tl}(\ul{\fS{inc}(x\fS{:}\fS{from}(\fS{s}(x))}))
\rew{\cR_2^\mu}
\ul{\fS{tl}(\fS{s}(x)\fS{:}\fS{inc}(\fS{from}(\fS{s}(x))))}
 \rew{\cR_2^\mu}
 \fS{inc}(\fS{from}(\fS{s}(x))).\]
Thus, we have $\ProcJoinability(\JOproblem{\cR_2^\mu,(\ref{Ex4_IWC23_CP})})=\emptyset $.
\end{example}

\begin{example}[Continuing Example \ref{ExCOPS_387_CS_CTRS_pCCPs_iCCPs_CVPs}]
\label{ExCOPS_387_CS_CTRS_pCCPs_iCCPs_CVPs_joinability}
Consider the O-CTRS $\cR$ in Example \ref{ExCOPS_387_CS_CTRS} and the conditional critical pair (\ref{ExCOPS_387_CS_CTRS_pCCP}), i.e.,
$\langle\fS{f}(\fS{g}(x')) , \fS{s}(x')\rangle \IF \fS{s}(x') \cto{} \fS{s}(\fS{0})$.
This conditional pair is \emph{not} joinable as the only way to satisfy the reachability condition $\fS{s}(x') \cto{} \fS{s}(\fS{0})$ in the
conditional part is using $\sigma=\{x'\mapsto\fS{0}\}$.
However, $\sigma(\fS{f}(\fS{g}(x')))=\fS{f}(\fS{g}(\fS{0})) , $ and $\sigma(\fS{s}(x'))=\fS{s}(\fS{0})$ are both irreducible.
Alternatively, using Proposition \ref{PropJoinabilityOfCCPsInGTRSs}
and heuristic H1 in Remark \ref{RemPropJoinabilityOfConditionalPairsBySatisfiability_nonJoinability},
the non-joinability of (\ref{ExCOPS_387_CS_CTRS_pCCP}) can be proved as the $\ol{\cR}$-\emph{infeasibility} of
the sequence
\begin{eqnarray}
\fS{s}(x') \rews{} \fS{s}(\fS{0}), \fS{f}(\fS{g}(x'))  \rews{} z, \fS{s}(x') \rews{} z
\end{eqnarray}
where $z$ is a fresh variable.
This can be proved by \infChecker{} (use the
input in Figure~~ \ref{FigExCOPS_387_CS_CTRS_pCCP_nonJoinable}).

\begin{figure}[!h]
\vspace*{2mm}
\begin{verbatim}
(PROBLEM INFEASIBILITY)
(CONDITIONTYPE ORIENTED)
(VAR x)
(RULES
  f(g(x)) -> x | x == s(0)
  g(s(x)) -> g(x)
)
(VAR x' z)
(CONDITION s(x') ->* s(0), f(g(x')) ->* z, s(x') ->* z
)
\end{verbatim}\vspace*{-5mm}
\caption{Non-joinability as infeasibility in Example \ref{ExCOPS_387_CS_CTRS_pCCPs_iCCPs_CVPs_joinability}}
\label{FigExCOPS_387_CS_CTRS_pCCP_nonJoinable}
\end{figure}
\end{example}

\begin{remark}[Simple methods for joinability]
\label{RemSimpleMethodsForJoinability}
The previous methods, which are based on translating joinability proofs into (in)feasibility proofs,
heavily rely on the use of \infChecker{} to solve them.
Since this is costly, some exploration  techniques for concluding joinability are used.
For \emph{unconditional} pairs $\pi:\langle s,t\rangle$,
\begin{enumerate}
\item[J0] If $s$ and $t$ are \emph{irreducible} and $s\neq t$, then
$\pi$ is \emph{not} joinable.
\end{enumerate}
Given an unconditional pair $\pi:\langle s,t\rangle$ or a feasible
conditional pair $\pi:\langle s,t\rangle\IF\gencond$,
\begin{enumerate}
\item[J1] If $s$ and $t$ are  both ground and \emph{irreducible} and $s\neq t$, then $\pi$ is \emph{not} joinable.
\item[J2] If $\successorsBy{\rew{\cR}}(s)\cap\successorsBy{\rew{\cR}}(t)\neq\emptyset$, then
$\pi$ is joinable.
\item[J3] If $\directSuccessorsBy{\rew{\cR}}(t)\cap \successorsBy{\rew{\cR}}(s)\neq\emptyset$ and $\directSuccessorsBy{\rew{\cR}}(s)\cap \successorsBy{\rew{\cR}}(t)\neq\emptyset$, then $\pi$ is strongly joinable.
\end{enumerate}
\end{remark}

\begin{remark}[Checking irreducibility in J0 and J1]
Dealing with TRSs $\cR$, we can check \emph{irreducibility} of a term $u$
by just showing that $u$ contains
\emph{no redex}, i.e., no instance $\sigma(\lhsr)$ of a left-hand side $\lhsr$
of a rule $\lhsr\to\rhsr$ in $\cR$ occurs in $u$.
Dealing with \gtrs{s} $\cR$,
this simple test is not enough:
there can be terms including instances $\sigma(\lhsr)$ of the left-hand side $\lhsr$
of a \emph{conditional} rule
$\lhsr\to\rhsr\IF\gencond$
(they are called \emph{preredexes} \cite{BerKlo_ConditionalRewriteRulesConfluenceAndTermination_JCSS86}), which are also \emph{irreducible} if
$\deductionInThOf{\rewtheoryOf{\cR}}{\sigma(\gencond)}$ does \emph{not} hold.
This is implemented by using a model generator (e.g.,
\AGES{} \cite{GutLuc_AutomaticGenerationOfLogicalModelsWithAGES_CADE19}
or
\MaceFour{} \cite{McCune_Prove9andMace4_Unpublished10}) to (try to) find a
\emph{countermodel} of $(\forall\vec{x})\sigma(\gencond)$ in $\rewtheoryOf{\cR}$,
where $\vec{x}$ are the variables occurring in $\sigma(\gencond)$.
We proceed  in two steps:
\begin{enumerate}
\item If $u$ contains no (pre)redex, then it is irreducible.
\item If $u$ contains a preredex $\sigma(\lhsr)$  of a conditional rule $\lhsr\to\rhsr\IF A_1,\ldots,A_n$ and the theory
\[\rewtheoryOf{\cR}\cup\{\neg(\forall\vec{x})~\sigma(A_1)\wedge\cdots\wedge\sigma(A_n)\},\]
for
$\vec{x}$ the variables occurring in $\sigma(A_1),\cdots,\sigma(A_n)$ is
\emph{satisfiable}, then $u$ is irreducible.
\end{enumerate}
\end{remark}
\begin{example}[Continuing Example \ref{COPS409Ex_KBproc}]
\label{COPS409Ex_Joinability}
The  proper \CCP{} (\ref{COPS409Ex_pCCP}), i.e.,
$\langle \fS{h}(\fS{s}(x)), \fS{g}(\fS{s}(x))\rangle \IF  \fS{c}(\fS{g}(x)) \cto   \fS{c}(\fS{a}),\fS{c}(\fS{h}(x)) \cto \fS{c}(\fS{a})$,
is joinable, as we have
$\fS{h}(\fS{s}(x)) \rew{(\ref{COPS409Ex_rule3})} x$ and
$\fS{g}(\fS{s}(x)) \rew{(\ref{COPS409Ex_rule2})}  x$.
Thus, 
$\ProcJoinability(\JOproblem{\cR',(\ref{COPS409Ex_pCCP})})=\emptyset$.
\end{example}

\medskip
\noindent
Figures \ref{FigProofTreesInTheConfluenceFramework} and \ref{FigProofTreesInTheConfluenceFramework2} display the proof trees of the confluence framework for the examples in the paper.

  \begin{figure}[!ht]
  \vspace*{2mm}
    \scalebox{0.95}{
    \begin{tabular}{ccc}
   \begin{minipage}{7cm}
    \centering
      \begin{tikzpicture}[node distance = 6cm, auto]

      \node [blockW] (tIn) {$\CRproblem{\cR}$};
      \node [below right = 0.5cm and 0.2cm] at (tIn) {$\ProcLocalConfluence$};
      \node [below = 1.1cm] at (tIn) [blockW] (t0) {$\WCRproblem{\cR}$};
       \node [below left = 1.1cm and 0.7cm] at (t0) [blockW] (t1) {$\JOproblem{\cR,(\ref{ExCOPS_387_CS_CTRS_pCCP})}$}{
                              child {node [blockW] (t11) {\noTree}}
      };

      \node [below right = 1.1cm and 0.7cm] at (t0) [blockW] (t2) {$\JOproblem{\cR,(\ref{ExCOPS_387_CS_CTRS_CVP})}$} ;

     \node [below = 0.5cm] at (t0) {$ \ProcHuetExtended $};
    \node [below right = 0.5cm and 0.2cm] at (t1) {$\ProcJoinability$};

        \draw [line] (tIn) [bend right=0] to (t0) ;
        \draw [line] (t0) [bend right=0] to (t1) ;
    \draw [line] (t0) [bend right=0] to (t2) ;
    \draw [line] (t1) [bend right=0] to (t11) ;

    \end{tikzpicture}
     \end{minipage}
      &
     \begin{minipage}{7cm}
     \centering
      \begin{tikzpicture}[node distance = 6cm, auto]

      \node [blockW] (t1) {$\CRproblem{\cR_\bot}$};
      \node [below = 1.1cm] at (t1) [blockW] (t11) {$\WCRproblem{\cR_\bot}$};

      \node [below right = 0.5cm and 0.2cm] at (t1) {$\ProcKnuthBendix$};
    \node [below right = 0.5cm and 0.2cm] at (t11) {$\ProcHuetExtended$};
     \node [below = 1.1cm] at (t11) [blockW] (t111) {\yesTree};

        \draw [line] (t1) [bend right=0] to (t11) ;
        \draw [line] (t11) [bend right=0] to (t111) ;
     \end{tikzpicture}

\end{minipage}
\\
   CTRS $\cR$ in Example~\ref{ExCOPS_387_CS_CTRS};
&
     CS-CTRS $\cR_\bot$ in Example~\ref{ExCOPS_387_CS_CTRS};\\
   see also
   Examples \ref{ExCOPS_387_CS_CTRS_pCCPs_iCCPs_CVPs},
   \ref{ExCOPS_387_CS_CTRS_PHuet}, and \ref{ExCOPS_387_CS_CTRS_pCCP_notJoinable}
 &
      see also
     Examples \ref{ExCOPS_387_CS_CTRS_Rbot_NoECCPs} and
     \ref{ExCOPS_387_CS_CTRS_PHuet_Rbot_PKnuthBendix}
 \end{tabular}  }
     \caption{Proof trees for confluence problems of \ctrs{s} and \csctrs{s} in the confluence framework}
     \label{FigProofTreesInTheConfluenceFramework}

\vspace*{10mm}
\scalebox{0.95}{
  \begin{tabular}{cc}
     \begin{minipage}{7cm}
     \centering
      \begin{tikzpicture}[node distance = 6cm, auto]

     \node [blockW] (t1) {$\CRproblem{\cR}$};
     \node [below = 1.1cm] at (t1) [blockW] (t11) {$\CRproblem{\cR'}$};
     \node [below = 1.1cm] at (t11) [blockW] (t111) {$\JOproblem{\cR',(\ref{COPS409Ex_pCCP})}$};
     \node [below = 1.1cm] at (t111) [blockW] (t1111) {\yesTree};

      \node [below right = 0.5cm and 0.2cm] at (t1) {$\ProcSimplify$};
      \node [below right = 0.5cm and 0.2cm] at (t11) {$ \ProcKnuthBendix $};
     \node [below right = 0.5cm and 0.2cm] at (t111) {$\ProcJoinability$};

    \draw [line] (t1) [bend right=0] to (t11) ;
    \draw [line] (t11) [bend right=0] to (t111) ;
    \draw [line] (t111) [bend right=0] to (t1111) ;

    \end{tikzpicture}
    \end{minipage}
  &
     \begin{minipage}{7cm}
     \centering
      \begin{tikzpicture}[node distance = 6cm, auto]

      \node [blockW] (t0) {$\CRproblem{\cR}$};
       \node [below left = 1.1cm and 0.7cm] at (t0) [blockW] (t1) {$\CRproblem{\cR_1}$}{
                              child {node [blockW] (t11) {\yesTree}}
      };

      \node [below right = 1.1cm and 0.7cm] at (t0) [blockW] (t2) {$\CRproblem{\cR_2}$} ;
      \node [below = 1.1cm] at (t2)[blockW] (t21) {$\WCRproblem{\cR^\mu_2}$} ;
     \node [below = 1.1cm] at (t21)[blockW] (t211) {$\JOproblem{\cR^\mu_2,(\ref{Ex4_IWC23_CP})}$} ;
    \node [below = 1.1cm] at (t211)[blockW] (t2111) {\yesTree} ;

      \node [below = 0.5cm] at (t0) {$\ProcModularDecomp$};
      \node [below right = 0.5cm and 0.2cm] at (t1) {$ \ProcOrthogonality $};
      \node [below right = 0.5cm and 0.2cm] at (t2) {$ \ProcCnvJoinability $};
     \node [below right = 0.5cm and 0.2cm] at (t21) {$\ProcHuetExtended$};
     \node [below right = 0.5cm and 0.2cm] at (t211) {$\ProcJoinability$};

    \draw [line] (t0) [bend right=0] to (t1) ;
    \draw [line] (t1) [bend right=0] to (t11) ;
    \draw [line] (t0) [bend right=0] to (t2) ;
    \draw [line] (t2) [bend right=0] to (t21) ;
    \draw [line] (t21) [bend right=0] to (t211) ;
    \draw [line] (t211) [bend right=0] to (t2111) ;
    \end{tikzpicture}
\end{minipage}

    \\
    CTRS $\cR$ in Example~\ref{COPS409Ex};
&
     TRS $\cR$ in Example~\ref{Ex4_IWC23};\\
see also Examples \ref{COPS409Ex_KBproc} and \ref{COPS409Ex_Joinability}
&
see also Examples \ref{Ex4_IWC23_PHuetLevy}, \ref{Ex4_IWC23_useOfP_CnvJ},
and \ref{Ex4_IWC23_joinabilityCP}
\end{tabular} }
     \caption{Proof trees for confluence problems of \ctrs{s} and \trs{s} in the confluence framework}
     \label{FigProofTreesInTheConfluenceFramework2}\vspace*{-3mm}
   \end{figure}

\section{Strategy}\label{SecStrategy}

\CONFident{} implements several processors.
Given a \gtrs{} $\cR$, the processors enumerated in Section
\ref{SecListOfProcessors} are used to build a proof tree with root
$\CRproblem{\cR}$
to hopefully conclude confluence or non-confluence of $\cR$
(Theorem \ref{ThConfFramework}).
The selection and combination of processors to generate such a proof tree is
usually encoded as a  fixed \emph{proof strategy} which is
applied to the initial confluence problem $\CRproblem{\cR}$.
Choosing the appropriate proof strategy for an input problem is not a trivial task.
The strategy must take into account that many
proof obligations involved in the implementation of processors
are \emph{undecidable}.
For instance, joinability of conditional pairs (required, e.g., by $\ProcJoinability$)
is, in general, undecidable.
Also, calls to external tools trying to check undecidable properties like termination
(as required, e.g., by $\ProcKnuthBendix$) may fail or succeed in proving the \emph{opposite} property, i.e.,
non-termination.
For these reasons, the application of processors is usually constrained by a \emph{timeout} so that after a predefined
amount of time, the strategy may try a different processor
hopefully succeeding on the considered problem, or even  \emph{backtrack} to a previous problem in the proof tree.
Typically, a thorough
experimental analysis is needed to obtain a suitable strategy.

\begin{figure}[!ht]
\begin{center}
\scalebox{0.65}{
      \begin{tikzpicture}[node distance = 6cm, auto]
       \node [blockW] (CR) {Confluence};
      \node [below = 1cm] at (CR) [blockW] (ProcExtraVars) {$\ProcExtraVars$} ;
       \node [left = 1cm] at (ProcExtraVars) [blockW] (ProcExtraVarsNO) {\noTree} ;
      \node [below = 1cm] at (ProcExtraVars) [blockW] (ProcSimplify) {$\ProcSimplify$} ;
      \node [below = 1cm] at (ProcSimplify) [blockW] (ProcModularDecomp) {$\ProcModularDecomp$} ;
       \node [below = 1cm] at (ProcModularDecomp) [blockW] (ProcOrthogonality) {$\ProcOrthogonality$} ;
       \node [left = 1cm] at (ProcOrthogonality) [blockW] (ProcOrthogonalityYES) {\yesTree} ;
      \node [below = 1cm] at (ProcOrthogonality) [blockW] (ProcStrongConfluence) {$\ProcStrongConfluence$} ;
      \node [below = 1cm] at (ProcStrongConfluence) [blockW] (ProcKnuthBendix) {$\ProcKnuthBendix$} ;
      \node [below = 1cm] at (ProcKnuthBendix) [blockW] (ProcCnvJoinability) {$\ProcCnvJoinability$} ;
      \node [below = 1cm] at (ProcCnvJoinability) [blockW] (ProcCanCSR) {$\ProcCanCSR$} ;

    \draw [line] (ProcExtraVars) [bend right=0] to (ProcExtraVarsNO) ;
    \draw [line] (ProcExtraVars) [bend right=0] to (ProcSimplify) ;
    \draw [line] (ProcSimplify) [bend right=0] to (ProcModularDecomp) ;
    \draw [line] (ProcModularDecomp) [bend right=0] to (ProcOrthogonality) ;
    \draw [line] (ProcOrthogonality) [bend right=0] to (ProcOrthogonalityYES) ;
     \path [-angle 90]   (ProcOrthogonality) edge[dashed] (ProcStrongConfluence) ;
      \path [-angle 90] (ProcStrongConfluence)  edge[dashed] (ProcKnuthBendix) ;
     \path [-angle 90]  (ProcKnuthBendix) edge[dashed] (ProcCnvJoinability) ;
      \path [-angle 90]   (ProcCnvJoinability) edge[dashed] (ProcCanCSR) ;
      \path [-angle 90, bend left=70]   (ProcCanCSR) edge[dashed] (ProcOrthogonality) ;

            \node [right = 3.5cm] at (CR) [blockW] (WCR) {Local Confluence};
      \node [below = 1cm] at (WCR) [blockW] (ProcExtraVarsWCR) {$\ProcExtraVars$} ;
       \node [left = 1cm] at (ProcExtraVarsWCR) [blockW] (ProcExtraVarsWCRNO) {\noTree} ;
      \node [below = 1cm] at (ProcExtraVarsWCR) [blockW] (ProcSimplifyWCR) {$\ProcSimplify$} ;
      \node [below = 1cm] at (ProcSimplifyWCR) [blockW] (ProcConfluenceWCR) {$\ProcConfluence$} ;
      \node [below = 1cm] at (ProcConfluenceWCR) [blockW] (ProcHuetExtendedWCR) {$\ProcHuetExtended$} ;
      \node [below right = 1.5cm and 0.2cm] at (ProcHuetExtendedWCR) {$\neq\emptyset$};
     \node [above left = 0.2cm and 0.6cm] at (ProcHuetExtendedWCR) {$\emptyset$};
       \node [left = 1cm] at (ProcHuetExtendedWCR) [blockW] (ProcHuetExtendedWCRYES) {\yesTree} ;

    \draw [line] (ProcExtraVarsWCR) [bend right=0] to (ProcExtraVarsWCRNO) ;
    \draw [line] (ProcExtraVarsWCR) [bend right=0] to (ProcSimplifyWCR) ;
    \draw [line] (ProcSimplifyWCR) [bend right=0] to (ProcConfluenceWCR) ;
    \draw [line] (ProcConfluenceWCR) [bend right=0] to (ProcHuetExtendedWCR) ;
   \path [-angle 90]  (ProcConfluenceWCR) edge[dashed] (ProcModularDecomp) ;
     \draw [line] (ProcKnuthBendix) [bend right=10] to (ProcHuetExtendedWCR) ;
     \draw [line] (ProcCnvJoinability) [bend right=10] to (ProcHuetExtendedWCR) ;
     \draw [line] (ProcCanCSR) [bend right=10] to (ProcHuetExtendedWCR) ;
     \draw [line] (ProcHuetExtendedWCR) [bend right=0] to (ProcHuetExtendedWCRYES) ;

           \node [left = 3.5cm] at (CR) [blockW] (SCR) {Strong Confluence};
      \node [below = 1cm] at (SCR) [blockW] (ProcExtraVarsSCR) {$\ProcExtraVars$} ;
       \node [left = 1cm] at (ProcExtraVarsSCR) [blockW] (ProcExtraVarsSCRNO) {\noTree} ;
      \node [below = 1cm] at (ProcExtraVarsSCR) [blockW] (ProcSimplifySCR) {$\ProcSimplify$} ;
        \node [below = 1cm] at (ProcSimplifySCR) [blockW] (ProcConfluenceSCR) {$\ProcConfluence$} ;
      \node [below = 1cm] at (ProcConfluenceSCR) [blockW] (ProcHuetExtendedSCR) {$\ProcHuetExtended$} ;
      \node [below right = 1.5cm and 0.2cm] at (ProcHuetExtendedSCR) {$\neq\emptyset$};
     \node [above left = 0.2cm and 0.6cm] at (ProcHuetExtendedSCR) {$\emptyset$};
       \node [left = 1cm] at (ProcHuetExtendedSCR) [blockW] (ProcHuetExtendedSCRYES) {\yesTree} ;

    \draw [line] (ProcExtraVarsSCR) [bend right=0] to (ProcExtraVarsSCRNO) ;
    \draw [line] (ProcExtraVarsSCR) [bend right=0] to (ProcSimplifySCR) ;
    \draw [line] (ProcSimplifySCR) [bend right=0] to (ProcConfluenceSCR) ;
    \draw [line] (ProcConfluenceSCR) [bend right=0] to (ProcHuetExtendedSCR) ;
   \path [-angle 90]  (ProcConfluenceSCR) edge[dashed] (ProcModularDecomp) ;
   \draw [line] (ProcStrongConfluence) [bend left=20] to (ProcHuetExtendedSCR) ;
     \draw [line] (ProcHuetExtendedSCR) [bend right=0] to (ProcHuetExtendedSCRYES) ;

           \node [below = 10cm] at (WCR) [blockW] (JO) {Joinability};
      \node [above = 2cm] at (JO) [blockW] (ProcJoinabilityJO) {$\ProcJoinability$} ;
       \node [below left = 1cm and 1cm] at (ProcJoinabilityJO) [blockW] (ProcJoinabilityJOYES) {\yesTree} ;
       \node [below right = 1cm and 1cm] at (ProcJoinabilityJO) [blockW] (ProcJoinabilityJONO) {\noTree} ;

     \draw [line] (ProcJoinabilityJO) [bend right=0] to (ProcJoinabilityJOYES) ;
     \draw [line] (ProcJoinabilityJO) [bend right=0] to (ProcJoinabilityJONO) ;
     \draw [line] (ProcHuetExtendedWCR) [bend right=0] to (ProcJoinabilityJO) ;

            \node [below = 10cm] at (SCR) [blockW] (SJO) {Strong Joinability};
      \node [above = 2cm] at (SJO) [blockW] (ProcJoinabilitySJO) {$\ProcJoinability$} ;
       \node [below left = 1cm and 1cm] at (ProcJoinabilitySJO) [blockW] (ProcJoinabilitySJOYES) {\yesTree} ;
       \node [below right = 1cm and 1cm] at (ProcJoinabilitySJO) [blockW] (ProcJoinabilitySJONO) {\noTree} ;

     \draw [line] (ProcJoinabilitySJO) [bend right=0] to (ProcJoinabilitySJOYES) ;
     \draw [line] (ProcJoinabilitySJO) [bend right=0] to (ProcJoinabilitySJONO) ;
     \draw [line] (ProcHuetExtendedSCR) [bend right=0] to (ProcJoinabilitySJO) ;

   \end{tikzpicture}
   }
   \end{center}\vspace*{-4mm}
   \caption{Proof strategy for (CS-)TRSs}
    \label{FigProofStrategyCSTRSs}

\vspace*{4mm}
\begin{center}
\scalebox{0.65}{
      \begin{tikzpicture}[node distance = 6cm, auto]
        \node [blockW] (CR) {Confluence};
      \node [below = 1cm] at (CR) [blockW] (ProcExtraVars) {$\ProcExtraVars$} ;
       \node [left = 1cm] at (ProcExtraVars) [blockW] (ProcExtraVarsNO) {\noTree} ;
       \node [below = 1cm] at (ProcExtraVars) [blockW] (ProcInlining) {$\ProcInlining$} ;
      \node [below = 1cm] at (ProcInlining) [blockW] (ProcSimplify) {$\ProcSimplify$} ;
      \node [below = 1cm] at (ProcSimplify) [blockW] (ProcTrUconf) {$\ProcTrUconf$} ;
      \node [below = 1cm] at (ProcTrUconf) [blockW] (ProcOrthogonality) {$\ProcOrthogonality$} ;
       \node [left = 1cm] at (ProcOrthogonality) [blockW] (ProcOrthogonalityYES) {\yesTree} ;
      \node [below = 1cm] at (ProcOrthogonality) [blockW] (ProcKnuthBendix) {$\ProcKnuthBendix$} ;
     \node [above left = 0.1cm and 0.5cm] at (ProcKnuthBendix) {$\emptyset$};
      \node [below right = -0.2cm and 1cm] at (ProcKnuthBendix) {$\neq\emptyset$};
       \node [left = 1cm] at (ProcKnuthBendix) [blockW] (ProcKnuthBendixYES) {\yesTree} ;
      \node [below = 1cm] at (ProcKnuthBendix) [blockW] (ProcLocalConfluence) {$\ProcLocalConfluence$} ;

    \draw [line] (ProcExtraVars) [bend right=0] to (ProcExtraVarsNO) ;
    \draw [line] (ProcExtraVars) [bend right=0] to (ProcInlining) ;
    \draw [line] (ProcInlining) [bend right=0] to (ProcSimplify) ;
    \draw [line] (ProcSimplify) [bend right=0] to (ProcTrUconf) ;
        \draw [line] (ProcTrUconf) [bend right=0] to (ProcOrthogonality) ;
    \draw [line] (ProcOrthogonality) [bend right=0] to (ProcOrthogonalityYES) ;
     \path [-angle 90]  (ProcOrthogonality) edge[dashed] (ProcKnuthBendix) ;
      \draw [line] (ProcKnuthBendix) [bend right=0] to (ProcKnuthBendixYES) ;
       \path [-angle 90]  (ProcKnuthBendix) edge[dashed] (ProcLocalConfluence) ;

            \node [right = 3.5cm] at (CR) [blockW] (WCR) {Local Confluence};
      \node [below = 1cm] at (WCR) [blockW] (ProcExtraVarsWCR) {$\ProcExtraVars$} ;
       \node [left = 1cm] at (ProcExtraVarsWCR) [blockW] (ProcExtraVarsWCRNO) {\noTree} ;
       \node [below = 1cm] at (ProcExtraVarsWCR) [blockW] (ProcInliningWCR) {$\ProcInlining$} ;
     \node [below = 1cm] at (ProcInliningWCR) [blockW] (ProcSimplifyWCR) {$\ProcSimplify$} ;
       \node [below = 1cm] at (ProcSimplifyWCR) [blockW] (ProcConfluenceWCR) {$\ProcConfluence$} ;
      \node [below = 1cm] at (ProcConfluenceWCR) [blockW] (ProcHuetExtendedWCR) {$\ProcHuetExtended$} ;
      \node [below right = 0.5cm and 0.2cm] at (ProcHuetExtendedWCR) {$\neq\emptyset$};
     \node [above left = 0.2cm and 0.6cm] at (ProcHuetExtendedWCR) {$\emptyset$};
       \node [left = 1cm] at (ProcHuetExtendedWCR) [blockW] (ProcHuetExtendedWCRYES) {\yesTree} ;

    \draw [line] (ProcExtraVarsWCR) [bend right=0] to (ProcExtraVarsWCRNO) ;
    \draw [line] (ProcExtraVarsWCR) [bend right=0] to (ProcInliningWCR) ;
    \draw [line] (ProcInliningWCR) [bend right=0] to (ProcSimplifyWCR) ;
    \draw [line] (ProcSimplifyWCR) [bend right=0] to (ProcConfluenceWCR) ;
    \draw [line] (ProcConfluenceWCR) [bend right=0] to (ProcHuetExtendedWCR) ;
    \path [-angle 90]  (ProcConfluenceWCR) edge[dashed] (ProcTrUconf) ;
     \draw [line] (ProcLocalConfluence) [bend right=10] to (ProcHuetExtendedWCR) ;
     \draw [line] (ProcKnuthBendix) [bend right=10] to (ProcHuetExtendedWCR) ;
     \draw [line] (ProcHuetExtendedWCR) [bend right=0] to (ProcHuetExtendedWCRYES) ;

            \node [below = 11cm] at (WCR) [blockW] (JO) {Joinability};
      \node [above = 2cm] at (JO) [blockW] (ProcJoinabilityJO) {$\ProcJoinability$} ;
       \node [below left = 1cm and 1cm] at (ProcJoinabilityJO) [blockW] (ProcJoinabilityJOYES) {\yesTree} ;
       \node [below right = 1cm and 1cm] at (ProcJoinabilityJO) [blockW] (ProcJoinabilityJONO) {\noTree} ;

     \draw [line] (ProcJoinabilityJO) [bend right=0] to (ProcJoinabilityJOYES) ;
     \draw [line] (ProcJoinabilityJO) [bend right=0] to (ProcJoinabilityJONO) ;
     \draw [line] (ProcHuetExtendedWCR) [bend right=0] to (ProcJoinabilityJO) ;
     \draw [line] (ProcKnuthBendix) [bend right=10] to (ProcJoinabilityJO) ;
    \end{tikzpicture}
   }
   \end{center}\vspace*{-4mm}
   \caption{Proof strategy for (CS-)CTRSs}
      \label{FigProofStrategyCSCTRSs}\vspace*{-5mm}
   \end{figure}

The proof strategies for (CS-)TRSs and (CS-)CTRSs are depicted in Figures \ref{FigProofStrategyCSTRSs} and \ref{FigProofStrategyCSCTRSs}, respectively.
The diagrams show how to deal with CR, WCR, SCR, JO, and SJO problems for (CS-)TRSs and (CS-)CTRSs in \CONFident, according to currently implemented techniques.
Starting from the box identifying the problem at stake, the sequence of used processors is displayed
and the order of application is indicated by means of arrows from one processor to the next one.
Some processors have two possible continuations. Some of them correspond to \emph{ending} applications
of the processor leading to finish some branch (represented as boxes enclosing \yesTree{} or \noTree) and
the decision depends on the result of the processor as indicated as a label in the branches (e.g.,
for $\ProcHuetExtended$, $\ProcOrthogonality$, $\ProcKnuthBendix$, $\ProcJoinability$).
In other cases, the continuous one is chosen first and the \emph{dashed} one is followed after a
\emph{failure} in the first option.

\begin{example}[Use of $\ProcKnuthBendix$]
For (CS-)TRSs $\cR$ (see Figure \ref{FigProofStrategyCSTRSs}), if $\cR$ is terminating, then $\ProcKnuthBendix$
produces a call to $\ProcHuetExtended(\WCRproblem{\cR})$.
Otherwise, $\ProcCnvJoinability$ is used.
For (CS-)CTRSs (see Figure \ref{FigProofStrategyCSCTRSs}),  if $\cR$ is terminating, then
\begin{enumerate}
\item If $\cR$ is a J-\ctrs{}, then, if $\pCCP(\cR)\cup\iCCP(\cR)=\emptyset$, then a positive answer \yesTree{}
is given; otherwise, if all pairs in $\pCCP(\cR)\cup\iCCP(\cR)$ are overlays, then calls to
$\ProcJoinability$ for each $\pi\in\pCCP(\cR)\cup\iCCP(\cR)$
are made.
\item If $\cR$ is not a J-\ctrs{}, or  $\pCCP(\cR)\cup\iCCP(\cR)$ is not a set of overlays, then
$\ProcKnuthBendix$
produces a call to $\ProcHuetExtended(\WCRproblem{\cR})$.
\end{enumerate}
Otherwise,
$\ProcLocalConfluence$ is called.
\end{example}
Note that, in proofs involving (CS-)CTRSs $\cR$, after applying
$\ProcInlining$ and $\ProcSimplify$, it is possible to obtain a
(CS-)TRS $\cR'$. In this case, the proof would continue in the corresponding point of
Figure \ref{FigProofStrategyCSTRSs}.

\section{Structure of \CONFident}\label{SecStructureCONFident}

\CONFident\ is written in \Haskell{} and it has more than 100 \Haskell{} files
with almost 15000 lines of code (blanks and comments not included). The
tool is used online through its web interface
in:

\begin{center}
    \url{http://zenon.dsic.upv.es/confident/}
\end{center}
Nowadays, only \emph{confluence problems} $\CRproblem{\cR}$ for $\csctrs{s}$ $\cR$
can be (explicitly) solved, i.e., the input system $\cR$ is treated as a \csctrs{} and a proof of
\emph{confluence} of $\cR$ is attempted.
Direct proofs of local or strong confluence, or (strong) joinability are not
possible yet.

\subsection{Input format}
The main input format to introduce rewrite systems in \CONFident{} is the COPS
format,\footnote{http://project-coco.uibk.ac.at/problems/}
 the
official input format of the Confluence Competition
(CoCo\footnote{http://project-coco.uibk.ac.at/}). The (older) TPDB format\footnote{https://www.lri.fr/\textasciitilde{}marche/tpdb/}
can also be used.
As an example, Figure \ref{FigExCOPS_387_CS_CTRS_COPS_TPDBencoding} displays the COPS and TPDB encoding of the
O-\csctrs{} $\cR_\bot$ in Example \ref{ExCOPS_387_CS_CTRS}.
\begin{figure}[t]\small
\begin{tabular}{cc}
\begin{minipage}{7cm}
\begin{verbatim}
(CONDITIONTYPE ORIENTED)
(REPLACEMENT-MAP
  (f )
  (g )
  (s )
)
(VAR x)
(RULES
  g(s(x)) -> g(x)
  f(g(x)) -> x | x == s(0)
)
\end{verbatim}
\end{minipage}
&
\begin{minipage}{7cm}
\begin{verbatim}
(STRATEGY CONTEXTSENSITIVE
  (f )
  (g )
  (s )
)
(VAR x)
(RULES
  g(s(x)) -> g(x)
  f(g(x)) -> x | x ->* s(0)
)
\end{verbatim}
\vspace{0.2cm}
\end{minipage}
\end{tabular}
\caption{The \csctrs{} $\cR_\bot$ in Example \ref{ExCOPS_387_CS_CTRS} in COPS (left) and TPDB (right) format}
\label{FigExCOPS_387_CS_CTRS_COPS_TPDBencoding}\vspace*{-3mm}
\end{figure}
Both formats provide a similar display of the example,
as they organize the \mbox{information} about the rewrite system in several blocks:
the  type of CTRS according to the evaluation of conditions (only in the COPS format, as the TPDB format permits
\emph{oriented} CTRSs only);
replacement map specification (within a section \verb$REPLACEMENT-MAP$ in the COPS format
and within a section \verb$STRATEGY CONTEXTSENSITIVE$ in the TPDB format);
variable declaration;
and
rule description.

\medskip
In \CONFident{}, the user can combine both formats if needed (avoiding
inconsistencies, of course), for example, using specific rewriting relations in
conditions in the COPS format or using the reserved word
\texttt{REPLACEMENT-MAP} to define a replacement map instead of the
\texttt{STRATEGY} block in the TPDB format.
\subsection{Use of external tools}
\CONFident{} uses specialized tools to solve auxiliary proof
	obligations. For instance,
	\begin{itemize}
		\item \infChecker{} is used by $\ProcSimplify$  to prove infeasibility of
		conditional rules which are then discarded from the analysis. It is also used by $\ProcHuetExtended$, $\ProcOrthogonality$, and $\ProcKnuthBendix$ to remove infeasible conditional critical pairs.
		Finally, \infChecker{} is also used to implement the tests of joinability and $\mu$-joinability of (conditional) pairs required by $\ProcJoinability$. For this purpose, \MaceFour{} \cite{McCune_Prove9andMace4_Unpublished10} and \AGES{} \cite{GutLuc_AutomaticGenerationOfLogicalModelsWithAGES_CADE19} are used from \infChecker.
		\item \muterm{} is used by  $\ProcKnuthBendix$ and $\ProcCnvJoinability$ to check termination of CTRSs
		and CS-TRSs.
		\item \ProverNine{} \cite{McCune_Prove9andMace4_Unpublished10} is used by $\ProcJoinability$ to prove joinability of conditional pairs.
		\item \FORT~\cite{RapMid_FORT2_0_IJCAR18} is used by $\ProcCanCSR$ to check
		whether a TRS is normalizing.
	\end{itemize}

Tools like \muterm{} or \infChecker{} are connected as Haskell libraries that
are directly used by \CONFident{}, the rest of the tools are used by capturing
external calls.

\subsection{Implementing the confluence framework in \Haskell}

\paragraph{Problems.}
To implement the confluence framework, we start with the notion of problem. From
the implementation point of view, a problem is just a data structure containing
all the needed information to check its associated property. We can use a data
type also to describe each kind of problem.

{\small{\begin{lstlisting}{haskell}
-- | Problem
data Problem typ p = Problem typ p

-- | Confluence Problem Type
data ConfProblem = ConfProblem

-- | Joinability Problem Type
data JoinProblem = JoinProblem
\end{lstlisting} } }

\vspace*{1.5mm}
However, because we want to give more flexibility and deal with variants of
rewrite systems homogeneously, we prefer to define type classes to describe the
common characteristics of each problem and define each variant of rewrite system
as an proper instance.

{\small{\begin{lstlisting}{haskell}
-- | Problems contains a rewrite system
class IsProblem typ problem | typ -> problem  where
    getProblemType :: Problem typ problem -> typ

-- | Confluence Problems contains a rewrite system
class IsConfProblem typ problem trs | typ -> problem trs  where
    getConfProblem :: Problem typ problem -> trs

-- | Joinability Problems have a list of critical pairs
class IsJoinProblem typ problem cp | typ -> problem cp  where
    getCPair :: Problem typ problem -> cp
\end{lstlisting} } }

\vspace*{1.5mm}
In these definitions, functional dependencies restrict a problem from being
linked to multiple variants of rewrite systems and critical pairs. An example of
instance can be CTRS:

{\small{\begin{lstlisting}{haskell}
-- A CTRS is a Problem
instance IsProblem ConfProblem CR where
  getProblemType (Problem ConfProblem _) = ConfProblem

-- A CTRS is a Confluence Problem
instance IsConfProblem ConfProblem CR CR where
  getConfProblem (Problem ConfProblem trs) = trs
\end{lstlisting} } }

\vspace*{1.5mm}
By using type classes, we can reuse methods among problems. In this case, a
joinability problem is also a confluence problem because they share common
methods.

{\small{\begin{lstlisting}{haskell}
-- A Pair (CR,CP) is a Problem
instance IsProblem JoinProblem JO where
  getProblemType (Problem JoinProblem _) = JoinProblem

-- A Conditional TRS and a (possibly conditional) Critical Pair is a Confluence Problem
instance IsConfProblem JoinProblem JO CR where
  getConfProblem (Problem JoinProblem jopair) = josystem jopair

-- A Conditional TRS and a (possible conditional) Critical Pair is a Joinability Problem
instance IsJoinProblem JoinProblem JO InCritPair where
  getCPair (Problem JoinProblem jopair) = joccpair jopair
\end{lstlisting} } }

\eject

\paragraph{Processors.}
After defining our problems, we need to find the way of defining processors in
practice. The notion of processors is defined as abstractly as possible,
allowing us to encapsulate every possible technique applied to a problem. Also, the
implementation is kept as abstract as possible:

\begin{lstlisting}{haskell}
-- | Each processor is an instance of the class 'Processor'. The
-- output problem depends of the input problem and the applied processor
class Processor tag o d | tag o -> d where
apply       :: tag -> o -> Proof d
\end{lstlisting}

Each processor has its own name (\texttt{tag}). In the implementation, when a
processor is applied, it yields a proof node. A proof contains the resulting set
of problems or the refutation together with the infomation about the obtaind
proof.

\medskip
An example of proccessor can be the following dummy processor, a successful
processor that returns a empty list of subproblems. Each processor returns
information about its proof:

\begin{lstlisting}{haskell}
-- | Processor that returns an empty list
data SuccessProcessor  = SuccessProcessor

-- | The information of the proof is just the input problem
data SuccessProcInfo problem = SuccessProcInfo { inSuccessProcInfoProblem :: problem }

-- | Returns an empty list of processors
instance Processor SuccessProcessor (Problem typ problem) (Problem typ problem) where
  apply SuccessProcessor inP
    = andP (SuccessProcInfo inP) inP []
\end{lstlisting}

\vspace*{1.5mm}
The processor $\ProcHuetExtended$ presented above accepts confluence problems as an input but
returns joinability problems instead. Thus, the instances have the following
form:

\begin{lstlisting}{haskell}
  instance Processor HuetProcessor (Problem ConfProblem problem1)
                                     (Problem JoinProblem problem2) where
  ...

  instance Processor CanCRProcessor (Problem ConfProblem problem1)
                                    (Problem ConfProblem problem2) where
  ...
\end{lstlisting}

Our framework is designed to be flexible enough to accommodate such processors.

We do not delve into implementation details of processors as they can vary
significantly for each technique.

\paragraph{Proof trees.}
A proof node consists of the result of the applied processor in the form of an
info data structure, the input problem, and a list of subproblems if the
processor returns a set of subproblems. Consequently, we require a structure
similar to the following:

\begin{lstlisting}{haskell}
-- | Proof Tree constructors
data ProofF k  =
    And      { procInfo :: SomeProcInfo, problem :: SomeProblem, subProblems :: [k] }
  | Success  { procInfo :: SomeProcInfo, problem :: SomeProblem }
  | Refuted  { procInfo :: SomeProcInfo, problem :: SomeProblem }
  | DontKnow { procInfo :: SomeProcInfo, problem :: SomeProblem }
  | Search !([k])
\end{lstlisting}

\texttt{Success} is not necessary because it represents an \texttt{And} node
with empty subproblems, but it aids us during the process of processing the
solution. \texttt{DontKnow} is used to indicate a failure in the application of
the processor, while \texttt{Search} is associated with the possibility of
returning multiple solutions in the proof tree. \texttt{Search} is closely tied
to how we define our strategies. It is important to note that \Haskell{} utilizes
lazy evaluation. As a result, we initially apply a strategy to the initial
problem, and only when we traverse it to obtain the solution, the different
processors are applied. Consequently, we may have a list of solutions, but
ultimately we select the first successful one.

\medskip
In our implementation, our proof is constructed by combining \texttt{ProofF} and
\texttt{Problem} structures. By utilizing \emph{free
monads}~\cite{Swiestra_DataTypesAlaCarte}, we can integrate the two types of
nodes within the proof construction. The pure nodes (\texttt{Problem} instances)
are represented as pure values encapsulated within the free monad, while the
impure nodes (\texttt{ProofF} instances) are represented as impure effects
within the free monad.

\begin{lstlisting}{haskell}
-- | 'Proof' is a Free Monad.
type Proof a = Free ProofF a
\end{lstlisting}

\vspace*{1.5mm}
Because in our framework we have several kinds of problems and many types of
processor answers, in the \texttt{ProofF} node we hide the type of problems
using \texttt{SomeProblem} for the problems and
\texttt{SomeProcInfo} for the processor answers. For this purpose, we use
Generalized Algebraic Data Types (GADTs):

\begin{lstlisting}{haskell}
-- | 'SomeProblem' hides the type of the Problem
data SomeProblem where
    SomeProblem :: p -> SomeProblem

-- | 'SomeProcInfo' hides the type of the Process Output
data SomeProcInfo where
    SomeProcInfo :: p -> SomeProcInfo

-- | wraps the type of the problem
someProblem :: p -> SomeProblem
someProblem = SomeProblem

-- | wraps the type of the processor output
someProcInfo :: p -> SomeProcInfo
someProcInfo = SomeProcInfo
\end{lstlisting}

\vspace*{1.5mm}
We define functions to simplify the returning of proofs by the processors:

\begin{lstlisting}{haskell}
-- | Return a success node
success :: procInfo -> problem1 -> Proof problem2
success pi p0 = Impure (Success (someProcInfo pi) (someProblem p0))

-- | Return a refuted node
refuted :: procInfo -> problem1 -> Proof problem2
refuted pi p0 = Impure (Refuted (someProcInfo pi) (someProblem p0))

-- | Return a dontKnow node
dontKnow :: procInfo -> problem1 -> Proof problem2
dontKnow pi p0 = Impure (DontKnow (someProcInfo pi) (someProblem p0))

-- | Return a list of subproblems
andP :: procInfo -> problem1 -> [problem2] -> Proof problem2
andP pi p0 [] = success pi p0
andP pi p0 pp = Impure (And (someProcInfo pi) (someProblem p0) (map return pp))
\end{lstlisting}

\paragraph{Strategies.}
Finally, to define proof strategies, we use two strategy combinators, the
\emph{sequential} combinator and the \emph{alternative} combinator. This is
implemented using the two functions\footnote{The strategies used in an
alternative combinator could be also executed in parallel.}

\begin{lstlisting}{haskell}
-- | And strategy combinator
(.&.) :: (a -> Proof b) -> (b -> Proof c) -> a -> Proof c
(.&.) = (>=>)

-- | Or strategy combinator
(.|.) :: (a -> Proof b) -> (a -> Proof b) -> a -> Proof b
(f .|. g) p = (f p) `mplus` (g p)
\end{lstlisting}

In this section, we have introduced two types of problems: \texttt{ConfProblem}
and \texttt{JoinProblem}. To enhance our strategy in practice, we can create new
problem types that encapsulate desirable properties. By doing so, we can define
specific strategies tailored to those problem types. This is particularly
relevant when dealing with termination, where we can transition from
\texttt{ConfProblem} to \texttt{TermConfProblem} and apply specialized
strategies aimed at achieving finiteness.

\medskip
The execution of our framework in the Main module can be summarized in the
following three lines:
\begin{lstlisting}{haskell}
  let proof = crewstrat timeout problem
  let sol = runProof proof
  putStr . show . pPrint $ sol
\end{lstlisting}
where \texttt{csrewstrat} is a proof strategy of processors connected with
combinators and \texttt{runProof} traverses the proof tree hopefully obtaining
the solution. In our case, we follow a Breadth-First Search strategy, but other alternatives can be
implemented.

\section{Experimental results}\label{SecExperimentalResults}

\CONFident{} participated in the
2022\footnote{\url{http://project-coco.uibk.ac.at/2022/}} and 2023\footnote{\url{http://project-coco.uibk.ac.at/2023/}} International
Confluence Competition (CoCo) in the categories \trs{}, \srs{}, \csr{},
and \ctrs.
Table \ref{TableExperimentalResultsCoCo2022} summarizes the results obtained by the tool in the different categories.
Detailed information can be obtained from competition full run web page which
can be reached, for the different categories and years, from the following URL:
\begin{center}
\url{http://cops.uibk.ac.at/results/}
\end{center}
\begin{table}[!ht]\small
\caption{\CONFident{} at the CoCo 2022 Full Run (left) and at the CoCo 2023 Full Run (right)}
\begin{center}
\begin{tabular*}{\textwidth}{ccc}
\begin{tabular}{c@{~~~}}
\hline
COPS cat.\\ \hline
TRS\\
CTRS\\
CS-TRS\\
CS-CTRS\\ \hline
\end{tabular}
&
\begin{tabular}{c@{~~~}c@{~~~}c@{~~~}c@{~~~}c}
\hline
Yes & No & Maybe & Solved & Total \\ \hline
108   & 138  & 331 & 246 & 577 \\
82  & 43 & 36 & 125 & 161   \\
25   & 87  & 64 & 112 & 176 \\
88   & 41  & 158 & 129 & 287 \\ \hline
\end{tabular}
&
\begin{tabular}{c@{~~~}c@{~~~}c@{~~~}c@{~~~}c@{~~~}c}
\hline
Yes & No & Maybe & Solved & Total \\ \hline
109   & 138  & 330 & 247 & 577 \\
81  & 38 & 42 & 119 & 161   \\
 49   & 83  & 44 & 132 & 176 \\
94   & 47  & 146 & 141 & 287 \\ \hline
\end{tabular}
\end{tabular*}
\end{center}
\label{TableExperimentalResultsCoCo2022}
\end{table}
\CONFident{} obtained good results in the CTRS and CSR categories that we describe in the following.

\begin{table}[!h]\small
\vspace{3mm}
  \caption{CTRS Category: CoCo 2022 Full Run (left) and CoCo 2023 Full Run (right)}
  \begin{center}
    \begin{tabular*}{\textwidth}{ccc}
\begin{tabular}{c}
  \hline
  Tool\\ \hline
  ACP\\
  CO3\\
  CONFident\\ \hline
\end{tabular}
&
\begin{tabular}{c@{~~~}c@{~~~}c@{~~~}c@{~~~}c}
  \hline
  Yes & No & Maybe & Solved & Total \\ \hline
  54   & 26  & 81 & 80     & 161 \\
  57   & 32  & 72 & 89 & 161   \\
  82  & 43 & 36 & 125 & 161   \\ \hline
\end{tabular}
&
\begin{tabular}{c@{~~~}c@{~~~}c@{~~~}c@{~~~}c}
  \hline
  Yes & No & Maybe & Solved & Total \\ \hline
  54   & 25  & 82 & 79     & 161 \\
  57   & 35  & 69 & 92 & 161   \\
  81  & 38 & 42 & 119 & 161   \\ \hline
\end{tabular}
\end{tabular*}
\end{center}
\label{TableExperimentalResultsCTRS}\vspace*{-3mm}
\end{table}

\paragraph{\CONFident{} at the CTRS category of CoCo 2022 and 2023.}
Table \ref{TableExperimentalResultsCTRS} summarizes the results on the COPS
collection of CTRSs used in the CoCo 2022 and 2023 \emph{full-run} which we use below to
provide an analysis of use of our processors.
There was a bug in the 2022 version of the tool which was fixed in the 2023 version.
From the second row in Table \ref{TableExperimentalResultsCTRS} we can see that a CTRS (actually COPS \#1289, as can be seen on CoCo 2023 full-run web page) was proved confluent by \CONFident{} 2022, but could \emph{not} be
handled by \CONFident{} 2023.
Also, 5 examples
(actually, COPS \#311, \#312, \#353, \#524, and \#1138) were proved \emph{non-confluent} 
by \CONFident{} 2022, but could \emph{not} be
handled by \CONFident{} 2023.
The reason is that the 2023 version of the tool could not deliver a proof within the $60''$ timeout of the competition.
It can be checked that the five examples can be proved by the online version of \CONFident{} if the default $120''$ timeout is used; furthermore, if a timeout of $60''$ is selected, the proofs are \emph{not} obtained.
By lack of time, we could not appropriately tune the 2023 version to reproduce on CoCo 2023 all good results obtained in CoCo 2022.

\medskip
Taking CoCo 2023 as a reference, we see that
\CONFident{} solves almost $74\%$ of the 161 CTRS problems in COPS mainly using the techniques described in \cite{Lucas_LocalConferenceOfConditionalAndGeneralizedTermRewritingSystems_JLAMP24,GutLucVit_ConfluenceOfConditionalRewritingInLogicForm_FSTTCS21} orchestrated within the confluence framework described here.
From the 161 examples, 28 of them (more than 17\%) were proved
by \CONFident{} only.
There were 5 problems (COPS \#286, \#311, \#312, \#353 and
\#406) that \CONFident{} could not prove but were proved by ACP.

\medskip
Still,  \CONFident{} obtained the first place in the CTRS category.

\paragraph{\CONFident{} at the CSR category of CoCo 2022 and 2023.}

With respect to \csr, a \emph{demonstration} subcategory of confluence of CSR
was hosted as part of CoCo 2022 and a competitive subcategory was hosted
at CoCo 2023, see \url{http://project-coco.uibk.ac.at/2022/categories/csr.php}.
\CONFident{} and \ConfCSR{} participated in 2023.
The results on the CoCo 2023 full-run for the \csr{} category,
consisting of 176 CS-TRSs
(COPS problems \#1161\:--\:\#1164; \#1167\:--\:\#1274; and \#1298\:--\:\#1361)
 and 287
\csctrs{s}
(COPS problems \#1362\:--\:\#1648)
are summarized in Table
\ref{TableExperimentalResultsCSR}, see
\url{http://cops.uibk.ac.at/results/?y=2023-full-run&c=CSR} for complete details,
where both CS-TRSs and CS-CTRSs are displayed in a single table.
\CONFident{} was
buggy in the $\ProcJoinability$ processor in 2022, but it was fixed and improved
in the 2023 version.

\begin{table}[h!]
  \caption{CSR Category: CoCo 2023 Full Run}
  \begin{center}
    \begin{tabular*}{\textwidth}{cc}
   \scalebox{0.83}{
   \begin{tabular}{c}
  \hline
  Tool\\ \hline
  ConfCSR\\
  CONFident 2022\\
  CONFident 2023\\ \hline
\end{tabular} }
&
 \scalebox{0.83}{
 \begin{tabular}{c@{~~~}c@{~~~}c@{~~~}c@{~~~}c@{~~~}c}
  \hline
  Yes & No & Maybe & Erroneous & Solved & Total \\ \hline
  28   & 83  & 352 & 0 & 111 & 463 \\
  113   & 124 & 222 & 4 & 237 & 463   \\
  143  & 130 & 190 & 0 & 273 & 463   \\ \hline
\end{tabular} }
\end{tabular*}
\end{center}
\label{TableExperimentalResultsCSR}
\end{table}

\vspace*{-8mm}
\begin{table}[!h]
  \caption{Confluence of \csr: \cstrs{s} (left) and \csctrs{s} (right)}
  \begin{center}
     \begin{tabular*}{\textwidth}{c@{~~}c@{~~}c}
 \scalebox{0.83}{
 \begin{tabular}{c}
  \hline
  Tool\\ \hline
  ConfCSR\\
  CONFident 2024\\ \hline
\end{tabular} }
&
 \scalebox{0.83}{
 \begin{tabular}{c@{~~~}c@{~~~}c@{~~~}c@{~~~}c@{~~~}c}
  \hline
  Yes & No & Maybe  & Solved & Total \\ \hline
  28   & 83  & 65 & 111 & 176 \\
  49  & 83 & 44  & 132 & 176   \\ \hline
\end{tabular} }
&
 \scalebox{0.83}{
 \begin{tabular}{c@{~~~}c@{~~~}c@{~~~}c@{~~~}c@{~~~}c}
  \hline
  Yes & No & Maybe  & Solved & Total \\ \hline
  --   & --  & -- & -- & -- \\
  105  & 93 & 89  & 198 & 287   \\ \hline
\end{tabular}  }
\end{tabular*}
\end{center}
\label{TableExperimentalResultsConfluenceCSR}\vspace*{-2mm}
\end{table}

Unfortunately, due to a recently discovered
bug in the implementation of the 2023 version of \CONFident, processor $\ProcTrUconf$ was used in proofs of confluence of \csctrs{s}, even though no theoretical result gives support to  this yet.
As a consequence, a number of \csctrs{s} were reported as confluent
under no solid basis.
We have fixed this problem and reproduced the full-run benchmarks with the
new 2024 version of \CONFident.
Table \ref{TableExperimentalResultsConfluenceCSR} summarizes the
obtained results.
We separately show the results for
\cstrs{s}\footnote{Complete details in
\url{http://zenon.dsic.upv.es/confident/benchmarks/fi24/cstrs/}} and
\csctrs{s}.\footnote{Complete details in
\url{http://zenon.dsic.upv.es/confident/benchmarks/fi24/csctrs/}}
Although \ConfCSR{}
does not handle \csctrs{s}, for \cstrs{s} we also include the results obtained by \ConfCSR{}
in the \emph{full-run} of the \csr{} category of CoCo 2023.

\medskip
Some observations follow:
\begin{enumerate}
\itemsep=0.9pt
\item The results on \cstrs{s} obtained by \CONFident{} 2023 \& 2024, and summarized in
Tables~\ref{TableExperimentalResultsCSR}~and ~\ref{TableExperimentalResultsConfluenceCSR}~are partly
due to the implementation of results only recently reported in \cite{Lucas_OrthogonalityOfGeneralizedTermRewritingSystems_IWC24}
which are included as part of processor $\ProcOrthogonality$
(not explicit in the description of the processor in Section \ref{SecOrthogonality},
but included in the statistical account of Table \ref{TableUseOfProcessorsInExperiments}).
\item Except for 8 cases,
(COPS \#1423, \#1424, \#1467, \#1583, \#1587, \#1588, \#1609 and  \#1610)
all positive answers of \CONFident{} 2023
for \csctrs{s} due to the buggy application of $\ProcTrUconf$ have been confirmed
by \CONFident{} 2024 by using other techniques.
This suggests that using $\ProcTrUconf$ with \csctrs{s} is probably correct;
a formal proof of this conjecture should be provided, though.
\item After removing the use of $\ProcTrUconf$ from \CONFident{} 2024  proof strategy for \csctrs{s}, we are able to handle \emph{more} \csctrs{s}: in 2023 we were able
to (dis)prove confluence of $273-132 = 141$ \csctrs{s}. In contrast, \CONFident{} 2024 solves 198
\csctrs{s}.
Since no other change has been introduced, this shows the impact of the
appropriate selection and order of use of processors in the proof strategy of
automated analysis tools.
\end{enumerate}

\begin{table}[!h]
\vspace*{-5mm}
  \caption{Use of processors in the experiments}
  \begin{center}
   \scalebox{0.85}{
\begin{tabular}{lc@{~~}c@{~~}c@{~~}c@{~~}c@{~~}c@{~~}c@{~~}c@{~~}c@{~~}c@{~~}c@{~~}c@{~~}c@{~~}c@{~~}c@{~~}c}
& \#Y
& \#N
& $\ProcExtraVars$
    & $\ProcSimplify$
    & $\ProcInlining$
    & $\ProcModularDecomp$
    & $\ProcTrUconf$
    & $\ProcOrthogonality$
    & $\ProcLocalConfluence$
    & $\ProcStrongConfluence$
    & $\ProcKnuthBendix$
     & $\ProcCanCSR$
    & $\ProcHuetExtended$
  & $\ProcJoinability$
    \\
    \hline
    TRS
  & 109
  & 138
  &  $0$
    & $37$
    & $0$
    & $20$
    & $-$
   & $46$
    & $137$
    & $37$
    & $35$
     & $16$
    & $175$
    & $518$
  \\
    CTRS
  & 81
  & 38
    &  $0$
    & $51$
    & $24$
    & $8$
    & $34$
   & $72$
    & $44$
    & $0$
    & $16$
    & $0$
    & $39$
    & $71$
  \\
  CS-TRS
  & 49
  & 83
    &  $0$
    & $0$
    & $0$
    & $-$
    & $-$
     & $49$
    & $87$
    & $0$
    & $0$
    & $-$
   & $83$
    & $84$
  \\
 CS-CTRS
  & 105
  & 93
    & $9$
     & 99
     & 40
     & $-$
    & $-$
     & 95
    & $32$
    & $0$
    & $46$
    & $-$
   & 59
    & 125
  \\
  \hline
  TOTAL
& $344$
& $349$
    &  $9$
     & $187$
     & $64$
     & $28$
    & $34$
     & $262$
    & $300$
    & $37$
    & $97$
    & $16$
   & $356$
    & $798$
\\
\hline
  \end{tabular} }
  \end{center}
  \label{TableUseOfProcessorsInExperiments}\vspace*{-6mm}
  \end{table}

\paragraph{Use of processors.}
Table \ref{TableUseOfProcessorsInExperiments} summarizes the use of processors
in the latest version of \CONFident{}.
We display the \emph{number of uses} of each processor along the whole benchmark set.
Some remarks are in order:
\begin{enumerate}
\item According to  Tables \ref{TableFinishingProcessorsInTheConfluenceFramework},\ref{TableUseOfCleansingAndModularProcessorsInTheConfluenceFramework},
and
\ref{TableUseOfConfluenceProcessorsInTheConfluenceFramework}:
\begin{itemize}
\item \emph{Proofs of confluence} eventually use the (sound)
processors
$\ProcOrthogonality$,
$\ProcHuetExtended$,
or $\ProcKnuthBendix$ on CR-problems in the proof tree.
This is because they are  able to either (i) finish proof branches with a positive result (e.g.,  $\ProcOrthogonality$ and $\ProcKnuthBendix$), or (ii) translate the original CR-problem into a WCR/SCR-problem (using $\ProcLocalConfluence$ or $\ProcStrongConfluence$) which is then handled by $\ProcHuetExtended$ and ultimately by $\ProcJoinability$  to obtain a positive result.\vspace*{1mm}

\item Proofs of \emph{non-confluence} are due to the use of (complete) processors on CR-problems like
$\ProcExtraVars$ (which finishes a proof branch with a negative result which is propagated to the root of the proof tree),
$\ProcLocalConfluence$ (which translates the CR-problem into a WCR-problem),
and then $\ProcHuetExtended$ (which returns JO-problems which can be qualified as \emph{negative} by $\ProcJoinability$).
\end{itemize}

\item Regarding the second row (CTRSs): although \CONFident{} does not implement any modularity result for CTRSs (yet),
processor $\ProcModularDecomp$ is used in some proofs with CTRSs, but only after transforming them
into TRSs by applying other processors, e.g., $\ProcSimplify$, $\ProcInlining$, or
$\ProcTrUconf$.
\item Processor $\ProcExtraVars$ is used in some proofs with \csctrs{s} (fourth row)
due to the presence of rules with extra variables in some of the COPS examples (see, e.g., \verb$COPS/1367.trs$).
\item Some processors were not used in the competition (e.g.,
$\ProcTrU$ or $\ProcCanJoinability$, possibly due to their ``low'' position in the strategy; or
$\ProcCnvJoinability$, which was not available yet). Thus, they are not
mentioned in the table.
\item Some processors are used several times to solve a single example.
For instance, in the CTRS category (see Table
\ref{TableExperimentalResultsCTRS}), the 38 proofs of non-confluence required 39
uses of $\ProcHuetExtended$ due to one of the decompositions introduced by
$\ProcModularDecomp$.
Similarly, the 81 positive proofs obtained required $72+16=88$ uses of $\ProcOrthogonality$ or
 $\ProcKnuthBendix$ due to the remaining 7 decompositions introduced by $\ProcModularDecomp$.
\end{enumerate}

\begin{table}[!h]
\vspace*{-4mm}
\caption{Use of heuristics and joinability methods within $\ProcJoinability$}
\begin{center}
\scalebox{0.85}{
\begin{tabular}{cccc|ccc|cc}
& H1 & H2 & H3 & J1 & J2 & J3 & \infChecker{} & Other\\
\hline
TRSs &
$0$ & $4$ & $0$ &
10 &
211 &
189 &
58 &
47
\\
CTRSs &
2 & 0 & 2 &
 5 &
 21  &
 0 &
 40 &
 6

\\
CS-TRSs &
0 & 4 & 0 &
 5 & 78 &  0 &
0 &
0
 \\
CS-CTRSs &
4 & 0 & 3 &
 14 & 11 & 0 &
4 &
175
\end{tabular} }
\end{center}
\label{TableUseOfHeuristicsAndJoinabilityTechniques}\vspace*{-8mm}
\end{table}

\paragraph{Use of heuristics and joinability techniques.}
Table \ref{TableUseOfHeuristicsAndJoinabilityTechniques} summarizes the use of heuristics
(H1, H2, and H3 for Proposition \ref{PropJoinabilityOfCCPsInGTRSs}, see Remark \ref{RemPropJoinabilityOfConditionalPairsBySatisfiability_nonJoinability}) and
joinability techniques (J1, J2, and J3, see Remark \ref{RemSimpleMethodsForJoinability}) in the implementation of $\ProcJoinability$
in CoCo 2022 (see Section \ref{SecJoinabilityProcessor}).
We display the \emph{number of uses} of each heuristic or technique for each category.
Besides, we report on uses of \infChecker{} to prove joinability using other results in
Proposition \ref{PropJoinabilityOfCCPs},
or \emph{other} techniques (e.g., use of
\MaceFour{} and \ProverNine).
Some remarks are in order:
\begin{enumerate}
\item First row (TRSs): H2 is used with TRSs due to the use of $\ProcCanCSR$ which transforms a confluence problem for a TRS into a confluence problem for a CS-TRS (see the first row in Table \ref{TableUseOfProcessorsInExperiments}).
This may lead to compute $\LH{\mu}$-critical pairs whose non-joinability is treated using
Proposition Proposition \ref{PropJoinabilityOfCCPsInGTRSs}
and hence the heuristics in Remark \ref{RemPropJoinabilityOfConditionalPairsBySatisfiability_nonJoinability}.
\item J3 (for strong joinability) is \emph{not} used with CTRSs, CS-TRSs, or \csctrs{s} because, for SCR-problems, $\ProcHuetExtended$ (which would eventually require it through a call to $\ProcJoinability$) is defined   for linear TRSs only, see Section
\ref{SecProcessorForLocalStrongConfluenceProblems}.
\end{enumerate}

\section{Related work}\label{SecRelatedWork}

Several tools can be used to automatically prove and disprove confluence of
TRSs and oriented CTRSs.
ACP (Automated Confluence Prover)~\cite{AotYosToy_ProvingConfluenceOfTermRewritingSystemsAutomatically_RTA09}
implements a divide-and-conquer approach in two steps: first, a decomposition
step is applied (based on modularity results);
then, direct techniques are applied to each decomposition.
ConCon~\cite{Sternagel_ConCon_IWC20} first tries to simplify rules and remove
infeasible rules from the input system, then it employs
a number of
confluence criteria for (oriented) 3-CTRSs.
ConCon uses several confluence criteria, and tree automata techniques on
reachability to prove (in)feasibility of conditional parts.
In parallel,
ConCon
tries  to show non-confluence using conditional narrowing (and some
other heuristics).
CSI~\cite{NagFelMid_CSInewEvidenceAProgressReport_CADE17} uses a set of
techniques
(Knuth and Bendix' criterion, non-confluence criterion, order-sorted
decomposition, development closed criterion, decreasing diagrams and extended
rules) and a strategy language to combine them.
CoLL-Saigawa~\cite{ShiHir_CoLLSaigawa_IWC21} is the combination of two tools:
CoLL and Saigawa. If the input system is left-linear, it uses CoLL;
otherwise,
it uses Saigawa. Among the techniques used by these tools are Hindley's
commutation theorem together with the three commutation criteria, almost
development closedness, rule labeling with weight function, Church-Rosser modulo
AC, criteria based on different kinds of
critical pairs, rule
labeling, parallel closedness based on parallel critical pairs, simultaneous
closedness, parallel-upside closedness, and outside closedness.
CO3~\cite{NisKurYanGme_CO3aCOnverterForProvingCOnfluenceOfCOnditionalTRSs_IWC15}
uses confluence (and termination) of $\cU(\cR)$ and $\cU_{opt}(\cR)$ (which is a
variant of $\cU(\cR)$, see \cite[Section
7]{SchGra_CharacterizingAndProvingOperationalTerminationOfDCTRSs_JLAP10} for a
discussion), in addition to narrowing trees for checking infeasibility of
conditional parts in proofs of confluence of CTRSs.
\Hakusan{}~\cite{ShiHir_Hakusan0_8aConfluenceTool_IWC23} is a confluence tool for left-linear TRSs that
analyzes confluence by using two compositional confluence
criteria~\cite{ShiHir_CompositionalConfluenceCriteria_FSCD22}. It returns
certified outputs for rule labeling and develops a novel reduction method.

To demonstrate confluence of CSR, apart from \CONFident, only \ConfCSR{}~\cite{SteMit_ConfCSR_IWC23} is able to prove confluence of \cstrs{s}
by using, essentially, the results in \cite{LucVitGut_ProvingAndDisprovingConfluenceOfContextSensitiveRewriting_JLAMP22}
while relying on
\AProVE{} to prove termination of CSR.

Although the previous tools also use combinations of different techniques to
obtain proofs of confluence, to the best of our knowledge, none of them
formalize such a combination as done in this paper, where the confluence framework
provides a systematic way to organize and decompose proofs of (local, strong) confluence problems by transforming them into other (possibly simpler) problems.
Also, an important difference between \CONFident{} and all previous tools is
the encoding of (non-)joinability of (conditional) pairs as combinations of
\emph{(in)feasibility} problems \cite{GutLuc_AutomaticallyProvingAndDisprovingFeasibilityConditions_IJCAR20}.
As explained in \cite{Lucas_ProvingSemanticPropertiesAsFirstOrderSatisfiability_AI19}
and further developed in \cite{Lucas_LocalConferenceOfConditionalAndGeneralizedTermRewritingSystems_JLAMP24,%
GutLucVit_ConfluenceOfConditionalRewritingInLogicForm_FSTTCS21,%
LucVitGut_ProvingAndDisprovingConfluenceOfContextSensitiveRewriting_JLAMP22},
this is possible due to the logic-based treatment of rewrite systems as first-order theories.
This is also the key for \CONFident{} to be able to smoothly handle join, oriented, and semi-equational
CTRSs, something which is also a novel feature of the tool in comparison with the aforementioned
tools.
Furthermore, as far as we know, \CONFident{} is the only tool implementing
confluence criteria depending on \emph{termination} of CTRSs
rather than more
restrictive termination properties like
quasi-decreasingness \cite[Def.\ 7.2.39]{Ohlebusch_AdvTopicsTermRew_2002},
see \cite[Section 8]{Lucas_LocalConferenceOfConditionalAndGeneralizedTermRewritingSystems_JLAMP24} for a deeper discussion and further motivation.

\section{Conclusions and future work}\label{SecConclusions}

CONFident is a tool which is able to automatically prove and disprove confluence of variants
of rewrite systems: TRSs, CS-TRSs, and Join, Oriented, and Semi-Equational CTRSs
and \csctrs{s}.
The proofs are obtained by combining different techniques in what we call \emph{Confluence Framework}, which we have introduced here,
where different variants of \emph{confluence} and \emph{joinability} problems are handled (simplified, transformed, etc.)\ by
means of \emph{processors}, which can be freely combined to obtain the proofs which are displayed as a
\emph{proof tree} (Definition \ref{DefConfProofTree} and
Theorem \ref{ThConfFramework}).
In this paper, we have introduced 16 processors which can be used in the confluence framework.
Some of them just integrate existing results by other authors as processors to be used in the confluence framework (see the description of the processors and the corresponding references to the used methods and supporting results in Section \ref{SecListOfProcessors}).
We believe that other results not considered here, possibly involving
\begin{itemize}
\item
other approaches like the \emph{compositional approach} developed in \cite{ShiHir_CompositionalConfluenceCriteria_FSCD22} which reformulates and somehow \emph{unifies}
a number of well-known confluence results for TRSs from an abstract compositional presentation, and then obtains improvements from them by applying the technique developed by the authors; or
\item
other kind of confluence problems (e.g., \emph{ground} confluence, i.e., confluence of rewriting restricted to \emph{ground} terms only); or
\item
rewriting forms (e.g., \emph{rewriting modulo} a set of equations \cite{PetSti_CompleteSetsOfReductionsForSomeEquationalTheories_JACM81},
\emph{nominal rewriting} \cite{FerGab_NominalRewriting_IC07}, etc.), see, e.g., \cite{DurMes_OnTheChurchRosserAndCoherencePropertiesOfConditionalOrderSortedRewriteTheories_JLAP12,%
Jouannaud_ConfluentAndCoherentEquationalTermRewritingSystemsApplicationToProofsInAbstractDataTypes_CAAP83,%
JouKir_CompletionOfASetOfRulesModuloASetOfEquations_SIAMJC86,%
Kikuchi_GroundConfluenceAndStrongCommutationModuloAlphaEquivalenceInNominalRewriting_ICTAC22,%
KikAo_ConfluenceAndCommutationForNominalRewritingSystemsWithAtomVariables_LOPSTR20,%
Lucas_ConfluenceOfConditionalRewritingModulo_CSL24}, etc.,
\end{itemize}
could also be
ported into the confluence framework.
This may involve the definition of new problems: for instance $\GCRproblem{\cR}$ for ground confluence of a \gtrs{} $\cR$ and $\ECRproblem{\cR}$ for confluence modulo of an \emph{Equational} \gtrs{} $\cR$ (which, essentially, are \gtrs{s} extended with a set of conditional equations, see \cite[Definition 10]{Lucas_ConfluenceOfConditionalRewritingModulo_CSL24}).
Whether new kind of problems have been considered or not, we think that
available or forthcoming techniques for proving such problems positive or negative
can be integrated in the confluence framework by means of
appropriate processors
so that they can be smoothly combined with other processors implementing different techniques to obtain a more powerful framework to prove confluence.

\medskip
We have shown how to implement the confluence framework using the declarative
programming language \Haskell. \CONFident{} has proved to be a powerful tool for
proving confluence of CTRSs.
This is witnessed by the first position obtained in the CTRS category  of
CoCo 2021, 2022 and 2023
and also by the good results obtained in the CSR category at CoCo 2023,
where both CS-TRSs and CS-CTRSs were considered.

\paragraph{Future work.}
First of all, \CONFident{} should be extended to deal with arbitrary \gtrs{s}.
Providing direct access to proofs of \emph{local} and \emph{strong} confluence
(and even joinability of terms) would also be interesting by using the specific problems introduced here and their treatment within the confluence framework.
Also, some features of the discussed processors are not implemented yet (e.g., modularity of local and strong confluence of TRSs) and we should also include existing results for CTRSs (see, e.g.,
\cite[Table 8.4]{Ohlebusch_OnTheModularityOfConfluenceOfConstructorSharingTRS_CAAP94}).
Besides, as far as we know, modularity of confluence of \csr, either for CS-TRSs or for
\csctrs{s} has not been investigated yet.
Also considering confluence of order-sorted rewrite systems \cite{GogMes_OrderSortedAlgebraI_TCS92}
and equational rewrite systems \cite{Jouannaud_ConfluentAndCoherentEquationalTermRewritingSystemsApplicationToProofsInAbstractDataTypes_CAAP83} could be an
interesting task for future  work.

\paragraph{Acknowledgements.}
We thank the referees for their detailed comments and suggestions that led to many improvements in the paper.
We thank Vincent van Oostrom for providing early access to \cite{Oostrom_TheZpropertyForLeftLinearTermRewritingViaConvectiveContextSensitiveCompleteness_IWC23}
and also for his comments on \cite{GutVitLuc_ConfluenceFrameworkProvingConfluenceWithCONFident_LOPSTR22}.
Finally, we thank the organizers of the Confluence Competition and maintainers of the COPS database for their (successful) effort to promote the research on confluence and automation of program analysis.

\medskip
Ra\'ul Guti\'errez was partially supported by the grant PID2021-122830OB-C44 funded by \linebreak CIN/AEI/10.13039/501100011033 and by the project "ERDF - a way to build Europe".

Salvador Lucas and Miguel V\'itores were
partially supported by grant PID2021-122830OB-C42 funded by MCIN/AEI/10.13039/501100011033
and by ``ERDF - a way of making Europe''. Salvador Lucas was also supported by the grant CIPROM/2022/6 funded by Generalitat Valenciana.

\appendix

\section{Proof of Proposition \ref{PropInliningGTRSs}}
\label{ApProofPropInliningGTRSs}

\noindent
\textbf{Proposition \ref{PropInliningGTRSs}}
\emph{
Let $\cR=(\Symbols,\SPredicates,\mu,H,R)$ be a \gtrs{},
$\alpha\in R$, $i$,
and $x$ as in Definition \ref{DefInlining}. Let $s$ and $t$ be terms.
\begin{enumerate}
\itemsep=0.95pt
\item  If $s\rew{\cR}t$,
then
$s\rews{\cR_{\alpha,x,i}}t$.
\item If $s\rew{\cR_{\alpha,x,i}}t$,
then
$s\rew{\cR}t$.
\end{enumerate}
}

\begin{proof}
As for the \emph{first claim} ($s\rew{\cR}t$ implies
$s\rews{\cR_{\alpha,x,i}}t$), by using the version for \gtrs{s} of \cite[Theorem 10]{Lucas_LocalConferenceOfConditionalAndGeneralizedTermRewritingSystems_JLAMP24} for CTRSs (see
\cite[Section 7.5]{Lucas_LocalConferenceOfConditionalAndGeneralizedTermRewritingSystems_JLAMP24}), $s\rew{\cR}t$
iff $\deductionInThOf{\rewtheoryOf{\cR}}{\grounding{s}\rew{}\grounding{t}}$.
We proceed by induction on the length $n\geq 0$
of a   \emph{Hilbert-like} proof of $\grounding{s}\rew{}\grounding{t}$
from $\rewtheoryOf{\cR}$.
Base: $n=0$.
With $\grounding{s}\rew{}\grounding{t}$
we are using $(\RuleHornClause)_\alpha$ for some unconditional rule
$\alpha:\lhsr\to\rhsr\in R$ such that
$\grounding{s}=\sigma(\lhsr)$ and $\grounding{t}=\sigma(\rhsr)$.
Since all unconditional rules in $\cR$ are also in $\cR_{\alpha,x,i}$,
by also using  (\RuleReflexivity) and (\RuleCompatibility)
we conclude $\grounding{s}\rews{\cR_{\alpha,x,i}}\grounding{t}$,
as required.
Induction step: $n>0$.
We have two possibilities:
\begin{enumerate}
\item A sentence $(\RulePropagation)_{f,i}$ is used for some $f\in\Symbols$ and
$i\in\mu(f)$. Then,
$\grounding{s}=f(s_1,\ldots,s_i,\ldots,s_k)$ and $\grounding{t}=f(s_1,\ldots,t_i,\ldots,s_k)$
for some (ground) terms $s_1,\ldots,s_k$ and $t_i$ and $\rewtheoryOf{\cR}\vdash s_i\rew{}t_i$ has been proved in less than $n$ steps.
By the induction hypothesis, $\rewtheoryOf{\cR_{\alpha,x,i}}\vdash s_i\rew{}t_i$ holds as well.
Thus, by using again $(\RulePropagation)_{f,i}$
(which is part of $\rewtheoryOf{\cR_{\alpha,x,i}}$), we conclude
$\grounding{s}\rew{\cR_{\alpha,x,i}}\grounding{t}$.
\item A sentence $(\RuleHornClause)_{\beta}$ has been used for some $\beta\in R$.
If $\beta\neq\alpha$, then $\beta$ is also a rule of $\cR_{\alpha,x,i}$ and by using the induction hypothesis, we conclude $\grounding{s}\rew{\cR_{\alpha,x,i}}\grounding{t}$.
Otherwise, $\beta=\alpha$ and $\grounding{s}=\varsigma(\lhsr)$ and
$\grounding{t}=\varsigma(\rhsr)$ for some substitution $\varsigma$,
and also
(by the induction hypothesis and because $\alpha$ is an O-rule), for all $1\leq j\leq n$,
\begin{eqnarray}
\varsigma(s_j)\rews{\cR_{\alpha,x,i}}\varsigma(t_j)\label{LblInductionHypothesis}
\end{eqnarray}
In $\cR_{\alpha,x,i}$ for $\sigma=\{x\mapsto s_i\}$,
we have $\alpha_{x,i}:\lhsr\to\sigma(\rhsr)\IF \sigma(s_1)\cto t_1, \cdots, \sigma(s_{i-1})\cto
t_{i-1}, \sigma(s_{i+1})\cto t_{i+1}, \cdots,\sigma(s_n)\cto t_n$
instead of $\alpha$.
Note that $\alpha_{x,i}$ contains $n-1$ conditions.
For all $1\leq j\leq n$,
(a) if $x\notin\Var(s_j)$, then $\sigma(s_j)=s_j$ and we have
$\varsigma(s_j)=\varsigma(\sigma(s_j))\rews{}\varsigma(t_j)$; and
(b) if $x\in\Var(s_j)$, then we can write $s_j=C_j[x]_{P_j}$ for some context $C_j$
and set $P_j$ of positions of $x$ in $s_j$ and therefore,
$\sigma(s_j)=C_j[s_i]_{P_j}$.
Since,
by (\ref{LblInductionHypothesis}), we have $\varsigma(s_i)\rews{\cR_{\alpha,x,i}}\varsigma(t_i)=\varsigma(x)$ and also (by (\ref{LblVariableRestrictionInlining}))
$x$ is not frozen in $s_j$ (and it does not occur in $s_i$),
we have
\[\begin{array}{rcl}
\varsigma(\sigma(s_j)) & = & \varsigma(C_j[s_i]_{P_j})\\
 & = & \varsigma(C_j)[\varsigma(s_i)]_{P_j}\\
& \rews{\cR_{\alpha,x,i}} &
\varsigma(C_j)[\varsigma(x)]_{P_j}\\
& = & \varsigma(C_j[x]_{P_j})\\
& =& \varsigma(s_j)\\
& \rews{\cR_{\alpha,x,i}} & \varsigma(t_j)
\end{array}
\]
where the last rewriting sequence is proved
by using (\ref{LblInductionHypothesis}) again,
now for $\varsigma(s_j)$.
Therefore, all conditions $\sigma(s_j)\rew{}t_j$, $1\leq j\leq n$, $j\neq i$
in $\alpha,x,i$ are satisfied by $\varsigma$.
Thus, since $x\notin\Var(\lhsr)$,
we have
$\grounding{s}=\varsigma(\lhsr)\rew{\cR_{\alpha,x,i}}\varsigma(\rhsr)$.
Again,
(a) if $x\notin\Var(\rhsr)$, then $\sigma(\rhsr)=\rhsr$ and we have
$\varsigma(\rhsr)=\grounding{t}$, as required.
Otherwise, if (b) $x\in\Var(\rhsr)$, then $\rhsr=C[x]_P$ for some context $C$ and
set $P$ of positions of $x$ in $\rhsr$, and $\sigma(\rhsr)=C[s_i]_P$.
Since (by (\ref{LblVariableRestrictionInlining})) $x$ is \emph{not} frozen in $\rhsr$ (i.e., $P\subseteq\Pos^\mu(\rhsr)$),
we have
\[\varsigma(\sigma(r))=\varsigma(C[s_i]_{P})
=\varsigma(C)[\varsigma(s_i)]_{P}
\rews{\cR_{\alpha,x,i}}
\varsigma(C)[\varsigma(x)]_{P}
=\varsigma(C[x]_{P})
=\varsigma(\rhsr)=\grounding{t}
\]
Therefore, $\grounding{s}\rews{\cR_{\alpha,x,i}}\grounding{t}$, as required.
\end{enumerate}
Regarding the \emph{second claim} ($s\rew{\cR_{\alpha,x,i}}t$ implies
$s\rew{\cR}t$), we proceed by induction on length $n\geq 0$
of a   \emph{Hilbert-like} proof of $\grounding{s}\rew{}\grounding{t}$
from $\rewtheoryOf{\cR_{\alpha,x,i}}$.
Base: $n=0$.
With $\grounding{s}\rew{}\grounding{t}$
we are using $(\RuleHornClause)_\alpha$ for some unconditional rule
$\alpha:\elhsr\to\erhsr$ such that
$\grounding{s}=\varsigma(\elhsr)$ and $\grounding{t}=\varsigma(\erhsr)$.
The only possibility for an unconditional rule in $\cR_{\alpha,x,i}$ of \emph{not} 
being in $\cR$
is $\alpha_{x,i}$, provided that the conditional part of $\alpha$
consists of a single condition $s\cto x$.
In this case, $\alpha_{x,i}$ is $\lhsr\to\rhsr[s]_P$ where $P$ are the positions of $x$ in $\rhsr$, i.e., $r=r[x]_P$.
Hence, $\grounding{s}=\varsigma(\lhsr)$ and $\grounding{t}=\varsigma(\rhsr[s]_P)$.
Since $x\notin\Var(\lhsr,s)$, we can assume that $\varsigma$ does not instantiate $x$.
Define a substitution $\varsigma'$ as follows:
$\varsigma'(x)=\varsigma(s)$ and
for all $y\in\Variables-\{x\}$, $\varsigma'(y)=\varsigma(y)$.
Note that $\varsigma'(\lhsr)=\varsigma(\lhsr)$ and $\varsigma'(s)=\varsigma(s)$.
Then, the only condition $s\cto x$ in the conditional part of $\alpha$ is trivially satisfied by $\varsigma'$.
Therefore, since $x$ is not frozen in $\rhsr$ (by (\ref{LblVariableRestrictionInlining})), we have
\[\grounding{s}=\varsigma(\lhsr)=\varsigma'(\lhsr)
\rew{\alpha}\varsigma'(\rhsr)
=\varsigma'(\rhsr[x]_P)
=\varsigma'(\rhsr)[\varsigma(s)]_P=\varsigma(\rhsr)[\varsigma(s)]_P=\varsigma(\rhsr[s]_P)
=\grounding{t}
\]
as desired, as the conditional part $s\cto x$ of $\alpha$ is satisfied by $\varsigma'$:
$\varsigma'(s)=\varsigma(s)=\varsigma'(x)$.
Induction step: $n>0$.
We have two possibilities:
\begin{enumerate}
\item A sentence $(\RulePropagation)_{f,i}$ is used for some $f\in\Symbols$ and
$i\in\mu(f)$ . Analogous to the corresponding case of the first claim.
\item A sentence $(\RuleHornClause)_{\beta}$ has been used for some conditional rule $\beta$.
If $\beta\neq\alpha_{x,i}$, then $\beta\in R$; hence $\grounding{s}\rew{\cR}\grounding{t}$.
Otherwise, if $\beta=\alpha_{x,i}$, then $\grounding{s}=\varsigma(\lhsr)\rew{\cR_{\alpha,x,i}}\varsigma(\rhsr[s]_P)=\grounding{t}$ for some substitution $\varsigma$
such that $\varsigma(s_j[s]_{P_j})\rews{\cR_{\alpha,x,i}}\varsigma(t_j)$ holds for all $1\leq j\leq n$, $j\neq i$.
By the induction hypothesis (and repeatedly using $(\RuleCompatibility)$ and $(\RuleReflexivity)$),
$\varsigma(s_j[s]_{P_j})\rews{\cR}\varsigma(t_j)$ holds for all $1\leq j\leq n$, $j\neq i$.
Define $\varsigma'$ as follows: $\varsigma'(y)=\varsigma(y)$ if $y\neq x$, and
$\varsigma'(x)=\varsigma(s)$.
For all $1\leq j\leq n$, $j\neq i$,
\[\varsigma'(s_j)
=\varsigma'(s_j[x]_{P_j})
=\varsigma(s_j)[\varsigma(x)]_{P_j}
=\varsigma(s_j)[s]_{P_j}
=\varsigma(s_j[s]_{P_j})
\rews{\cR}\varsigma(t_j)=\varsigma'(t_j)\]
because $x\notin\Var(s)$ and $x\notin\Var(t_j)$.
We also have
\[\varsigma'(s_i)=\varsigma'(s)=\varsigma(s)=\varsigma'(x)\]
Therefore,
we finally have
\[\grounding{s}=\varsigma'(\lhsr)\rew{\cR}\varsigma'(\rhsr)=\varsigma'(\rhsr[x]_P)=\varsigma(\rhsr)[\varsigma(s)]_p=\varsigma(\rhsr[s]_P)=\grounding{t}\]
as required.
\end{enumerate}

\vspace*{-7mm}
\end{proof}


\begin{thebibliography}{10}
\providecommand{\url}[1]{\texttt{#1}}
\providecommand{\urlprefix}{URL }
\expandafter\ifx\csname urlstyle\endcsname\relax
  \providecommand{\doi}[1]{doi:\discretionary{}{}{}#1}\else
  \providecommand{\doi}{doi:\discretionary{}{}{}\begingroup
  \urlstyle{rm}\Url}\fi
\providecommand{\eprint}[2][]{\url{#2}}

\bibitem{Lucas_LocalConferenceOfConditionalAndGeneralizedTermRewritingSystems_JLAMP24}
Lucas S.
\newblock Local confluence of conditional and generalized term rewriting
  systems.
\newblock \emph{Journal of Logical and Algebraic Methods in Programming}, 2024.
\newblock \textbf{136}:paper 100926, pages 1--23.
\newblock \doi{10.1016/j.jlamp.2023.100926}.

\bibitem{Lucas_ContextSensitiveRewriting_CSUR20}
Lucas S.
\newblock {Context-sensitive Rewriting}.
\newblock \emph{{ACM} Comput. Surv.}, 2020.
\newblock \textbf{53}(4):78:1--78:36.
\newblock \doi{10.1145/3397677}.

\bibitem{BaaNip_TermRewAllThat_1998}
Baader F, Nipkow T.
\newblock {Term Rewriting and All That}.
\newblock Cambridge University Press, 1998.
\newblock ISBN 978-0-521-45520-6.
\newblock \doi{10.1017/CBO9781139172752}.

\bibitem{Kaplan_ConditionalRewriteSystems_TCS84}
Kaplan S.
\newblock {Conditional Rewrite Rules}.
\newblock \emph{Theor. Comput. Sci.}, 1984.
\newblock \textbf{33}:175--193.
 doi:10.1016/0304- 3975(84)90087-2.

\bibitem{Lucas_ApplicationsAndExtensionsOfContextSensitiveRewriting_JLAMP21}
Lucas S.
\newblock Applications and extensions of context-sensitive rewriting.
\newblock \emph{Journal of Logical and Algebraic Methods in Programming}, 2021.
\newblock \textbf{121}:100680.
\newblock \doi{10.1016/j.jlamp.2021.100680}.

\bibitem{Gmeiner_TransformationalApproachesForConditionalTermRewriteSystems_PhD14}
Gmeiner KS.
\newblock {Transformational Approaches for Conditional Term Rewrite Systems}.
\newblock Ph.D. thesis, Faculty of Informatics, Vienna University of
  Technology, 2014.

\bibitem{Ohlebusch_AdvTopicsTermRew_2002}
Ohlebusch E.
\newblock {Advanced Topics in Term Rewriting}.
\newblock Springer, 2002.
\newblock ISBN 978-0-387-95250-5.
\newblock \doi{10.1007/978-1-4757-3661-8}.

\bibitem{GieThiSch_DPFramework_LPAR04}
Giesl J, Thiemann R, Schneider{-}Kamp P.
\newblock The Dependency Pair Framework: Combining Techniques for Automated
  Termination Proofs.
\newblock In: Baader F, Voronkov A (eds.), Logic for Programming, Artificial
  Intelligence, and Reasoning, 11th International Conference, {LPAR} 2004,
  Proceedings, volume 3452 of \emph{Lecture Notes in Computer Science}.
  Springer, 2004 pp. 301--331.
\newblock \doi{10.1007/978-3-540-32275-7_21}.

\bibitem{GieThiSchFal_MechImpDPs_JAR06}
Giesl J, Thiemann R, Schneider{-}Kamp P, Falke S.
\newblock Mechanizing and Improving Dependency Pairs.
\newblock \emph{J. Autom. Reasoning}, 2006.
\newblock \textbf{37}(3):155--203.
\newblock \doi{10.1007/s10817-006-9057-7}.

\bibitem{BlaGenHer_DependencyPairsTerminationInDependentTypeTheoryModuloRewriting_FSCD19}
Blanqui F, Genestier G, Hermant O.
\newblock Dependency Pairs Termination in Dependent Type Theory Modulo
  Rewriting.
\newblock In: Geuvers H (ed.), 4th International Conference on Formal
  Structures for Computation and Deduction, {FSCD} 2019, June 24-30, 2019,
  Dortmund, Germany, volume 131 of \emph{Leibniz International Proceedings in
  Informatics (LIPIcs)}. Schloss Dagstuhl - Leibniz-Zentrum f{\"{u}}r
  Informatik, 2019 pp. 9:1--9:21.
\newblock \doi{10.4230/LIPICS.FSCD.2019.9}.

\bibitem{FalKap_DependencyPairsForRewritingWithNonFreeConstructors_CADE07}
Falke S, Kapur D.
\newblock Dependency Pairs for Rewriting with Non-free Constructors.
\newblock In: Pfenning F (ed.), Automated Deduction - CADE-21, 21st
  International Conference on Automated Deduction, Bremen, Germany, July 17-20,
  2007, Proceedings, volume 4603 of \emph{Lecture Notes in Computer Science}.
  Springer, 2007 pp. 426--442.
\newblock \doi{10.1007/978-3-540-73595-3\_32}.
\newblock \urlprefix\url{https://doi.org/10.1007/978-3-540-73595-3\_32}.

\bibitem{FalKap_DPsForRewritingWithBuiltInNumbersAndSemanticDataStructures_RTA08}
Falke S, Kapur D.
\newblock Dependency Pairs for Rewriting with Built-In Numbers and Semantic
  Data Structures.
\newblock In: Voronkov A (ed.), Rewriting Techniques and Applications, 19th
  International Conference, {RTA} 2008, Hagenberg, Austria, July 15-17, 2008,
  Proceedings, volume 5117 of \emph{Lecture Notes in Computer Science}.
  Springer, 2008 pp. 94--109.
\newblock \doi{10.1007/978-3-540-70590-1_7}.

\bibitem{GutLuc_ProvingTerminationInTheContextSensitiveDependencyPairFramework_WRLA10}
Guti{\'{e}}rrez R, Lucas S.
\newblock Proving Termination in the Context-Sensitive Dependency Pair
  Framework.
\newblock In: {\"{O}}lveczky PC (ed.), Rewriting Logic and Its Applications -
  8th International Workshop, {WRLA} 2010, Held as a Satellite Event of {ETAPS}
  2010, Revised Selected Papers, volume 6381 of \emph{Lecture Notes in Computer
  Science}. Springer, 2010 pp. 18--34.
\newblock \doi{10.1007/978-3-642-16310-4_3}.

\bibitem{LucMesGut_The2DDependencyPairFrameworkForConditionalRewriteSystemsPartI_JCSS18}
Lucas S, Meseguer J, Guti{\'{e}}rrez R.
\newblock {The 2D Dependency Pair Framework for conditional rewrite systems.
  Part {I:} Definition and basic processors}.
\newblock \emph{J. Comput. Syst. Sci.}, 2018.
\newblock \textbf{96}:74--106.
\newblock \doi{10.1016/j.jcss.2018.04.002}.

\bibitem{Lucas_TerminationOfGeneralizedTermRewritingSystems_FSCD24}
Lucas S.
\newblock {Termination of Generalized Term Rewriting Systems}.
\newblock In: Rehof J (ed.), 9th International Conference on Formal Structures
  for Computation and Deduction (FSCD 2024), volume 299 of \emph{Leibniz
  International Proceedings in Informatics (LIPIcs)}. Schloss Dagstuhl --
  Leibniz-Zentrum f{\"u}r Informatik, Dagstuhl, Germany, 2024 pp. 29:1--29:18.
\newblock \doi{10.4230/LIPIcs.FSCD.2024.29}.

\bibitem{GutLuc_AutomaticallyProvingAndDisprovingFeasibilityConditions_IJCAR20}
Guti{\'{e}}rrez R, Lucas S.
\newblock {Automatically Proving and Disproving Feasibility Conditions}.
\newblock In: Peltier N, Sofronie{-}Stokkermans V (eds.), {Automated Reasoning
  - 10th International Joint Conference, {IJCAR} 2020, Proceedings, Part {II}},
  volume 12167 of \emph{Lecture Notes in Computer Science}. Springer, 2020 pp.
  416--435.
\newblock \doi{10.1007/978-3-030-51054-1_27}.

\bibitem{Kaplan_FairConditionalTermRewritingSystemsUnificationTerminationAndConfluence_ADT84}
Kaplan S.
\newblock Fair Conditional Term Rewriting Systems: Unification, Termination,
  and Confluence.
\newblock In: Kreowski H (ed.), Recent Trends in Data Type Specification, 3rd
  Workshop on Theory and Applications of Abstract Data Types, 1984, Selected
  Papers, volume 116 of \emph{Informatik-Fachberichte}. Springer, 1984 pp.
  136--155.
\newblock \doi{10.1007/978-3-662-09691-8_11}.

\bibitem{GutVitLuc_ConfluenceFrameworkProvingConfluenceWithCONFident_LOPSTR22}
Guti{\'{e}}rrez R, V{\'{\i}}tores M, Lucas S.
\newblock {Confluence Framework: Proving Confluence with CONFident}.
\newblock In: Villanueva A (ed.), Logic-Based Program Synthesis and
  Transformation - 32nd International Symposium, {LOPSTR} 2022, Proceedings,
  volume 13474 of \emph{Lecture Notes in Computer Science}. Springer, 2022 pp.
  24--43.
\newblock \doi{10.1007/978-3-031-16767-6_2}.

\bibitem{Terese_TermRewritingSystems_2003}
Terese.
\newblock Term rewriting systems, volume~55 of \emph{Cambridge tracts in
  theoretical computer science}.
\newblock Cambridge University Press, 2003.
\newblock ISBN 978-0-521-39115-3.

\bibitem{Fitting_FirstOrderLogicAndAutomatedTheoremProving_1997}
Fitting M.
\newblock First-Order Logic and Automated Theorem Proving, Second Edition.
\newblock Graduate Texts in Computer Science. Springer, 1996.
\newblock ISBN 978-1-4612-7515-2.
\newblock \doi{10.1007/978-1-4612-2360-3}.

\bibitem{Mendelson_IntroductionToMathematicalLogicFourtEd_1997}
Mendelson E.
\newblock Introduction to mathematical logic {(4.} ed.).
\newblock Chapman and Hall, 1997.
\newblock ISBN 978-0-412-080830-7.

\bibitem{GutLucVit_ConfluenceOfConditionalRewritingInLogicForm_FSTTCS21}
Guti\'{e}rrez R, Lucas S, V{\'\i}tores M.
\newblock {Confluence of Conditional Rewriting in Logic Form}.
\newblock In: Boja\'{n}czy M, Chekuri C (eds.), 41st IARCS Annual Conference on
  Foundations of Software Technology and Theoretical Computer Science (FSTTCS
  2021), volume 213 of \emph{Leibniz International Proceedings in Informatics
  (LIPIcs)}.
\newblock ISBN 978-3-95977-215-0, 2021 pp. 44:1--44:18.
\newblock \doi{10.4230/LIPIcs.FSTTCS.2021.44}.

\bibitem{AveLor_ConditionalRewriteSystemsWithExtraVariablesAndDeterministicLogicPrograms_LPAR94}
Avenhaus J, Lor{\'{\i}}a{-}S{\'{a}}enz C.
\newblock {On Conditional Rewrite Systems with Extra Variables and
  Deterministic Logic Programs}.
\newblock In: Pfenning F (ed.), Logic Programming and Automated Reasoning, 5th
  International Conference, LPAR'94, Proceedings, volume 822 of \emph{Lecture
  Notes in Computer Science}. Springer, 1994 pp. 215--229.
\newblock \doi{10.1007/3-540-58216-9_40}.

\bibitem{MidHam_CompletenessResultsForBasicNarrowing_AAECC94}
Middeldorp A, Hamoen E.
\newblock Completeness Results for Basic Narrowing.
\newblock \emph{Appl. Algebra Eng. Commun. Comput.}, 1994.
\newblock \textbf{5}:213--253.
\newblock \doi{10.1007/BF01190830}.

\bibitem{LucVitGut_ProvingAndDisprovingConfluenceOfContextSensitiveRewriting_JLAMP22}
Lucas S, V\'{\i}tores M, Guti\'errez R.
\newblock Proving and disproving confluence of context-sensitive rewriting.
\newblock \emph{Journal of Logical and Algebraic Methods in Programming}, 2022.
\newblock \textbf{126}:100749.
\newblock \doi{10.1016/j.jlamp.2022.100749}.

\bibitem{Sternagel_ReliableConfluenceAnalysisOfConditionalTermRewriteSystems_PhD17}
Sternagel T.
\newblock {Reliable Confluence Analysis of Conditional Term Rewrite Systems}.
\newblock Ph.D. thesis, Faculty of Mathematics, Computer Science and Physics,
  University of Innsbruck, 2017.

\bibitem{Gramlich_ModularityInTermRewritingRevisited_TCS12}
Gramlich B.
\newblock Modularity in term rewriting revisited.
\newblock \emph{Theor. Comput. Sci.}, 2012.
\newblock \textbf{464}:3--19.
\newblock \doi{10.1016/J.TCS.2012.09.008}.

\bibitem{Middeldorp_ModularPropertiesOfConditionalTermRewritingSystems_IC93}
Middeldorp A.
\newblock Modular Properties of Conditional Term Rewriting Systems.
\newblock \emph{Inf. Comput.}, 1993.
\newblock \textbf{104}(1):110--158.
\newblock \doi{10.1006/inco.1993.1027}.

\bibitem{Toyama_OnTheCRPropDirSumTRS_JACM87}
Toyama Y.
\newblock On the Church-Rosser property for the direct sum of term rewriting
  systems.
\newblock \emph{J. {ACM}}, 1987.
\newblock \textbf{34}(1):128--143.
\newblock \doi{10.1145/7531.7534}.

\bibitem{KurOhu_ModularityOfSimpleTerminationOfTermRewritingSystemsWithSharedConstructors_TCS92}
Kurihara M, Ohuchi A.
\newblock Modularity of Simple Termination of Term Rewriting Systems with
  Shared Constructors.
\newblock \emph{Theor. Comput. Sci.}, 1992.
\newblock \textbf{103}(2):273--282.
\newblock \doi{10.1016/0304-3975(92)90015-8}.

\bibitem{MidToy_CompletenessOfCombinationsOfConstructorSystems_JSC93}
Middeldorp A, Toyama Y.
\newblock Completeness of Combinations of Constructor Systems.
\newblock \emph{J. Symb. Comput.}, 1993.
\newblock \textbf{15}(3):331--348.
\newblock \doi{10.1006/JSCO.1993.1024}.

\bibitem{Ohlebusch_ModularPropertiesOfComposableTermRewritingSystems_JSC95}
Ohlebusch E.
\newblock Modular Properties of Composable Term Rewriting Systems.
\newblock \emph{J. Symb. Comput.}, 1995.
\newblock \textbf{20}(1):1--41.
\newblock \doi{10.1006/JSCO.1995.1036}.

\bibitem{Ohlebusch_OnTheModularityOfConfluenceOfConstructorSharingTRS_CAAP94}
Ohlebusch E.
\newblock On the Modularity of Confluence of Constructor-Sharing Term Rewriting
  Systems.
\newblock In: Tison S (ed.), Trees in Algebra and Programming - CAAP'94, 19th
  International Colloquium, Edinburgh, UK, April 11-13, 1994, Proceedings,
  volume 787 of \emph{Lecture Notes in Computer Science}. Springer, 1994 pp.
  261--275.
\newblock \doi{10.1007/BFb0017487}.

\bibitem{Ohlebusch_ModularPropertiesOfComposableTermRewritingSystems_PhD94}
Ohlebusch E.
\newblock Modular Properties Of Composable Term Rewriting Systems.
\newblock Ph.D. thesis, FUniversit\"at Bielefeld, 1994.

\bibitem{RaoVui_OperationalAndSemanticEquivalenceBetweenRecursivePrograms_JACM80}
Raoult J, Vuillemin J.
\newblock Operational and Semantic Equivalence Between Recursive Programs.
\newblock \emph{J. {ACM}}, 1980.
\newblock \textbf{27}(4):772--796.
\newblock \doi{10.1145/322217.322229}.

\bibitem{Middeldorp_ModularPropertiesOfTermRewritingSystems_PhD90}
Middeldorp A.
\newblock Modular Properties Of Term Rewriting Systems.
\newblock Ph.D. thesis, Vrije Universiteit te Amsterdam, 1990.

\bibitem{Huet_ConfReduc_JACM80}
Huet GP.
\newblock {Confluent Reductions: Abstract Properties and Applications to Term
  Rewriting Systems}.
\newblock \emph{J. {ACM}}, 1980.
\newblock \textbf{27}(4):797--821.
\newblock \doi{10.1145/322217.322230}.

\bibitem{Oostrom_TheZpropertyForLeftLinearTermRewritingViaConvectiveContextSensitiveCompleteness_IWC23}
van Oostrom V.
\newblock {The Z-property for left-linear term rewriting via convective
  context-sensitive completeness}.
\newblock In: 12th International Workshop on Confluence, {IWC} 2023. 2023 pp.
  38--43.

\bibitem{Marchiori_UnravelingsAndUltraProperties_ALP96}
Marchiori M.
\newblock Unravelings and Ultra-properties.
\newblock In: Hanus M, Rodr{\'{\i}}guez{-}Artalejo M (eds.), Algebraic and
  Logic Programming, 5th International Conference, ALP'96, Aachen, Germany,
  September 25-27, 1996, Proceedings, volume 1139 of \emph{Lecture Notes in
  Computer Science}. Springer, 1996 pp. 107--121.
\newblock \doi{10.1007/3-540-61735-3_7}.

\bibitem{GmeGraSch_SoundnessCondUnravelingCTRS_RTA12}
Gmeiner K, Gramlich B, Schernhammer F.
\newblock On Soundness Conditions for Unraveling Deterministic Conditional
  Rewrite Systems.
\newblock In: Tiwari A (ed.), 23rd International Conference on Rewriting
  Techniques and Applications (RTA'12) , {RTA} 2012, volume~15 of \emph{Leibniz
  International Proceedings in Informatics (LIPIcs)}. 2012 pp. 193--208.
\newblock \doi{10.4230/LIPIcs.RTA.2012.193}.

\bibitem{NisSakSak_SoundnessOfUnravelingsForCTRSsViaUltraProperties_LMCS12}
Nishida N, Sakai M, Sakabe T.
\newblock Soundness of Unravelings for Conditional Term Rewriting Systems via
  Ultra-Properties Related to Linearity.
\newblock \emph{Log. Methods Comput. Sci.}, 2012.
\newblock \textbf{8}(3).
\newblock \doi{10.2168/LMCS-8(3:4)2012}.

\bibitem{GmeNisGra_ProvingConfluenceOfConditionalTermRewritingSystemsViaUnravelings_IWC2013}
Gmeiner K, Nishida N, Gramlich B.
\newblock {Proving Confluence of Conditional Term Rewriting Systems via
  Unravelings}.
\newblock In: 2nd International Workshop on Confluence, {IWC} 2013. 2013 pp.
  35--39.

\bibitem{SuzMidIda_LevelConfluenceOfConditionalRewriteSystemsWithExtraVariablesInRightHandSides_RTA95}
Suzuki T, Middeldorp A, Ida T.
\newblock {Level-Confluence of Conditional Rewrite Systems with Extra Variables
  in Right-Hand Sides}.
\newblock In: Hsiang J (ed.), Rewriting Techniques and Applications, 6th
  International Conference, RTA-95, Proceedings, volume 914 of \emph{Lecture
  Notes in Computer Science}. Springer, 1995 pp. 179--193.
\newblock \doi{10.1007/3-540-59200-8_56}.

\bibitem{GraLuc_GenNewmanLemma_RTA06}
Gramlich B, Lucas S.
\newblock {Generalizing Newman's Lemma for Left-Linear Rewrite Systems}.
\newblock In: Pfenning F (ed.), Term Rewriting and Applications, 17th
  International Conference, {RTA} 2006, Proceedings, volume 4098 of
  \emph{Lecture Notes in Computer Science}. Springer, 2006 pp. 66--80.
\newblock \doi{10.1007/11805618_6}.

\bibitem{KnuBen_SimpWordProbUnivAlg_CPAA70}
Knuth DE, Bendix PE.
\newblock {Simple Word Problems in Universal Algebra}.
\newblock In: Leech J (ed.), Computational Problems in Abstract Algebra.
  Pergamon Press, 1970 pp. 263--297.

\bibitem{DerOkaSiv_ConfluenceOfConditionalRewriteSystems_CTRS87}
Dershowitz N, Okada M, Sivakumar G.
\newblock {Confluence of Conditional Rewrite Systems}.
\newblock In: Kaplan S, Jouannaud J (eds.), Conditional Term Rewriting Systems,
  1st International Workshop, Proceedings, volume 308 of \emph{Lecture Notes in
  Computer Science}. Springer, 1987 pp. 31--44.
\newblock \doi{10.1007/3-540-19242-5_3}.

\bibitem{BerKlo_ConditionalRewriteRulesConfluenceAndTermination_JCSS86}
Bergstra JA, Klop JW.
\newblock Conditional Rewrite Rules: Confluence and Termination.
\newblock \emph{J. Comput. Syst. Sci.}, 1986.
\newblock \textbf{32}(3):323--362.
\newblock \doi{10.1016/0022-0000(86)90033-4}.

\bibitem{GutLuc_AutomaticGenerationOfLogicalModelsWithAGES_CADE19}
Guti{\'{e}}rrez R, Lucas S.
\newblock {Automatic Generation of Logical Models with {AGES}}.
\newblock In: Fontaine P (ed.), Automated Deduction - {CADE} 27 - 27th
  International Conference on Automated Deduction, Proceedings, volume 11716 of
  \emph{Lecture Notes in Computer Science}. Springer, 2019 pp. 287--299.
\newblock \doi{10.1007/978-3-030-29436-6_17}.

\bibitem{McCune_Prove9andMace4_Unpublished10}
McCune W.
\newblock {Prover9 \& Mace4.}
\newblock Technical report, University of New Mexico, {2005--2010}.
\newblock \urlprefix\url{http://www.cs.unm.edu/~mccune/prover9/}.

\bibitem{RapMid_FORT2_0_IJCAR18}
Rapp F, Middeldorp A.
\newblock {FORT} 2.0.
\newblock In: Galmiche D, Schulz S, Sebastiani R (eds.), Automated Reasoning -
  9th International Joint Conference, {IJCAR} 2018, Held as Part of the
  Federated Logic Conference, FloC 2018, Proceedings, volume 10900 of
  \emph{Lecture Notes in Computer Science}. Springer, 2018 pp. 81--88.
\newblock \doi{10.1007/978-3-319-94205-6_6}.

\bibitem{Swiestra_DataTypesAlaCarte}
Swierstra W.
\newblock {Data types \`{a} la carte}.
\newblock \emph{J. Funct. Program.}, 2008.
\newblock \textbf{18}(4):423--436.
\newblock \doi{10.1017/S0956796808006758}.

\bibitem{Lucas_OrthogonalityOfGeneralizedTermRewritingSystems_IWC24}
Lucas S.
\newblock {Orthogonality of Generalized Term Rewriting Systems} Submitted to
  the 13th International Workshop on Confluence, {IWC} 2024.

\bibitem{AotYosToy_ProvingConfluenceOfTermRewritingSystemsAutomatically_RTA09}
Aoto T, Yoshida J, Toyama Y.
\newblock Proving Confluence of Term Rewriting Systems Automatically.
\newblock In: Treinen R (ed.), Rewriting Techniques and Applications, 20th
  International Conference, {RTA} 2009, Proceedings, volume 5595 of
  \emph{Lecture Notes in Computer Science}. Springer, 2009 pp. 93--102.
\newblock \doi{10.1007/978-3-642-02348-4\_7}.

\bibitem{Sternagel_ConCon_IWC20}
Sternagel C.
\newblock {CoCo 2020 Participant: ConCon 1.10}.
\newblock In: Ayala-Rinc\'on M, Mirmram S (eds.), Proc.\ of the 9th
  International Workshop on Confluence, IWC'20. 2020 p.~65.

\bibitem{NagFelMid_CSInewEvidenceAProgressReport_CADE17}
Nagele J, Felgenhauer B, Middeldorp A.
\newblock {CSI:} New Evidence - {A} Progress Report.
\newblock In: de~Moura L (ed.), Automated Deduction - {CADE} 26 - 26th
  International Conference on Automated Deduction, Proceedings, volume 10395 of
  \emph{Lecture Notes in Computer Science}. Springer, 2017 pp. 385--397.
\newblock \doi{10.1007/978-3-319-63046-5_24}.

\bibitem{ShiHir_CoLLSaigawa_IWC21}
Shintani K, Hirokawa N.
\newblock {CoLL-Saigawa 1.6: A Joint Confluence Tool}.
\newblock In: Mirmram S, Rocha C (eds.), Proc.\ of the 10th International
  Workshop on Confluence, IWC'21. 2021 p.~51.

\bibitem{NisKurYanGme_CO3aCOnverterForProvingCOnfluenceOfCOnditionalTRSs_IWC15}
Nishida N, Kuroda T, Yanagisawa M, Gmeiner K.
\newblock {CO3: a COnverter for proving COnfluence of COnditional TRSs}.
\newblock In: 4th International Workshop on Confluence, {IWC} 2015. 2015 p.~42.

\bibitem{SchGra_CharacterizingAndProvingOperationalTerminationOfDCTRSs_JLAP10}
Schernhammer F, Gramlich B.
\newblock Characterizing and proving operational termination of deterministic
  conditional term rewriting systems.
\newblock \emph{J. Log. Algebraic Methods Program.}, 2010.
\newblock \textbf{79}(7):659--688.
\newblock \doi{10.1016/j.jlap.2009.08.001}.

\bibitem{ShiHir_Hakusan0_8aConfluenceTool_IWC23}
Shintani K, Hirokawa N.
\newblock {Hakusan 0.8: A Confluence Tool}.
\newblock In: 12th International Workshop on Confluence, {IWC} 2023. 2023
  p.~65.

\bibitem{ShiHir_CompositionalConfluenceCriteria_FSCD22}
Shintani K, Hirokawa N.
\newblock Compositional Confluence Criteria.
\newblock In: Felty AP (ed.), 7th International Conference on Formal Structures
  for Computation and Deduction, {FSCD} 2022, August 2-5, 2022, Haifa, Israel,
  volume 228 of \emph{Leibniz International Proceedings in Informatics
  (LIPIcs)}. Schloss Dagstuhl - Leibniz-Zentrum f{\"{u}}r Informatik, 2022 pp.
  28:1--28:19.
\newblock \doi{10.4230/LIPICS.FSCD.2022.28}.

\bibitem{SteMit_ConfCSR_IWC23}
Stevanovic F, Mitterwallner F.
\newblock {CoCo 2023 Participant: ConfCSR}.
\newblock In: 12th International Workshop on Confluence, {IWC} 2023. 2023
  p.~63.

\bibitem{Lucas_ProvingSemanticPropertiesAsFirstOrderSatisfiability_AI19}
Lucas S.
\newblock Proving semantic properties as first-order satisfiability.
\newblock \emph{Artif. Intell.}, 2019.
\newblock \textbf{277}.
\newblock \doi{10.1016/j.artint.2019.103174}.

\bibitem{PetSti_CompleteSetsOfReductionsForSomeEquationalTheories_JACM81}
Peterson GE, Stickel ME.
\newblock Complete Sets of Reductions for Some Equational Theories.
\newblock \emph{J. {ACM}}, 1981.
\newblock \textbf{28}(2):233--264.
\newblock \doi{10.1145/322248.322251}.

\bibitem{FerGab_NominalRewriting_IC07}
Fern{\'{a}}ndez M, Gabbay M.
\newblock Nominal rewriting.
\newblock \emph{Inf. Comput.}, 2007.
\newblock \textbf{205}(6):917--965.
\newblock \doi{10.1016/J.IC.2006.12.002}.
\newblock \urlprefix\url{https://doi.org/10.1016/j.ic.2006.12.002}.

\bibitem{DurMes_OnTheChurchRosserAndCoherencePropertiesOfConditionalOrderSortedRewriteTheories_JLAP12}
Dur{\'{a}}n F, Meseguer J.
\newblock On the Church-Rosser and coherence properties of conditional
  order-sorted rewrite theories.
\newblock \emph{J. Log. Algebraic Methods Program.}, 2012.
\newblock \textbf{81}(7-8):816--850.
\newblock \doi{10.1016/j.jlap.2011.12.004}.

\bibitem{Jouannaud_ConfluentAndCoherentEquationalTermRewritingSystemsApplicationToProofsInAbstractDataTypes_CAAP83}
Jouannaud J.
\newblock Confluent and Coherent Equational Term Rewriting Systems: Application
  to Proofs in Abstract Data Types.
\newblock In: Ausiello G, Protasi M (eds.), CAAP'83, Trees in Algebra and
  Programming, 8th Colloquium, Proceedings, volume 159 of \emph{Lecture Notes
  in Computer Science}. Springer, 1983 pp. 269--283.
\newblock \doi{10.1007/3-540-12727-5_16}.

\bibitem{JouKir_CompletionOfASetOfRulesModuloASetOfEquations_SIAMJC86}
Jouannaud J, Kirchner H.
\newblock Completion of a Set of Rules Modulo a Set of Equations.
\newblock \emph{{SIAM} J. Comput.}, 1986.
\newblock \textbf{15}(4):1155--1194.
\newblock \doi{10.1137/0215084}.

\bibitem{Kikuchi_GroundConfluenceAndStrongCommutationModuloAlphaEquivalenceInNominalRewriting_ICTAC22}
Kikuchi K.
\newblock Ground Confluence and Strong Commutation Modulo Alpha-Equivalence in
  Nominal Rewriting.
\newblock In: Seidl H, Liu Z, Pasareanu CS (eds.), Theoretical Aspects of
  Computing - {ICTAC} 2022 - 19th International Colloquium, Tbilisi, Georgia,
  September 27-29, 2022, Proceedings, volume 13572 of \emph{Lecture Notes in
  Computer Science}. Springer, 2022 pp. 255--271.
\newblock \doi{10.1007/978-3-031-17715-6_17}.

\bibitem{KikAo_ConfluenceAndCommutationForNominalRewritingSystemsWithAtomVariables_LOPSTR20}
Kikuchi K, Aoto T.
\newblock Confluence and Commutation for Nominal Rewriting Systems with
  Atom-Variables.
\newblock In: Fern{\'{a}}ndez M (ed.), Logic-Based Program Synthesis and
  Transformation - 30th International Symposium, {LOPSTR} 2020, Bologna, Italy,
  September 7-9, 2020, Proceedings, volume 12561 of \emph{Lecture Notes in
  Computer Science}. Springer, 2020 pp. 56--73.
\newblock \doi{10.1007/978-3-030-68446-4_3}.

\bibitem{Lucas_ConfluenceOfConditionalRewritingModulo_CSL24}
Lucas S.
\newblock {Confluence of Conditional Rewriting Modulo}.
\newblock In: Murano A, Silva A (eds.), 32nd EACSL Annual Conference on
  Computer Science Logic (CSL 2024), volume 288 of \emph{Leibniz International
  Proceedings in Informatics (LIPIcs)}. Schloss Dagstuhl -- Leibniz-Zentrum
  f{\"u}r Informatik, Dagstuhl, Germany.
\newblock ISBN 978-3-95977-310-2, 2024 pp. 37:1--37:21.
\newblock \doi{10.4230/LIPIcs.CSL.2024.37}.

\bibitem{GogMes_OrderSortedAlgebraI_TCS92}
Goguen JA, Meseguer J.
\newblock Order-Sorted Algebra {I:} Equational Deduction for Multiple
  Inheritance, Overloading, Exceptions and Partial Operations.
\newblock \emph{Theor. Comput. Sci.}, 1992.
\newblock \textbf{105}(2):217--273.
\newblock \doi{10.1016/0304-3975(92)90302-V}.
\end{thebibliography}
\end{document}